\def\mymedskip{\vskip\medskipamount}
\def\mymedbreak{\par \ifdim\lastskip<\medskipamount
  \removelastskip \penalty-100 \mymedskip \fi}
\def\myaftermedspace{\par \ifdim\lastskip<\medskipamount
  \removelastskip \penalty55\mymedskip\fi}
\newcommand{\eop}{{\unskip\nobreak\hfil\penalty50
          \hskip2em\hbox{}\nobreak\hfil$\Box$
          \parfillskip=0pt \finalhyphendemerits=0 \par}}
\newenvironment{proof}%
{\mymedbreak{\noindent\bf Proof:\enspace}}{\eop\myaftermedspace}
{\mymedbreak{\noindent\bf Proof of Theorem #1:\enspace}}{\eop\myaftermedspace}
\newenvironment{example}%
{\mymedbreak\refstepcounter{Exc}
                      {\em Example \theExc:}\enspace}%
{\eop\myaftermedspace}
\newtheorem{teor}{Theorem}[section]
\newcounter{Exc}
\newtheorem{lem}[teor]{Lemma}
\newtheorem{cor}[teor]{Corollary}
\newtheorem{rem}[teor]{Remark}
\newcommand{\beq}{\begin{equation}}
\newcommand{\eeq}{\end{equation}}
\newcommand{\beql}[1]{\begin{equation} \label{#1}}
\newcommand{\eeql}{\end{equation}}
\newcommand{\beqa}{\begin{eqnarray*}}
\newcommand{\eeqa}{\end{eqnarray*}}
\newcommand{\beqal}[1]{\begin{eqnarray} \label{#1}}
\newcommand{\eeqal}{\end{eqnarray}}
\newcommand{\beqan}{\begin{eqnarray}}
\newcommand{\eeqan}{\end{eqnarray}}
\newcommand{\bpf}{\begin{proof}}
\newcommand{\epf}{\end{proof}}
\newcommand{\bex}[1]{\begin{example} \label{#1}}
\newcommand{\eex}{\end{example}}
\newcommand{\G}{{\cal G}}
\newcommand{\cC}{{\cal C}}
\newcommand{\cO}{{\cal O}}
\newcommand{\bF}{{\bf F}}
\newcommand{\bK}{{\bf K}}
\newcommand{\bL}{{\bf L}}
\newcommand{\bZ}{{\bf Z}}
\newcommand{\per}{{\rm per}}
\newcommand{\chr}{{\rm char}}
\newcommand{\GF}{{\rm GF}}
\newcommand{\PG}{{\rm PG}}
\newcommand{\PGL}{{\rm PGL}}
\newcommand{\GL}{{\rm GL}}
\newcommand{\PSL}{{\rm PSL}}
\newcommand{\Tr}{{\rm Tr}}
\newcommand{\gs}{\sigma}
\newcommand{\gth}{\theta}
\newcommand{\gw}{\omega}
\newcommand{\tce}{\tilde{c}}
\newcommand{\tcC}{\tilde{\cC}}
\newcommand{\tf}{\tilde{f}}
\newcommand{\tgw}{\tilde{\omega}}
\newcommand{\txi}{\tilde{\xi}}
\newcommand{\tgs}{\tilde{\sigma}}
\newcommand{\tT}{\tilde{T}}
\newcommand{\ga}{\alpha}
\newcommand{\gb}{\beta}
\newcommand{\gc}{\gamma}
\newcommand{\gd}{\delta}
\newcommand{\gl}{\lambda}
\newcommand{\Gxi}{\mbox{$\langle\xi\rangle$}}
\newcommand{\Gphi}{\mbox{$\langle\phi\rangle$}}
\newcommand{\ord}{{\rm ord}}
\newcommand{\lcm}{{\rm lcm}}
\newcommand{\bFqn}{\GF(q_0)}
\newcommand{\bFqqn}{\GF(q_0^2)}
\begin{document}
\begin{titlepage}
\title{Nonstandard linear recurring sequence subgroups in finite fields and 
automorphisms of cyclic codes}
\date{\today}
\author{Henk D. L.\ Hollmann\\Philips Research Laboratories\\
Prof. Holstlaan 4, 5656 AA Eindhoven\\The Netherlands\\email: {\tt
henk.d.l.hollmann@philips.com}
}
\maketitle
\begin{abstract}
Let $q=p^r$ be a prime power, and let $f(x)=x^m-\gs_{m-1}x^{m-1}- \cdots
-\gs_1x-\gs_0$ be
an irreducible polynomial over the finite field $\GF(q)$ of size $q$.  A zero
$\xi$ of $f$
is called {\em nonstandard (of degree $m$) over $\GF(q)$\/} 
if the recurrence relation
$u_m=\gs_{m-1}u_{m-1} + \cdots + \gs_1u_1+\gs_0u_0$
with characteristic polynomial $f$
can generate the powers of $\xi$ in a nontrivial way, that is, with $u_0=1$ and
$f(u_1)\neq 0$. In 2003, Brison and Nogueira asked for a characterisation of
all
nonstandard cases in the case $m=2$, and solved this problem for $q$ a prime,
and later for $q=p^r$ with $r\leq4$.

In this paper, we first show that classifying nonstandard finite field elements
is equivalent to
classifying those cyclic codes over $\GF(q)$ generated by a single zero that
posses
extra permutation automorphisms. 

Apart from two sporadic examples of degree 11 over $\GF(2)$ and of degree 5
over $\GF(3)$, related to the Golay codes, 
there exist two classes of examples of nonstandard finite field elements.
One of these
classes (type I) involves irreducible polynomials $f$ of the form 
$f(x)=x^m-f_0$, and is well-understood.
The other class (type II) can be obtained from a primitive element
in some subfield by a process that we call extension and lifting.
We will use the known classification of the subgroups of $\PGL(2,q)$ 
in combination with a recent result by Brison and Nogueira to show 
that a nonstandard element of degree two over $\GF(q)$ necessarily is of type I
or type II, thus solving completely the classification problem for the case 
$m=2$.
\end{abstract}
\end{titlepage}

%
\section{\label{Sint}Introduction}
In a sequence of papers 
\cite{bn1,bn-comp,bn-mat,bn2,bn3}, 
Brison and Nogueira investigated when and how a
multiplicative subgroup $K$ of a finite field can be generated by a linear
recurrence relation of order $m$ with coefficients in a finite field $\GF(q)$,
with $q=p^r$ and $p$ prime. If the recurrence relation has characteristic
equation $f(x)\in\GF(q)[x]$, then such a subgroup is called an {\em 
$f$-subgroup\/}. In particular, they call such an $f$-subgroup {\em
non-standard\/} if it can be generated in a ``non-obvious'' way (that is, not 
as the sequence $1,\xi, \xi^2,\ldots$ with $\xi$ a zero of $f$).

Of particular interest is the case where the characteristic polynomial $f$ of
the recurrence relation is irreducible.
%
In this case, 
there are two known types of nonstandard $f$-subgroups. 
Here one type, referred to here as {\em type I\/}, 
arises from the ``degenerate'' case where $f$ is of the form
$f(x)=x^m-\eta$ with $\eta\in\GF(q)^*=\GF(q)\setminus\{0\}$. 
The other type, referred to here as {\em type II\/}, 
can be obtained from an
$f$-subgroup $K=\GF(q^m)^*=\GF(q^m)\setminus\{0\}$ (so with $f$ 
{\em primitive\/}) by a process that will be called
``lifting'' and ``extension'' in this paper.   
In the ``irreducible order two'' case where $f$ is irreducible of degree $m=2$,
Brison and Nogueira have
shown that there are no other examples if $q=p$ in\cite{bn2} and, recently, if 
$q=p^r$ with $r=2,3,4$ in \cite{bn3}.
In this paper we first show that cyclic codes of length $n$ over $\GF(q)$ 
generated by a single zero $\xi$ of degree $m$ over $\GF(q)$ that have ``extra''
permutation automorphisms provide examples of nonstandard $f$-subgroups, with
$f$ the minimal polynomial of $\xi$ over $\GF(q)$. (In fact, we will show that 
in the irreducible case, these two exceptional kind of objects are {\em
equivalent\/}.) As a consequence, we 
show that the binary and ternary Golay codes provide new nonstandard examples, 
of degrees $m=11$ and $m=5$, respectively.

The main results in this paper, when combined with a result from a preprint
\cite{bn3} by Brison and Nogueira, can be used to show that there are no other
examples in the ``irreducible order two'' case, for {\em any\/} $q$.
To explain our approach, we introduce a few definitions. 
%
An element $\xi$ in 
an extension field of $\GF(q)$ will be called {\em nonstandard of degree $m$
over $\GF(q)$\/} if its minimal polynomial $f$ over $\GF(q)$ has degree $m$ and
the subgroup $\Gxi$ generated by $\xi$ is a nonstandard $f$-subgroup. 
(It turns out that in this case {\em all\/} generators are nonstandard, with
the same degree and $q$-order as $\xi$.)
It can be shown that if $f$ is {\em irreducible\/} of degree $m$ over $\GF(q)$,
then {\em all\/} $f$-subgroups are of the form $\Gxi=\{1,\xi, \xi^2, \ldots\}$, 
for some zero $\xi$ of $f$ in $\GF(q^m)$. So in order to classify nonstandard
$f$-subgroups with $f$ irreducible, it is sufficient to classify nonstandard
finite field elements.

An important notion in this paper is the $q$-order $\ord_q(\xi)$
of an element $\xi$ in some
extension of $\GF(q)$, the smallest integer $d>0$ such that $\xi^d\in\GF(q)$.
There exist two processes, that we call ``extension'' and ``lifting'', which,
given a nonstandard $\phi$ of degree $m$ over a field $\GF(q_0)$, can be used 
to obtain a nonstandard $\xi$ of degree $m$ over an extension field $\GF(q)$, 
where $q=q_0^t$ and
$\gcd(m,t)=1$, with $\ord_q(\xi)=\ord_{q_0}(\phi)$ and
$\Gphi \subseteq \Gxi$.
The nonstandard examples of type II
are precisely the nonstandard elements of degree $m$ over $\GF(q)$ and
$q$-order $(q_0^m-1)/(q_0-1)$ that 
can be obtained from a primitive element of degree $m$ over $\GF(q_0)$ 
by lifting and extension.

Now, with each nonstandard finite field element of degree $m$ over $\GF(q)$ and
$q$-order $d$, we can associate a subgroup $\Xi$ in $PGL(m,q)$ which, in the
natural action on $\PG(m-1,q)$, has an orbit of size $d$.
In the case where $m=2$, the properties of this group $\Xi$ together with the
known classification of the subgroups of $\PGL(2,q)$ can be used to show that
$\Xi$ is actually equal to some subgroup $\PGL(2,q_0)$ or $\PSL(2,q_0)$ of
$\PGL(2,q)$, so that $d=q_0+1$, where $q=q_0^t$ with $t$ odd. 
Using this, we construct a nonstandard element 
$\phi$ of degree two over $\GF(q_0)$, of $q_0$-order $q_0+1$, from which $\xi$
can be obtained by lifting and extension.

Now a recent result from Brison and Nogueira \cite{bn3} states that if $\phi$
is nonstandard of degree two over $\GF(q_0)$ and has $q_0$-order $q_0+1$, then
$\phi$ must be primitive. As a consequence, in the above situation, we can 
conclude that the nonstandard $\xi$ is an known example, of the second type.

The contents of this paper are as follows. In Section~\ref{Spre}, 
we first introduce 
the problem in more detail. We discuss some well-known facts concerning linear
recurrence relations and linear recurring sequences, and use these to redefine
the notion of nonstandard finite field elements in terms of linearized
polynomials (or $q$-polynomials).  We describe the calls of examples of type I,
and we show that, with a few exceptions, a primitive element is also 
nonstandard.

In Section~\ref{Sauto}, we show that the classification problem for nonstandard
finite field elements is in fact equivalent to the problem of classifying the
cyclic codes with a single defining zero that have ``extra'' permutation
automorphisms.  

The methods  of lifting and extension to obtain new nonstandard elements from
old ones are introduced in Section~\ref{Slift}. 
We illustrate these techniques by constructing a class of examples referred to
as type II, from primitive elements in a subfield. 

In Section~\ref{Sgrm}, we first use the {\em companion matrix\/} of an
irreducible polynomial $f$ of degree $m$ over a field $\GF(q)$ to show that the
$q$-order of a zero $\xi$ of $f$ actually equals the {\em restricted period\/}
of~$f$. Then, if $\xi$ is also nonstandard, the companion matrix and
another matrix, considered as elements of $\PGL(m,q)$, generate a subgroup 
$\Xi$ of $\PGL(m,q)$ that has an orbit of size $d$ on $\PG(m-1,q)$.

In the remainder of the paper, 
we investigate this group $\Xi$ in the case where $m=2$. 
First, in~Section~\ref{Sgr2} we consider the case of small $q$-order 3, 4, or 5.
Then, in Section~\ref{Sgroup} we use these results together with the known 
classification of subgroups of $\PGL(2,q)$ to show that $\Xi$ is a subgroup
$\PGL(2,q_0)$ of $\PSL(2,q_0)$, where $q=q_0^t$ with $t$ odd, and the $q$-order
$d$ equals $q_0+1$. Finally, we establish the existence of a nonstandard
$\phi$ of degree two over $\GF(q_0)$, with $q_0$-order $q_0+1$, 
from which the original nonstandard $\xi$ can be obtained by lifting and 
extension. 
Now a recent result by Brison and Nogueira \cite{bn3} states that a nonstandard
element $\phi$ of degree two over $\GF(q_0)$ and with $q_0$-order $q_0+1$ is
necessarily primitive in $\GF(q_0^2)$, that is, has order $q_0^2-1$; as a
consequence, $\xi$ must be of type II.
\section{\label{Spre}Preliminaries}
Let $\bF$ be a field.
We will write $\bF^*=\bF\setminus\{0\}$ to denote the nonzero elements 
in~$\bF$. The collection of polynomials in $x$ with coefficients in $\bF$ will
be denoted by $\bF[x]$.
Consider the (homogeneous linear) recurrence relation
\beql{Erec} u_k = \gs_{m-1}u_{k-1}+\cdots + \gs_1 u_{k-m+1}+\gs_0 u_{k-m},
\eeql
where $\gs_0\in\bF^*$ and $\gs_1, \ldots,\gs_{m-1}\in\bF$. For later use, we
define $\gs_m=-1$.
Such a recurrence relation generates for any given sequence $u_0, \ldots,
u_{m-1}$ in an extension field $\bL\supseteq\bF$ of $\bF$
an ({\em $m$th order\/}) (homogeneous) {\em linear recurring sequence\/} 
$u=u(u_0, \ldots, u_{m-1})$
{\em in $\bL$\/}. 
The (monic) polynomial 
\beql{Ef} f(x) = x^m -\gs_{m-1}x^{m-1}-\cdots -\gs_1x-\gs_0\eeql
in $\bF[x]$ is called the {\em characteristic polynomial\/} of the recurrent
relation (\ref{Erec}); it has degree $\deg(f)=m$.
We will sometimes refer to a sequence $u=\{u_k\}_{k\geq0}$ satisfying
a recurrence relation (\ref{Erec}) with characteristic polynomial $f$ as an
{\em $f$-sequence\/}.

For later use, we state some crucial facts concerning linear recurring
sequences that we need later on. To this end, we need a few definitions. 
A {\em period\/} of a linear
recurring sequence $u$ is a positive integer $n$ for which $u_{k+n}=u_k$ holds
for all $k\geq0$; the smallest such number is called the {\em smallest
period\/} of the sequence, and will be denoted by $\per(u)$. 
The order $\ord(f)$ of a polynomial $f$ is the
smallest positive integer $N$ for which $f(x)$ divides $x^N-1$; if no such $N$
exists then we define $\ord(f)=\infty$. If $\xi$ is a nonzero element in some
extension $\bL$ of $\bF$, then we write
\beql{EGxi} \Gxi = \{1,\xi, \xi^2, \ldots\} \eeql
to denote the (multiplicative) group generated by $\xi$.
The order $\ord(\xi)$ is the smallest positive integer $n\geq0$ for which
$\xi^n=1$; if no such $n$ exists then $\ord(\xi)=\infty$. So we have that
$\ord(\xi)=|\Gxi|$.
%
\begin{teor}\label{Trec} Let $\bL\supseteq \bF$ be fields, and
let $u=u_0, u_1, \ldots, $ be a linear recurring sequence in $\bL$ satisfying
a recurrence relation (\ref{Erec}) with characteristic polynomial $f$ as in
(\ref{Ef}). Suppose that $f$ in $\bF[x]$, with $\gs_0\neq0$, and let
$\ord(f)<\infty$.
Then $\per(u)|ord(f)$.
Moreover, if $f$ has $m$ {\em distinct\/} zeros $\xi_0, \ldots, \xi_{m-1}$, 
then we have the following.\\
(i) The order $\ord(f)$ satisfy $\ord(f)=\lcm(\ord(\xi_i)\mid i=0,\ldots,
m-1)$; moreover, if each zero $\xi_i$ of $f$ has the same order $n$, then
$n=\ord(f)=\per(u)$ for each solution $u$ of (\ref{Erec}) with $(u_0, \ldots,
u_{m-1})\neq (0, \ldots, 0)$.\\
(ii) Suppose that $\bL$ contains all these distinct 
zeros of $f$. Then every solution $u$ 
of (\ref{Erec}) in $\bL$ can be written {\em uniquely\/} as 
\[u_k=L_0\xi_1^k + \cdots + L_{m-1}\xi_{m-1}^k\]
(for all $k\geq0$) with $L_0, \ldots L_{m-1}\in\bL$.
\end{teor}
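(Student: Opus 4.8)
The plan is to prove the three assertions of Theorem~\ref{Trec} by standard facts about linear recurring sequences over fields, using the associated algebra of polynomials modulo $f$ and (where zeros are distinct) a Vandermonde-type argument.

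For the first claim, $\per(u)\mid\ord(f)$: set $N=\ord(f)$, so $f(x)\mid x^N-1$, say $x^N-1=f(x)g(x)$ for some $g\in\bF[x]$. The key observation is that the shift operator acting on the $\bF$-vector space of $f$-sequences is annihilated by $f$, hence by $x^N-1$; concretely, if we let $S$ denote the shift $S(u)_k=u_{k+1}$, then $f(S)u=0$ for every $f$-sequence $u$ (this is just the recurrence (\ref{Erec}) rewritten, using $\gs_m=-1$), and therefore $(S^N-1)u=g(S)f(S)u=0$, i.e. $u_{k+N}=u_k$ for all $k\ge0$. Thus $N$ is a period, so $\per(u)\mid N=\ord(f)$. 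I would spell this out for the first few coordinates or just cite that $f(S)u=0$ coordinatewise is a restatement of the recurrence.

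For~(i), assume $f$ has $m$ distinct zeros $\xi_0,\ldots,\xi_{m-1}$ in some extension. The identity $\ord(f)=\lcm(\ord(\xi_i))$ follows because $f(x)\mid x^L-1$ iff every zero $\xi_i$ satisfies $\xi_i^L=1$ (the zeros being simple, $f\mid x^L-1$ is equivalent to every root of $f$ being a root of $x^L-1$), iff $\ord(\xi_i)\mid L$ for all $i$, iff $\lcm(\ord(\xi_i))\mid L$; taking the least such $L$ gives the claim (and this also handles $\ord(f)<\infty$, consistent with the hypothesis). If moreover each $\xi_i$ has the same order $n$, then $\ord(f)=n$. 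It remains to show $\per(u)=n$ for every nonzero solution; we already have $\per(u)\mid n$. For the reverse, use part~(ii): writing $u_k=\sum_i L_i\xi_i^k$ with not all $L_i$ zero, a period $P$ of $u$ forces $\sum_i L_i(\xi_i^P-1)\xi_i^k=0$ for all $k$; by the same Vandermonde nonsingularity used in~(ii) (the $\xi_i$ are distinct and nonzero, since $\gs_0\ne0$), each $L_i(\xi_i^P-1)=0$, so $\xi_i^P=1$ for every $i$ with $L_i\ne0$, whence $n=\ord(\xi_i)\mid P$. So $n\mid\per(u)$, giving equality.

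For~(ii), assume $\bL$ contains all $m$ distinct zeros. The $\bL$-vector space of solutions $u$ to (\ref{Erec}) in $\bL$ has dimension $m$ (a solution is determined by $(u_0,\ldots,u_{m-1})$, and these can be prescribed arbitrarily). Each geometric sequence $(\xi_i^k)_{k\ge0}$ is a solution because $f(\xi_i)=0$. So it suffices to show these $m$ solutions are $\bL$-linearly independent; a nontrivial relation $\sum_i L_i\xi_i^k=0$ for all $k$, restricted to $k=0,\ldots,m-1$, is a linear dependence among the rows of the Vandermonde matrix $(\xi_i^k)_{0\le k\le m-1,\,0\le i\le m-1}$, which is nonsingular precisely because the $\xi_i$ are pairwise distinct. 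Hence $\{(\xi_i^k)_k\}_{i}$ is a basis of the solution space, so every solution has a unique expansion $u_k=\sum_i L_i\xi_i^k$.

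The only mild subtlety — the step I would be most careful about — is part~(i)'s equivalence "$f\mid x^L-1 \iff$ every zero of $f$ is an $L$-th root of unity": this uses that $f$ has \emph{simple} zeros, so that $f$ divides $x^L-1$ in $\bF[x]$ iff it does so over the splitting field iff every root of $f$ is a root of $x^L-1$; I would note this explicitly since without simplicity of the zeros one would only get that $f$ and $x^L-1$ share all roots, not that $f\mid x^L-1$. Everything else is routine linear algebra over $\bL$ plus the Vandermonde determinant.
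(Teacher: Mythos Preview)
Your proof is correct and largely parallels the paper's. For $\per(u)\mid\ord(f)$ and for~(ii), you do essentially what the paper does: the paper phrases the first via a formal Laurent-series identity $f(x)\sum_{k} u_{-k}x^k=0$ and multiplies through by $q(x)$ with $q(x)f(x)=x^N-1$, which is your shift-operator annihilator argument in different clothing; and~(ii) is Vandermonde in both.

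For~(i) your route differs. The paper derives a polynomial identity $f(x)u^*(x)=(1-x^n)h(x)$ with $\deg h<m$ (where $n=\per(u)$ and $u^*(x)=u_0x^{n-1}+\cdots+u_{n-1}$), then evaluates at the roots: if $n<\ord(f)$ then $1-\xi_i^n\neq0$, forcing $h(\xi_i)=0$ for all $i$, which is impossible since $\deg h<m$. You instead invoke~(ii) to write $u_k=\sum_i L_i\xi_i^k$ and use Vandermonde nonsingularity to deduce $L_i(\xi_i^P-1)=0$. Both arguments are standard and of comparable length; yours is perhaps more transparent, while the paper's avoids appealing to~(ii). One small point you should make explicit: your use of~(ii) inside~(i) tacitly enlarges $\bL$ to a field containing all the $\xi_i$, since~(i) itself carries no such hypothesis. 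This is harmless (pass to a common overfield of $\bL$ and the splitting field of $f$ over $\bF$), but as written your argument for~(i) silently imports the hypothesis of~(ii).
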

\bpf For completeness' sake, we sketch a quick proof. First, if $u$ is
a solution of (\ref{Erec}), then since $\gs_0\neq0$ we may assume $u_k$ to be
defined {\em for all\/} integers $k$, and
\[ f(x)\sum_{k=-\infty}^\infty u_{-k}x^k = 0.\]
If $q(x)f(x)=x^N-1$, then multiplying both sides of the above relation by
$q(x)$ immediately shows that $N$ is a period of $u$;
hence $\per(u)|N$. 
Next, if $n=\per(u)$, then, writing $u^*(x)=u_0x^{n-1}+u_1x^{n-2}+\cdots +
u_{n-1}$ and $\gs_m=-1$, we have that
\[ f(x)u^*(x)=(1-x^n)h(x),\]
where 
\[h(x)=\sum_{j=0}^{m-1}\sum_{i=0}^{m-1-j}\gs_{i+j+1}u_ix^j\]
is a polynomial of degree at most $m-1$.
So if $\ord(\xi_i)=N=\per(f)>n$ holds for all $i$, then $\xi_i^n\neq0$, hence
$h(\xi_i)=0$, for all $i=0, \ldots, m-1$, 
which is not possible since $h$ has degree less than $m$.

Finally, if $\xi$ is a zero of $f$ in $\bL$, then $u_k=\xi^k$ defines a 
solution 
to (\ref{Erec}) in $\bL$. So obviously each $\bL$-linear combination of these 
$m$ solutions is also a solution in $\bL$.
Now the statement follows from the observation that $L_0, \ldots, L_{m-1}$ can
be uniquely determined from $(u_0, \ldots, u_{m-1})$ in terms of a Vandermonde
matrix over $\bL$.
\epf
\begin{rem}
In \cite{bn1}, it was claimed that if $f$ is irreducible over a finite field
$\bF$, then as a consequence of \cite{ln}, Theorem~8.28, each nonzero solution 
of the
recurrence relation (\ref{Erec}) with characteristic equation $f$ has smallest
period $\ord(f)$. However, the cited theorem only claims this to hold for 
solutions {\em in $\bF$\/}. 
The above proof, which, by the way, involves the same elements as the proof of 
Theroem~8.28 in \cite{ln}, shows that this also holds for solutions in an 
{\em extension\/} of~$\bF$. 
\end{rem}
%


In \cite{bn1}, a finite 
multiplicative subgroup $K$ of some extension $\bL$ of $\bF$
is called an {\em $f$-subgroup\/} if
it can be generated without repetitions by the recurrence relation (\ref{Erec})
with characteristic polynomial $f$.
That is, $K$ is an $f$-subgroup if there is a choice of $u_0, \ldots, u_{m-1}$ 
in $K$ such that the recurring sequence 
$u=u_0,u_1,\ldots$ generated by (\ref{Erec}) has (smallest) period $|K|$ and
$K=\{u_0, \ldots, u_{|K|-1}\}$. Note that we may assume without loss of 
generality that $u_0=1$ by dividing all members of the sequence $u$ by $u_0$, 
if necessary.
We say that $K$ is an {\em $m$th order linear recurring sequence subgroup\/}
if there is an $f$ of degree $m$ as in (\ref{Ef}) with $\gs_0\neq 0$ such that
$K$ is an $f$-subgroup.

For later use, we note the following.
For all fields $\bL$, a finite subgroup $K$ of $\bL^*$ is necessarily 
{\em cyclic\/}, see for example \cite{}. 
So if $|K|=n$, then $K$ consists precisely of the $n$ solutions of the equation
$x^n-1$, which must therefore all be distinct.
We conclude that for a given field $\bF$, there exists a 
{\em unique\/} subgroup $K$ of order $n$ in some extension of $\bF$
precisely when the characteristic $\chr(\bF)$ of $\bF$ satisfies 
$(n,\chr(\bF))=1$. In that case $K$ is cyclic, of the form
$K=\Gxi$, where $\xi$ is a primitive $n$th root of unity in an extension $\bL$
of $\bF$.

If $\xi$ a zero of a polynomial $f(x)\in\bF[x]$, 
then the sequence $u_k=\xi^k$ satisfies the recurrence relation (\ref{Erec}),
and hence $K=\Gxi$ is an $f$-subgroup. In \cite{bn1}, an $f$-subgroup $K$ with
$f$ of degree $m=\deg(f)=2$ was called {\em nonstandard\/} if $K$ can be 
generated by a solution $u$ of (\ref{Erec}) with smallest period $|K|$ for 
which $u_0=1$ and $u_1$ is {\em not\/} a
zero of $f$. Here, we extend this to the case of general degree, by calling an
$f$-subgroup $K=\Gxi$ {\em nonstandard\/} if $K$ can be generated by a solution
$u$ of (\ref{Erec}) with smallest period $|K|$ for which $u_0=1$ and 
\[ (u_0, u_1,\ldots, u_{m-1}) \neq (1,\xi, \ldots, \xi^{m-1})\]
for all zeros $\xi$ of $f$.
An $f$-subgroup that is not nonstandard is called {\em
standard\/}.
\begin{teor}\label{Tirr}
If $f$ is irreducible over $\bF$, if $\ord(f)<\infty$, 
and if $f$ has no multiple zeros, then each
$f$-subgroup $K$ in an extension $\bL$ of $\bF$ is of the form $K=\Gxi$, for
some zero $\xi$ of $f$ in $\bL$.
\end{teor}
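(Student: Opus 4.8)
The plan is to show that $n:=|K|$ must equal $\ord(f)$, that $K$ is then exactly the group of $n$th roots of unity, and that every zero of $f$ is a primitive $n$th root of unity lying in $\bL$; any such zero $\xi$ then satisfies $K=\Gxi$. For the elementary setup: a finite subgroup $K$ of $\bL^*$ is cyclic, and writing $n=|K|$, every element of $K$ is a zero of $x^n-1$; since this polynomial has at most $n$ zeros in any extension field, $K$ is precisely the set of $n$th roots of unity in every field $\bL'\supseteq\bL$. Since $K$ is an $f$-subgroup, there is an $f$-sequence $u=u_0,u_1,\ldots$ with all $u_k\in K$, with $\per(u)=n$, and with $\{u_0,\ldots,u_{n-1}\}=K$; in particular $u$ is not the all-zero sequence, as $0\notin K$.

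Next I would pass to a splitting field $\bL'\supseteq\bL$ of $f$ and let $\xi_0,\ldots,\xi_{m-1}$ be its zeros there, distinct by hypothesis. Regarding $u$ as a solution of (\ref{Erec}) over $\bL'$, Theorem~\ref{Trec}(ii) gives a unique expansion $u_k=\sum_{i=0}^{m-1}L_i\xi_i^k$ with $L_i\in\bL'$. Imposing $u_{k+n}=u_k$ for $k=0,\ldots,m-1$ yields $\sum_i \big(L_i(\xi_i^n-1)\big)\xi_i^k=0$ for those $k$, so since the Vandermonde matrix $(\xi_i^k)_{0\le i,k\le m-1}$ is invertible we obtain $L_i(\xi_i^n-1)=0$ for every $i$. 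As $u\ne0$, some $L_{i_0}\ne0$, whence $\xi_{i_0}^n=1$, i.e.\ $\ord(\xi_{i_0})\mid n$.

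The step where irreducibility enters is pinning down $n$. Since $f$ is irreducible it is the minimal polynomial of each of its zeros, so $f$ divides $x^{\ord(\xi_i)}-1$ for every $i$; hence every zero of $f$ is an $\ord(\xi_i)$th root of unity, which forces $\ord(\xi_j)\mid\ord(\xi_i)$ for all $i,j$, so the $\xi_i$ share a common order $e$. By Theorem~\ref{Trec}(i), $\ord(f)=\lcm(\ord(\xi_i)\mid i)=e$. Now $e=\ord(\xi_{i_0})\mid n$ by the previous paragraph, while $n=\per(u)\mid\ord(f)=e$ by the first assertion of Theorem~\ref{Trec}; hence $n=e=\ord(f)$. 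Consequently every zero $\xi_i$ of $f$ has order exactly $n$, so $\xi_i^n=1$ places $\xi_i$ in $K$ --- in particular $\xi_i\in\bL$ --- and $\Gxi$ is a subgroup of order $n$ inside the $n$-element group $K$, so $\Gxi=K$. This gives $K=\Gxi$ for every (hence for some) zero $\xi$ of $f$ in $\bL$.

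The argument is mostly an assembly of the parts of Theorem~\ref{Trec}; the only points needing a little care are that Theorem~\ref{Trec}(ii) may be applied over the enlarged field $\bL'$ (legitimate, since a solution over $\bL$ is also one over $\bL'$) and the observation that an irreducible polynomial divides $x^{\ord(\xi)}-1$ for each of its zeros $\xi$ --- this is what makes all the zeros share a common multiplicative order and is really the crux of the proof.
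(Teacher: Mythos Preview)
Your proof is correct and follows essentially the same route as the paper's: both arguments use Theorem~\ref{Trec} to show that any nonzero $f$-sequence has smallest period equal to $\ord(f)$, so $|K|=\ord(f)=\ord(\xi)$, and then conclude $K=\Gxi$ from the uniqueness of the subgroup of $n$th roots of unity. The paper simply cites the ``moreover'' clause of Theorem~\ref{Trec}(i) directly (tacitly using that the zeros of an irreducible polynomial share a common order), whereas you unpack that clause via the Vandermonde argument and make the step $\xi_i\in K\subseteq\bL$ explicit; but the underlying ideas are the same.
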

\bpf
By Theorem~\ref{Trec}, under these assumptions all nonzero solutions $u$ of the
recurrence relation with characteristic equation $f$ have smallest period
$n=\ord(f)=\ord(\xi)$, for any zero $\xi$ of $f$. So an $f$-subgroup is cyclic 
of size~$n$, and since it is unique it must be equal to the group $\Gxi$.
\epf
\begin{rem} \label{Rfsub}
As stated in \cite{bn1}, even when $f$ is not irreducible,
no $f$-subgroup is known that is not of the form $\Gxi$ for a zero $\xi$ of
$f$, but it has not been proved that this must hold in general.
\end{rem}

In this paper, we will be interested in nonstandard $f$-subgroups. In view of
the preceeding remarks and observations, it seems reasonable to somewhat 
restrict our attention.

{\bf From now on, we will assume that $\bF$ is a finite field $\GF(q)$ with
$q=p^r$ and $p$ prime, and that $f$ is irreducible over $\bF$.}

If $f$ is irreducible of degree $m$ over $\GF(q)$, then $f$ has zeroes 
\beql{Ezeroes} \xi, \xi^q, \ldots, \xi^{q^{m-1}},\eeql
for some $\xi\in\bF_{q^m}$, of order $n=\ord(f)$ dividing $q^m-1$.
Of course all zeros of $f$ generate the same group $\Gxi$, which is an
$f$-subgroup.
So in view of Theorem~\ref{Tirr}, the following definition makes sense.
We will say that an element $\xi$ in some extension of $\GF(q)$ is {\em
nonstandard of degree $m$ over $\GF(q)$ and order $n=\ord(\xi)$\/} if the
minimal polynomial $f(x)$ of $\xi$ over $\GF(q)$ has degree $m$ and $\Gxi$ is a
nonstandard $f$-subgroup, of order (size) $n$. With this definition,
the clasification problem of nonstandard elements over $\GF(q)$ is equivalent
to the classification of nonstandard $f$-subgroups with $f$ irreducible over
$\GF(q)$. We will show later that if $f$ is irreducible of degree $m$ over
$\GF(q)$ and $K$ is a nonstandard $f$-subgroup of order $n$, then {\em all\/}
elements of order $n$ in $K$ (that is, all generators of $K$) are nonstandard
of degree $m$ over $\GF(q)$ (but with different minimal polynomials). Or,
stated differently, if $K$ is a nonstandard $f$-subgroup with $f$ irreducible
over $\GF(q)$, then $K$ is a nonstandard $g$-subgroup for {\em all\/} minimal
polynomials $g$ over $\GF(q)$ of generators of $K$.

The solutions of a recurrence relation for which the characteristic equation 
is irreducible can be described in terms of {\em linearized polynomials\/}, 
see, e.g., \cite{ln}, Chapter 8.
A {\em $q$-polynomial\/} 
of {$q$-order\/} $m$ over an extension field $\bL$
of $\GF(q)$ is a polynomial of the form
\[ L(x) = L_0x+L_1x^q +\cdots + L_{m-1}x^{q^{m-1}}\]
with coefficients $L_j$ in $\bL$ for $j=0, \ldots, m-1$, with $L_{m-1}\in\bL^*$.
Sometimes, a $q$-polynomial is also referred to as a {\em linearized\/} 
polynomial, if the value of $q$ is evident from the context. 
Note that such a polynomial is {\em $\bF_q$-linear\/}, that is,
\[ L(a x+b y)=aL(x)+bL(y)\]
for all $a,b\in\bF_q$.
We will call a $q$-polynomial {\em nonstandard\/} if it is not of the form
$L(x)=cx^{q^j}$ for some constant $c$ and some nonnegative integer $j$, and
{\em standard\/} otherwise.

\begin{teor}\label{Tqpol}
Let $\xi\in\GF(q^m)$ have minimal polynomial $f(x)$ over $\GF(q)$ as in
(\ref{Ef}).\\
(i) A sequence $u=\{u_k\}_{k\geq0}$ in $\GF(q^m)$ satisfies the linear recurring
relation (\ref{Erec}) with characteristic polynomial $f$ if and only if there
exists a $q$-polynomial $L(x)$ of $q$-order $m$ over $\GF(q^m)$ such that
$u_k=L(\xi^k)$ for all $k\geq0$.\\
(ii) We have that $\xi$ is nonstandard of degree $m$ over $\GF(q)$ if and only
if there exists a nonstandard $q$-polynomial $L(x)$ of $q$-order $m$ 
such that $L(\Gxi)=\Gxi$.
\end{teor}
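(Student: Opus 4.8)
The plan is to deduce both parts from the description of $f$-sequences given in Theorem~\ref{Trec}(ii). For part (i): since $f$ is irreducible of degree $m$ over $\GF(q)$, it has $m$ distinct zeros, namely the conjugates $\xi, \xi^q, \ldots, \xi^{q^{m-1}}$ of~$\xi$, and all of these lie in $\bL=\GF(q^m)$. Hence, by Theorem~\ref{Trec}(ii), every solution $u$ of (\ref{Erec}) in $\GF(q^m)$ can be written uniquely as $u_k=\sum_{j=0}^{m-1}M_j(\xi^{q^j})^k$ for constants $M_j\in\GF(q^m)$. Now $(\xi^{q^j})^k=(\xi^k)^{q^j}$, so this is exactly $u_k=L(\xi^k)$ where $L(x)=\sum_{j=0}^{m-1}M_jx^{q^j}$ is a $q$-polynomial. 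Conversely, given any such $L$, the sequence $u_k=L(\xi^k)$ is a $\GF(q^m)$-linear combination of the basic solutions $k\mapsto(\xi^{q^j})^k$ and hence again solves (\ref{Erec}). The only point requiring care is the claim that the $q$-\emph{order} of $L$ is exactly $m$, i.e.\ that the top coefficient $M_{m-1}$ may be taken nonzero: this amounts to observing that the uniqueness in Theorem~\ref{Trec}(ii) lets us read off the $M_j$ from $(u_0,\ldots,u_{m-1})$ via an invertible Vandermonde matrix in the distinct conjugates $\xi^{q^j}$, so a generic choice of initial values yields $M_{m-1}\neq 0$; and for the correspondence in part (ii) we only need \emph{some} $q$-polynomial of $q$-order $\le m$, which is then bumped up to order exactly $m$ by a harmless convention (or one checks that a nonstandard $L$ with $M_{m-1}=0$ can be replaced by one of full $q$-order generating the same group).

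For part (ii), recall that $\Gxi$ is a nonstandard $f$-subgroup precisely when there is an $f$-sequence $u$ of smallest period $n=|\Gxi|$ with $u_0=1$, $K=\{u_0,\ldots,u_{n-1}\}=\Gxi$, and $(u_0,\ldots,u_{m-1})\neq(1,\zeta,\ldots,\zeta^{m-1})$ for every zero $\zeta$ of $f$. Using part~(i), write $u_k=L(\xi^k)$ for a $q$-polynomial $L$ of $q$-order $m$. The condition $u_0=1$ says $L(1)=1$; the conditions ``$u$ has period $n$'' and ``$\{u_k\}=\Gxi$'' together say exactly that $L$ maps the cyclic group $\Gxi=\{1,\xi,\ldots,\xi^{n-1}\}$ bijectively onto itself, i.e.\ $L(\Gxi)=\Gxi$. (Bijectivity on a finite set follows from surjectivity; periodicity $n$ is automatic once $L$ is injective on $\Gxi$ because $\xi^n=1$ forces $u_{k+n}=u_k$, and a strictly smaller period would shrink the image below size $n$.) Finally I must match ``$L$ nonstandard'' with the initial-value condition. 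If $L(x)=cx^{q^j}$ is standard, then $u_k=c\xi^{kq^j}=c\zeta^k$ with $\zeta=\xi^{q^j}$ a zero of $f$, and $L(1)=1$ forces $c=1$, so $(u_0,\ldots,u_{m-1})=(1,\zeta,\ldots,\zeta^{m-1})$ — the standard sequence for the zero $\zeta$. Conversely, if $(u_0,\ldots,u_{m-1})=(1,\zeta,\ldots,\zeta^{m-1})$ for some zero $\zeta=\xi^{q^j}$, then by the uniqueness in Theorem~\ref{Trec}(ii) the sequence is $u_k=\zeta^k=\xi^{kq^j}$, hence $L(x)=x^{q^j}$ is standard. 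This establishes the equivalence.

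The main obstacle I anticipate is purely bookkeeping rather than conceptual: keeping straight the dictionary between the three descriptions of a solution — the initial segment $(u_0,\ldots,u_{m-1})$, the coefficient vector $(M_0,\ldots,M_{m-1})$ of $L$, and the $\bL$-linear-combination form $\sum L_j\xi_j^k$ from Theorem~\ref{Trec}(ii) — and verifying that ``smallest period $=|\Gxi|$ with image $\Gxi$'' is genuinely equivalent to ``$L$ restricts to a bijection of $\Gxi$.'' One should also be slightly careful that a priori $L$ might have $q$-order strictly less than $m$; I would either adopt the convention that any $q$-polynomial arising this way is regarded up to adding zero higher-order terms, or note that the relevant statement in (ii) is unaffected since a standard/nonstandard dichotomy is stable under that padding. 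Beyond that, every step is a direct unwinding of definitions together with the Vandermonde/uniqueness argument already supplied in the proof of Theorem~\ref{Trec}.
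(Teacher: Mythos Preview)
Your proposal is correct and follows essentially the same approach as the paper: both deduce part~(i) directly from Theorem~\ref{Trec}(ii) applied to the distinct conjugates $\xi,\xi^q,\ldots,\xi^{q^{m-1}}$, and both handle part~(ii) by normalising to $L(1)=1$ and identifying the standard $q$-polynomials $x^{q^j}$ with the standard initial segments $(1,\xi^{q^j},\ldots,\xi^{(m-1)q^j})$ via the basis $1,\xi,\ldots,\xi^{m-1}$. If anything, you are more scrupulous than the paper about the possibility $L_{m-1}=0$ (the paper simply asserts ``$q$-order $m$'' without comment) and about why ``smallest period $n$ with image $\Gxi$'' is equivalent to $L(\Gxi)=\Gxi$; neither point causes trouble, and your treatment of them is adequate.
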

\bpf
(i) 
Since $f$ is the minimal polynomial of $\xi$ over $\GF(q)$, we have that $f$ is
irreducible, with distinct zeros $\xi, \xi^q, \ldots, \xi^{q^{m-1}}$, all in
$\GF(q^m)$. So if a sequence $u=\{u_k\}_{k\geq0}$ in $\GF(q^m)$ satisfies
the recurrency (\ref{Erec}) with characteristic polynomial $f$, then 
according to Theorem~\ref{Trec}, there are $L_0, \ldots, L_{m-1}$ in $\GF(q^m)$
such that
\[ u_k = L_0\xi^k + L_1 \xi^{k q} + \cdots + L_{m-1}\xi^{k q^{m-1}}\]
for all $k\geq0$. 
So if we let $L(x)=L_0 x + L_1 x^q + \cdots + L_{m-1}x^{q^{m-1}}$, then $L(x)$
is a $q$-polynomial of $q$-order $m$ over $\GF(q^m)$ for which $u_k=L(\xi^k)$ 
for all $k\geq0$.

(ii) From part (i), we seen that the subgroup $\Gxi$ is generated by
a solution $u$ in $\GF(q^m)$ of the recurrence relation with characteristic 
polynomial~$f$ if and only if the $q$-polynomial $L(x)$ of $q$-order $m$
corrsponding to this 
solution $u$ satisfies $L(\Gxi)=\Gxi$. (Note that this can only happen if $L$ is
$q$-polynomial {\em over $\GF(q^m)$\/}.)
Now since $L$ is $q$-linear and since $1=\xi^0, \xi, \ldots, \xi^{m-1}$
constitute a basis for $\GF(q^m)$ over $\GF(q)$, the coefficients $L_0, \ldots,
L_{m-1}$ of $L$ are uniquely determined by the images $L(1), L(\xi),
\ldots,L(\xi^{m-1})$. By replacing $L(x)$ by $L'(x)=L(x)/L(1)$ if necessary
we may assume that $L(1)=1$. (Note that this does not change the 
``standardness'' of the $q$-polynomial at hand.) Then the standard
$q$-polynomials $L(x)=x^{q^j}$, $j=0, \ldots, m-1$, are precisely the 
$q$-polynomials that result in a ``standard'' generation of the $f$-subgroup
$\Gxi$ where $(u_0, \ldots, u_{m-1})=(1, \xi^{q^j}, \ldots, \xi^{(m-1)q^j})$.
\epf

Next, we will discuss two nonstandard examples.
Note that there are no nonstandard elements of degree $m=1$.

\bex{E2} The case where $m>1$ and
$\xi\in\bF_{q^m}^*$ has order $n>4$ and
minimal polynomial of the form $f(x)=x^m-\eta$
with $\eta=\xi^m\in\bF_q^*$ with $\eta\neq1$.
Note that if $q=p^r$ with $p$ prime, then 
$(p,m)=1$. Also, we must have $q>2$: indeed, 
if $q=2$, then $\eta=1$ is the only possibility, but since
$x^m-1=(x-1)(1+x+\cdots +x^{m-1})$ is reducible, this does not occur.
Under the above assumptions, $\xi$ has $q$-order $d=m$, and
\[ \langle\xi\rangle=\{1, \eta, \eta^2, \ldots, \eta^{e-1}\}\times 
\{1,\xi, \ldots, \xi^{m-1}\},\]
where $e=n/m$ is the order of $\eta$ and $n$ is the order of $\xi$.
Note that $e>1$, since if $e=1$, then $\eta=1$ and $x^m-1$ is not irreducible
for $m>1$.
Now let $\tau\in S_m$ be a permutation with $\tau(0)=0$,
and define
\[ L(\xi^j) = \eta_j \xi^{\tau(j)}\]
for $j=0, \ldots, m-1$.
Finally, extend $L$ by $\bF_q$-linearity to all of $\bF_{q^m}$.
Since $1, \xi, \ldots, \xi^{m-1}$ constitute a basis of $\bF_{q^m}$ over
$\bF_q$ and since $\tau$ is assumed to be a permutation, $L$ is well-defined
by $\bF_q$-linearity, and nonsingular on $\bF_{q^m}$. Hence, since
$L\langle\xi\rangle \subseteq \langle\xi\rangle$ by definition, we actually
have
equality here.

There are precisely $e^{m-1}(m-1)!$ possible $q$-polynomials $L$ with $L(1)=1$
and precisely $m$ forbiden (standard) ones. Hence if $e=2$, $m\geq3$ or
$e\geq3$, $m\geq2$, then some $L$ is nonstandard. This condition holds
precisely when $m\geq2$, $e>1$, and $n>4$.

In particular, it is easily verified that there is an example of degree 2 over
$\bF_q$ with order $n$ and $q$-order 2 if and only if
$n=2e>4$ and both $q$ and $(q-1)/e$ are odd.
We will refer to such examples as {\em examples of type I\/}.
\eex

\bex{E1} If $m>2$ or $m=2, q>2$, then a primitive element of $\bF_{q^m}$ is
nonstandard over $\bF_q$.
This is the case where $\xi\in\bF_{q^m}$ has order $n=q^m-1$, so that
$\langle\xi\rangle=\bF_{q^m}^*$,
where $\bF_{q^m}^*=\bF_{q^m}\setminus\{0\}$. Indeed, in that
case {\em any\/} $q$-polynomial $L\in\bF_{q^m}[x]$ that is nonsingular on
$\bF_{q^m}$ will fix $\bF_{q^m}^*$ as a set, so is nonstandard polynomial for
$\xi$ except when of the form $\xi^cx^{q^j}$ for some $c\in\{0,\ldots, q^m-2\}$
and some $j\in\{0,1,\ldots, m-1\}$.
Here, $L$ is called nonsingular if the associated $\bF_q$-linear map $L$ on
$\bF_{q^m}$ is nonsingular; equivalently, if $L(x)\neq0$ for
$x\in\bF_{q^m}^*$. Note that the requirement that $L$ is nonsingular is
necesary and sufficient for $L$ to act as a permutation on $\bF_{q^m}^*$.

Now there are precisely $(q^m-1)(q^m-q)\cdots (q^m-q^{m-1})$ nonsingular 
$\bF_q$-linear maps on $\bF_{q^m}$, which are all of the form of a
$q$-polynomial in $\bF_{q^m}[x]$. Precisely $m(q^m-1)$ of these are 
``forbidden'', but all others provide nonstandard $q$-polynomials. It is
easily shown that for integers $m,q\geq2$, we have
\[(q^m-1)(q^m-q)\cdots (q^m-q^{m-1})>m(q^m-1)\] 
except when $m=2$ and $q=2$.

We will see later that primitive elements are particular cases of a class of
examples referred to as {\em type II} examples.
\eex
\section{\label{Sauto}Automorphisms of cyclic codes}
In this section, we will show that the classification problem of nonstandard
elements over $\GF(q)$ is equivalent to the problem of determining which cyclic 
codes over $\GF(q)$ defined by a single zero have ``extra'' automorphisms.
We begin by a brief introduction to cyclic codes. For more details, see e.g.\
\cite{lint}.

We will denote by $S_n$ the collection of all permutations on 
$\{0,1,\ldots, n-1\}$. 
In what follows, we will slightly abuse notation and use the same symbol $\pi$
to denote both a permutation from $S_n$ 
and the induced permutation on the $n$-dimensional vectorspace $\GF(q)^n$ given
by 
\[\pi: c \mapsto c^\pi= (c_{\pi(0)}, \ldots, c_{\pi(n-1)}).\]

A {\em cyclic code of length $n$ over $\GF(q)$\/} 
is a $\GF(q)$-linear subspace of
$\GF(q)^n$ closed under the map
\[ \sigma: (c_0, \ldots, c_{n-1}) \mapsto (c_{n-1}, c_0, \ldots, c_{n-2}).\]
This map, as well as the underlying permutation 
\[ \sigma: i \mapsto i-1 \bmod n,\]
are both referred to as a {\em cyclic shift\/}.


In what follows, we will identify a vector $c=(c_0, \ldots,
c_{n-1})\in\GF(q)^n$ with its associated polynomial
\[ c(x)=\sum_{i=0}^{n-1} c_i x^i\]
in $\GF(q)[x]\bmod x^n-1$.
Note that the cyclic shift $c^\gs$ of a vector $c$ has corresponding polynomial
$c^\gs(x)=xc(x)$; so multiplication by $x$ in $\GF(q)[x]\bmod x^n-1$ correspond
to a cyclic shift.

Let $n|q^m-1$, and
let $Z\subseteq \bF_{q^m}^*$ be a collection of field 
elements of order dividing $n$, so that $\ga^n=1$ holds for all $\ga\in Z$. The
{\em cyclic code of length $n$ over $\GF(q)$ with defining zeroes $Z$\/} is the
collection $C=C(n,q,Z)$ of all $c=(c_0, \ldots, c_{n-1})\in\GF(q)^n$ for which 
\[ c(\ga)=\sum_{i=0}^{n-1} c_i \ga^i=0\]
holds for all $\ga\in Z$. We refer to an element $c\in C$ as a {\em code
word\/}. 
Note that if $c(x)$ is in $C$, then the cyclic shift $xc(x)$ 
is again in $C$; since a cyclic code is also linear it is in fact an 
{\em ideal\/} in $\GF(q)[x]\bmod x^n-1$. 

If $c(x)$ has all its coefficients in $\GF(q)$, then $c(x)^q = c(x^q)$. As a
consequence, the codes $C(n,q,Z)$ and $C(n,q,\bar{Z})$ are equal, where 
$\bar{Z}=\{z^{q^j} \mid z\in Z; i=0, \ldots m-1\}$. 

A permutation $\pi\in S_n$ is called a {\em permutation automorphism\/} of a 
cyclic code $C\subseteq \GF(q)^n$ if for all code words 
$c=(c_0, \ldots, c_{n-1})\in C$, the permuted word
\[ c^\pi=(c_{\pi(0)}, \ldots, c_{\pi(n-1)})\]
is again in $C$. 
Now 
\[c^\pi(\xi)=\sum_{i=0}^{n-1}c_{\pi(i)}\xi^i=\sum_{j=0}^{n-1}
c_j\xi^{\pi^{-1}(j)},\]
so beside the cyclic shift $\gs$ also the Frobenius permutation $\phi: i
\mapsto qi \bmod n$ is a permutation automorphism of a cyclic code of length
$n$ over $\GF(q)$.

The next theorem provides a relation between automorphisms $\pi$ of cyclic
codes and $q$-polynomials fixing sets $\Gxi$.
\begin{teor}\label{Tauto} 
Let $\xi\in \GF({q^m})^*$
have order $\ord(\xi)=n$ and degree $m$ over $\GF(q)$, and let
$C\subseteq\GF(q)^n$ be the cyclic code $C=C(n,q,\{\xi\})$ of length $n$ 
over $\GF(q)$ with defining zero $\xi$. 
Then a permutation $\pi\in S_n$ is a permutation automorphism of $C$ if and
only if the map $L: \xi^i\mapsto \xi^{\pi(i)}$ extends to a $q$-polynomial of
$q$-order $m$ over $\GF({q^m})$.
\end{teor}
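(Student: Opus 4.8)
The plan is to translate the automorphism condition into a statement about the action on the field $\GF(q^m)$ via the correspondence $i \mapsto \xi^i$, and then use the characterization of $f$-sequences from Theorem~\ref{Trec} (or its refinement Theorem~\ref{Tqpol}(i)). The key observation is that because $\xi$ has order $n$ and degree $m$, the map $i \mapsto \xi^i$ identifies $\{0,1,\ldots,n-1\}$ with the group $\Gxi$, and a permutation $\pi \in S_n$ corresponds exactly to a bijection of $\Gxi$; the question is when that bijection is the restriction of a $q$-polynomial of $q$-order $m$.

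First I would recall that, since $\xi$ has degree $m$ over $\GF(q)$, its minimal polynomial $f$ has distinct zeros $\xi, \xi^q, \ldots, \xi^{q^{m-1}}$ all lying in $\GF(q^m)$, and that $n \mid q^m - 1$ so that $\GF(q^m)$ contains all $n$-th roots of unity, in particular all of $\Gxi$. The heart of the argument is the following dual description of $C = C(n,q,\{\xi\})$: a word $c = (c_0,\ldots,c_{n-1}) \in \GF(q)^n$ lies in $C$ iff $c(\xi) = 0$, and by the Frobenius remark in the text this is equivalent to $c(\xi^{q^j}) = 0$ for all $j$. Now I would introduce, for a permutation $\pi$, the permuted word $c^\pi$ and use the computation already displayed in the excerpt, $c^\pi(\xi) = \sum_j c_j \xi^{\pi^{-1}(j)}$. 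The plan is to show: $\pi$ is an automorphism of $C$ $\iff$ the sequence $v$ defined by $v_i := \xi^{\pi(i)}$ (extended periodically with period $n$) satisfies the linear recurrence \eqref{Erec} with characteristic polynomial $f$. One direction is the pairing argument: if $v$ satisfies the recurrence, then by Theorem~\ref{Trec}(ii) $v_i = L(\xi^i)$ for a $q$-polynomial $L$ of $q$-order $m$ over $\GF(q^m)$, and then $L$ is precisely the claimed extension of $i \mapsto \xi^{\pi(i)}$; moreover for any $c \in C$ one checks $c^\pi(\xi) = \sum_i c_i v_i = \sum_i c_i L(\xi^i) = L\bigl(\sum_i c_i \xi^i\bigr) = L(c(\xi)) = L(0) = 0$ using $\GF(q)$-linearity of $L$, so $c^\pi \in C$. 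Conversely, if $\pi$ is an automorphism, I would argue that the constraint ``$c^\pi(\xi) = 0$ for all $c \in C$'' forces the functional $c \mapsto \sum_i c_i \xi^{\pi(i)}$ to vanish on the codimension-$m$ subspace $C$, hence to be a $\GF(q)$-linear combination of the $m$ functionals $c \mapsto c(\xi^{q^j})$, $j = 0,\ldots,m-1$, that cut out $C$; writing out that linear combination gives exactly $\xi^{\pi(i)} = \sum_j \mu_j \xi^{i q^j}$ for scalars $\mu_j \in \GF(q)$ — wait, the $\mu_j$ should lie in $\GF(q^m)$, so more carefully: the dual space of $\GF(q)^n$ is spanned by evaluations at all $n$-th roots of unity, the functionals vanishing on $C$ are spanned over $\GF(q^m)$ by evaluation at $\xi^{q^j}$, and $c \mapsto \sum c_i \xi^{\pi(i)}$ is such a functional, giving $\xi^{\pi(i)} = \sum_{j=0}^{m-1} L_j \xi^{i q^j}$ for suitable $L_j \in \GF(q^m)$; this is the desired $q$-polynomial $L(x) = \sum_j L_j x^{q^j}$, and $L_{m-1} \neq 0$ (hence $q$-order exactly $m$) because $L$ must be nonsingular, as $\pi$ being a permutation makes $i \mapsto \xi^{\pi(i)}$ injective on the basis-spanning set $\{1,\xi,\ldots,\xi^{m-1}\}$ — indeed $L$ permutes $\Gxi$, which contains a $\GF(q)$-basis of $\GF(q^m)$, so $L$ is invertible.

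The main obstacle I anticipate is the careful bookkeeping of the duality between $\GF(q)^n$ and evaluations: one must pin down precisely the space of linear functionals on $\GF(q)^n$ that annihilate $C$, show it has the right dimension ($m$ over $\GF(q)$, or equivalently it is a $1$-dimensional $\GF(q^m)$-space under the Frobenius-twist action spanned by evaluation at $\xi$), and verify that $c \mapsto \sum_i c_i \xi^{\pi(i)}$ genuinely lies in it rather than merely annihilating $C$ ``coincidentally.'' The cleanest route is probably to avoid explicit duality and instead use Theorem~\ref{Trec}(ii) directly in both directions: the condition $c^\pi \in C$ for all $c \in C$, combined with the fact that $C$ together with its cyclic shifts spans the full ``information'' about $\xi$, forces $i \mapsto \xi^{\pi(i)}$ to satisfy the recurrence relation \eqref{Erec} (because the parity-check constraints defining $C$ are exactly the recurrence constraints), and then Theorem~\ref{Trec}(ii) delivers the $q$-polynomial. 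I would also need to double-check the edge case distinctions ($\xi^{\pi(i)}$ well-defined as a periodic sequence since $\pi$ fixes no special structure but $n \mid q^m-1$ guarantees $\xi^i$ depends only on $i \bmod n$) and confirm the $q$-order is exactly $m$ and not smaller by the nonsingularity argument above.
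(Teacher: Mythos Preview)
Your forward direction matches the paper's: both use $\GF(q)$-linearity of $L$ to push $L$ through the sum $\sum c_i \xi^i$. One bookkeeping slip: $c^\pi(\xi) = \sum_j c_j \xi^{\pi^{-1}(j)}$, not $\sum_i c_i \xi^{\pi(i)}$; what the computation actually shows is that $\pi^{-1}$ (hence also $\pi$) is an automorphism, exactly as the paper handles it.

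For the converse your route genuinely differs. The paper defines $L$ on the basis $1,\xi,\ldots,\xi^{m-1}$ by $L(\xi^j)=\xi^{\pi(j)}$, extends $\GF(q)$-linearly, and then for each $j\ge m$ exhibits the explicit codeword encoding the relation $\xi^j=\sum_{i<m} a_i\xi^i$; applying $\pi^{-1}$ to that one codeword forces $L(\xi^j)=\xi^{\pi(j)}$. Your duality argument instead notes that the functional $c\mapsto\sum_i c_i\xi^{\pi^{-1}(i)}$ lies in the annihilator of $C$ inside $\GF(q^m)^n$, which is the $m$-dimensional $\GF(q^m)$-space spanned by the evaluation functionals at the conjugates of $\xi$, and reads off the $q$-polynomial from the coefficients. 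Both work; yours is cleaner conceptually but needs the annihilator dimension count, while the paper's is elementary and self-contained. Your alternative route via the recurrence and Theorem~\ref{Trec}(ii) also works and is closer in spirit to the paper, though it uses the shifted recurrence codewords rather than the paper's ``basis-expansion'' codewords.

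One actual error: your claim that $L_{m-1}\neq 0$ ``because $L$ must be nonsingular'' is false --- $L(x)=x$ is nonsingular with $L_{m-1}=0$ for $m>1$. In fact the paper's formal requirement $L_{m-1}\in\GF(q^m)^*$ in the definition of ``$q$-order $m$'' is never enforced in its own proofs (the standard polynomials $x^{q^j}$ already violate it); in context the phrase just means ``$\GF(q)$-linear map on $\GF(q^m)$''. Drop that sentence and do not try to prove $L_{m-1}\neq 0$.
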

\bpf
First, suppose that $L$ is a $q$-polynomial of $q$-degree $m$ 
that fixes \Gxi, and
let $\pi\in S_n$ be the permutation induced by $L$, that is, let $\pi$ be 
such that
$L(\xi^i)=\xi^{\pi(i)}$ 
for all $i=0, \ldots, n-1$.
Then if $c\in C$, we have
\[0 = L(0) = L(\sum_{i=0}^{n-1}c_i\xi^i) 
	= \sum_{i=0}^{n-1} c_i L(\xi^i)
	= \sum_{i=0}^{n-1} c_i \xi^{\pi(i)}
	= \sum_{j=0}^{n-1} c_{\pi^{-1}(j)} \xi^j
\]
that is, $c^{\pi^{-1}}$ is in $C$.
So $\pi^{-1}$, and hence also $\pi$, is a permutation  automorphism of $C$.

Conversely, let $\pi$ be a permutation automorphism of $C$.
We define a $q$-polynomial $L(x)=L_0x+\cdots+L_{m-1}x^{q^{m-1}}$ of $q$-order
$m$ by letting 
\beql{Ldef} L(\xi^j)=\xi^{\pi(j)} \eeql
for $j=0, \ldots, m-1$, and then extending $L$ to
all of $\GF(q^m)$ by $\GF(q)$-linearity. Note that since we assumed that
$\xi$ has degree $m$ over $\GF(q)$, we have that $1, \xi, \ldots, \xi^{m-1}$
constitute a basis for $\GF(q^m)$ over $\GF(q)$, so $L$ is uniquely determined.
We claim that now (\ref{Ldef}) holds for {\em all\/} $j=0, \ldots, n-1$. 
Indeed, let $j\geq m$. By our assumptions on $\xi$, 
there are $a_0, \ldots, a_{m-1}\in \bF_q$ such that
$\xi^j=a_0 +a_1 \xi + a_2 \xi^2 + \cdots + a_{m-1}\xi^{m-1}$.
Note that then
\beql{ELexp}
L(\xi^j)=a_0\xi^{\pi(0)} +a_1 \xi^{\pi(1)} + a_2 \xi^{\pi(2)} + \cdots + 
a_{m-1}\xi^{\pi(m-1)}.
\eeql
Now since $C$ is the code with defining zero $\xi$, the word
\[ c = (a_0, \ldots, a_{m-1}, 0, \ldots, 0, -1,0, \ldots, 0),\]
with 
\[
	c_i=\left\{ \begin{array}{ll}
                    a_i, & \mbox{if $0\leq i\leq m-1$;} \\
                    -1, & \mbox{if $i=j$;} \\
                    0, & \mbox{otherwise,}
                     \end{array}
            \right.
\]
is in $C$. So by our assumption that $\pi$, and hence also $\pi^{-1}$, 
is a permutation automorphism of $C$, the word $c^{\pi^{-1}}$ is also in $C$.
Hence we have that
\[ 0 = \sum_{i=0}^{n-1}c_i\xi^{\pi(i)}
	= -\xi^{\pi(j)} + a_0\xi^{\pi(0)}+ \cdots + a_{m-1}\xi^{\pi(m-1)}
	=  -\xi^{\pi(j)} +L(\xi^j).
\]
We conclude that $L(\xi^j)=\xi^{\pi(j)}$ holds for {\em all\/} $j=0, \ldots,
n-1$, 
as claimed.
\epf

In view of Theorem~\ref{Tqpol}, we immediately have the following consequence.
\begin{cor}\label{Cqpol}
Let $\xi\in \GF({q^m})^*$
have order $\ord(\xi)=n$ and degree $m$ over $\GF(q)$.
Then the cyclic code $C(n,q,\xi)$ 
of length $n$ over $\GF(q)$ with defining zero $\xi$ has
``extra permutation automorphisms'', that is, a permutation automorphism group
{\em stricktly larger\/} than the group $\langle \gs, \phi\rangle$ of order
$mn$ generated by
the cyclic shift and the Frobenius permutation, if and only if $\xi$ is
nonstandard of order $n$ and degree $m$ over $\GF(q)$.
\end{cor}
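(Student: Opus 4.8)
The plan is to deduce Corollary~\ref{Cqpol} directly from Theorem~\ref{Tauto} together with Theorem~\ref{Tqpol}(ii), essentially by unwinding the definitions and matching up the two ``standard'' sets on either side. First I would recall that by Theorem~\ref{Tauto}, the permutation automorphisms $\pi\in S_n$ of $C=C(n,q,\{\xi\})$ are exactly the permutations for which the partial map $\xi^i\mapsto\xi^{\pi(i)}$ extends to a $q$-polynomial $L$ of $q$-order $m$ over $\GF(q^m)$; since such an $L$ permutes $\Gxi=\{1,\xi,\ldots,\xi^{n-1}\}$ (it is injective and $\Gxi$ is finite), this says precisely that the automorphism group of $C$ is in bijective correspondence with the set of $q$-polynomials $L$ of $q$-order $m$ over $\GF(q^m)$ with $L(\Gxi)=\Gxi$ and $L(1)=1$. (The normalisation $L(1)=1$ is harmless: replacing $L$ by $L/L(1)$ does not change whether it is standard, and it is what makes the correspondence with $S_n$ a bijection rather than a surjection with fibres of size $n$.)

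Next I would identify the two subgroups being compared. On the code side, the ``expected'' automorphisms are the group $\langle\gs,\phi\rangle$ generated by the cyclic shift $\gs:i\mapsto i-1\bmod n$ and the Frobenius $\phi:i\mapsto qi\bmod n$; I would check that this group has order exactly $mn$ — the cyclic shift has order $n$, the Frobenius has order $m$ (here I use that $\xi$ has degree $m$ over $\GF(q)$, so $q$ has multiplicative order $m$ modulo $n$), and $\phi$ normalises $\langle\gs\rangle$ with $\langle\gs\rangle\cap\langle\phi\rangle=\{1\}$, giving a semidirect product of order $mn$. Under the correspondence of Theorem~\ref{Tauto}, $\gs$ corresponds to the $q$-polynomial $L(x)=x$ composed with multiplication inside $\Gxi$ — more to the point, the subgroup $\langle\gs,\phi\rangle$ corresponds exactly to the $mn$ $q$-polynomials of the form $L(x)=\xi^c x^{q^j}$ with $0\le c\le n-1$ and $0\le j\le m-1$ (note $\xi^{c}\in\Gxi$ so these do fix $\Gxi$, and modulo the $L(1)=1$ normalisation these are the standard ones $x^{q^j}$). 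These are precisely the ``standard'' $q$-polynomials in the sense of Theorem~\ref{Tqpol}.

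With both sides thus described, the equivalence is now a tautology: $C$ has automorphism group strictly larger than $\langle\gs,\phi\rangle$ iff there is a $q$-polynomial $L$ of $q$-order $m$ over $\GF(q^m)$ with $L(\Gxi)=\Gxi$ that is \emph{not} of the standard form $\xi^c x^{q^j}$, i.e.\ iff there is a \emph{nonstandard} $q$-polynomial fixing $\Gxi$, which by Theorem~\ref{Tqpol}(ii) is exactly the statement that $\xi$ is nonstandard of degree $m$ over $\GF(q)$ — and its order is $n$ by hypothesis. I would also remark that $\ord(\xi)=n$ divides $q^m-1$, so the code $C(n,q,\{\xi\})$ is well-defined and $f$, the minimal polynomial of $\xi$, is an honest irreducible degree-$m$ polynomial, so Theorem~\ref{Tirr} applies and the phrase ``nonstandard $f$-subgroup'' is unambiguous.

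The main obstacle is not any deep argument but rather the bookkeeping: one must be careful that the bijection of Theorem~\ref{Tauto} between $S_n$-automorphisms and normalised $q$-polynomials genuinely carries the subgroup $\langle\gs,\phi\rangle$ onto the set of standard $q$-polynomials and nothing more — in particular that the orders match (that $|\langle\gs,\phi\rangle|=mn$ equals the number $mn$ of pairs $(c,j)$) so that ``strictly larger'' on one side corresponds precisely to ``there exists a nonstandard one'' on the other. Once that correspondence of subgroups is pinned down, invoking Theorem~\ref{Tqpol}(ii) finishes the proof in one line.
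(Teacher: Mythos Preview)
Your proposal is correct and follows exactly the route the paper takes: the paper states the corollary as an immediate consequence of Theorem~\ref{Tauto} combined with Theorem~\ref{Tqpol}, and you have simply spelled out the bookkeeping (the bijection between permutation automorphisms and $q$-polynomials fixing $\Gxi$, and the identification of $\langle\gs,\phi\rangle$ with the $mn$ standard ones) that the paper leaves implicit. One small remark: your parenthetical about the normalisation $L(1)=1$ turning a ``surjection with fibres of size $n$'' into a bijection is not quite right---the correspondence $\pi\leftrightarrow L$ from Theorem~\ref{Tauto} is already a bijection between $\mathrm{Aut}(C)$ and $\{L:L(\Gxi)=\Gxi\}$ without any normalisation---but this does not affect the argument.
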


From the above corollary we can obtain two new examples of
nonstandard elements.

\bex{EG2} (Binary Golay) Let $q=2$, $n=23$, and $m=11$. Then $n|2^{11}-1$. 
Let $\ga$ be primitive in $\GF(2^{11})$, and let $\xi=\ga^{(2^{11}-1)/23}$.
Then $\xi$ is a primitive $23$-th root of unity in $\GF(2^{11})$.
The {\em binary Golay code\/} is the binary length $n=23$ code with defining 
zero $\xi$.
It can be shown that this code has minimum distance $7$ (in fact, it a {\em 
perfect\/} binary 3-error-correcting code). Its automorphism group is the
{\em Mathieu\/} group $M_{23}$, a simple group of order 200960, entirely
consisting of permutations. As a consequence of Corollary~\ref{Cqpol}, we
conclude that $\xi$ is nonstandard of order $n=23$
and degree $m=11$ over $GF(2)$.
Its $2$-order is $d=23>m$, and we see immediately that this provides an example
not of the form of the two known types.
\eex

\bex{EG3} (Ternary Golay) Let $q=3$, $n=11$, and $m=5$. Then $n|2^{m}-1$. 
Let $\ga$ be primitive in $\GF(2^{5})$, and let $\xi=\ga^{(2^{5}-1)/11}$.
Then $\xi$ is a primitive $11$-th root of unity in $\GF(2^{5})$.
The {\em ternary Golay code\/} is the ternary length $n=11$ code with defining 
zero $\xi$.
It can be shown that this code has minimum distance $5$ (in fact, it a {\em 
perfect\/} ternary 2-error-correcting code). Its automorphism group is twice the
{\em Mathieu\/} group $M_{11}$, a simple group of order 7920,
which itself consists entirely of permutations. 
As a consequence of Corollary~\ref{Cqpol}, we
conclude that $\xi$ is nonstandard of order $n=11$
and degree $m=5$ over $GF(3)$.
Its $3$-order is $d=11>m$, and we see immediately that this provides another 
example not of the form of the two known types.
\eex

Examples of cyclic codes with ``extra'' automorphisms seem to
be quite rare. For example, the only (binary) quadratic-residue codes
with ``extra'' automorphisms of length less than 4000
are the $(7,4,3)$ Hamming code and the binary
Golay code \cite{mcws}.  
\section{\label{Slift}Extension and lifting}
For later use, we investigate when we can conclude that a $q$-polynomial
$L$ in $\GF(q^m)[x]$ for some $\phi$ of degree $m$ that acts as a bijection on
some other subgroup $\Gxi$ of $\GF(q^m)^*$ is actually nonstandard for $\xi$.
The result is as follows.
\begin{lem}\label{LLstand}
Let $L$ be a $q$-polynomial in $\GF(q^m)[x]$, and let $\xi$ have degree $m$
over $\GF(q)$. If $L(\xi^i)=\xi^{iq^j}$ for $i=0, \ldots, m-1$, then
$L(x)=x^{q^j}$ on $\GF(q^m)$.
\end{lem}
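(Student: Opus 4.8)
The plan is to exploit the $\GF(q)$-linearity of $L$ together with the fact that $1,\xi,\ldots,\xi^{m-1}$ is a basis of $\GF(q^m)$ over $\GF(q)$. Consider the auxiliary $q$-polynomial $M(x)=L(x)-x^{q^j}$; since both $L(x)$ and $x^{q^j}$ are $q$-polynomials over $\GF(q^m)$, so is $M$, and it is $\GF(q)$-linear on $\GF(q^m)$. By hypothesis $M(\xi^i)=L(\xi^i)-\xi^{iq^j}=\xi^{iq^j}-\xi^{iq^j}=0$ for $i=0,\ldots,m-1$, so $M$ vanishes on all $m$ basis elements $1,\xi,\ldots,\xi^{m-1}$. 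By $\GF(q)$-linearity $M$ then vanishes identically on $\GF(q^m)$, i.e.\ $L(x)=x^{q^j}$ as functions on $\GF(q^m)$, which is the claim.

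The one subtlety to address is that a priori $L(\xi^i)=\xi^{iq^j}$ is assumed only for $i=0,\ldots,m-1$, not for $i$ up to $n-1$; but the statement of the lemma only asks for $L(x)=x^{q^j}$ on $\GF(q^m)$, and since $1=\xi^0,\xi,\ldots,\xi^{m-1}$ already span $\GF(q^m)$ over $\GF(q)$ (this is exactly where the hypothesis that $\xi$ has degree $m$ over $\GF(q)$ enters), these $m$ equalities suffice. One should note here that $x\mapsto x^{q^j}$ is indeed $\GF(q)$-linear, as $a^{q^j}=a$ for $a\in\GF(q)$ and Frobenius is additive, so it genuinely is a $q$-polynomial in the sense used in the paper.

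I do not expect any real obstacle: the argument is the standard ``a $\GF(q)$-linear map is determined by its values on a basis'' observation, the same principle already invoked in the proofs of Theorem~\ref{Tqpol} and Theorem~\ref{Tauto}. The only thing worth stating carefully is that the difference of two $q$-polynomials over $\GF(q^m)$ is again a $q$-polynomial (hence $\GF(q)$-linear) over $\GF(q^m)$, so that having $m$ coincident values on a basis forces the two polynomials to agree as functions on $\GF(q^m)$.
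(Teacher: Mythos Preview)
Your proposal is correct and follows essentially the same approach as the paper: both arguments use that $1,\xi,\ldots,\xi^{m-1}$ is a basis of $\GF(q^m)$ over $\GF(q)$ (since $\xi$ has degree $m$), so the $\GF(q)$-linear map $L$ is determined by its values on this basis, and agreeing with $x\mapsto x^{q^j}$ on the basis forces agreement on all of $\GF(q^m)$. Your introduction of the difference $M(x)=L(x)-x^{q^j}$ is just a cosmetic reformulation of the same idea.
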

\bpf
If $\xi$ has degree $m$ over $\GF(q)$, then $1,\xi, \ldots, \xi^{m-1}$
constitute a basis for $\GF(q^m)/\GF(q)$, hence a $q$-linear map $L$
on $\GF(q^m)$ is determined on $\GF(q^m)$ by the images on $1, \xi, \ldots,
\xi^{m-1}$. Also, if $L$ in in $\GF(q^m)[x]$, then $L$ is determined as a
polynomial by its action on $\GF(q^m)$.
\epf

We also need the following simple observation concerning degrees.
\begin{lem}\label{Ldeg}
(i) An element $\xi\in\GF(q^m)$ of order $n$ has degree $m$ over $\GF(q)$ if 
and only if $m$ is the smallest integer $t\geq1$ for which $n|q^t-1$. 

\noindent
(ii)
If $\phi$ has degree $m$ over $\GF(q)$ and $\xi\in\GF(q^m)$ has $\phi\in\Gxi$,
then $\xi$ also has degree $m$ over $\GF(q)$.
\end{lem}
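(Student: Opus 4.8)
The plan is to reduce both statements to two elementary facts about finite fields: that for a nonzero element $\xi$ of $\GF(q^m)$ the degree of $\xi$ over $\GF(q)$ is the least integer $t\geq1$ with $\xi\in\GF(q^t)$, and that $\GF(q^s)\subseteq\GF(q^t)$ holds exactly when $s\mid t$. Once these are in place, both parts are short.

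For part (i), first I would observe that since $\GF(q^t)^*$ is cyclic of order $q^t-1$, an element of multiplicative order $n$ lies in $\GF(q^t)$ if and only if $n\mid q^t-1$. Hence the set of exponents $t$ with $\xi\in\GF(q^t)$ coincides with the set of $t$ with $n\mid q^t-1$. The smallest element of the first set is, by definition, the degree of $\xi$ over $\GF(q)$; the smallest element of the second set is exactly the integer named in the statement (it equals the multiplicative order of $q$ modulo $n$, which makes sense because $n\mid q^m-1$ forces $\gcd(n,q)=1$). Therefore $\xi$ has degree $m$ over $\GF(q)$ precisely when $m$ is that smallest $t$, which is the claim.

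For part (ii), write $d$ for the degree of $\xi$ over $\GF(q)$. Since $\xi\in\GF(q^m)$ we have $\GF(q)(\xi)=\GF(q^d)\subseteq\GF(q^m)$, hence $d\mid m$; in particular $d\leq m$. On the other hand $\phi\in\Gxi$ means $\phi=\xi^k$ for some $k\geq0$, so $\phi\in\GF(q^d)$, and therefore the degree of $\phi$ over $\GF(q)$, namely $m$, divides $d$; in particular $m\leq d$. Combining the two, $d=m$, as required. Alternatively, one can argue purely through part (i): $\ord(\phi)=\ord(\xi^k)$ divides $\ord(\xi)=n$, so every $t$ with $n\mid q^t-1$ also satisfies $\ord(\phi)\mid q^t-1$, whence $m=\deg(\phi)\leq\deg(\xi)=d$, and again $d\leq m$ because $\xi\in\GF(q^m)$.

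There is no real obstacle here; the only points needing a moment of care are the implicit assumption that $\xi$ (and hence $\phi=\xi^k$) is nonzero, which is guaranteed by the hypothesis that $\xi$ has a finite order $n\geq1$, and remembering that it is divisibility of degrees, not just the inequality $d\leq m$, that gives the sharp conclusion $d\mid m$ should it be wanted --- though for the stated equality $d=m$ either of $d\mid m$ or $d\leq m$ combined with $m\leq d$ suffices.
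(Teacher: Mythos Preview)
Your proof is correct and follows essentially the same approach as the paper: both rest on the equivalence $\xi\in\GF(q^t)\iff n\mid q^t-1$ for part~(i), and for part~(ii) the paper uses exactly your ``alternative'' argument via part~(i) and the divisibility $\ord(\phi)\mid\ord(\xi)$. Your first argument for~(ii), via $\phi\in\GF(q)(\xi)=\GF(q^d)$ directly, is a minor variant that avoids invoking~(i), but the content is the same.
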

\bpf
(i) If $n|q^t-1$, then $\Gxi=\gth^{(q^t-1)/n}\subseteq \GF(q^t)^*$.

\noindent 
(ii) If $\phi\in\Gxi$, then $\langle \phi \rangle \subseteq \Gxi$, hence the
order $n$ of $\phi$ divides the order of $\xi$. Now the result follows from
part (i).
\epf

The {\em order\/} $\ord(\xi)$ of an element $\xi\in\GF(q^m)$ of degree $m$
over $\GF(q)$ was defined as the smallest positive integer $n$ for which
$\xi^n=1$. We now define the {\em $q$-order\/} $\ord_q(\xi)$ as the smallest
positive integer $d$ for which $\xi^d\in\GF(q)$. The $q$-order is an important 
notion in this paper. 
It is related to another notion, the resticted period,
which was investigated in \cite{bn-mat} and
played an important role in \cite{bn2} and \cite{bn3}.
Here, the {\em restricted period\/} $\delta(f)$ of a polynomial
$f(x)\in\GF(q)[x]$ as in
(\ref{Ef}), with corresponding recurrence relation (\ref{Erec}), is the first
positive integer $n$ for which the solution $u=\{u_k\}_{k\geq0}$ of
(\ref{Erec}) with
\[ (u_0,u_1,\ldots, u_{m-2}, u_{m-1})= (0,0,\ldots, 0, 1)\]
satisfies
\[ (u_0,u_1,\ldots, u_{m-2}, u_{m-1}) = (0,0,\ldots, 0, \gl),\]
for some $\gl\in\GF(q)^*$. The next theorem states this relation.

\begin{teor}\label{Tresper} The $q$-order of an element $\xi$ in an extension
$\GF(q^m)$ of $\GF(q)$ is equal to the restricted period of its minimal
polynomial over $\GF(q)$.
\end{teor}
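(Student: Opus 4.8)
The plan is to show the two quantities coincide by exhibiting a concrete dictionary between "powers of $\xi$ lying in $\GF(q)$" and "the all-but-last-coordinate-zero states of the recurrence." Write $f(x)=x^m-\gs_{m-1}x^{m-1}-\cdots-\gs_0$ for the minimal polynomial of $\xi$ over $\GF(q)$, and recall from Theorem~\ref{Trec}(ii) that every $f$-sequence over $\GF(q^m)$ has the form $u_k=L_0\xi^k+L_1\xi^{kq}+\cdots+L_{m-1}\xi^{kq^{m-1}}$, i.e.\ $u_k=L(\xi^k)$ for a $q$-polynomial $L$ of $q$-order $m$. Equivalently, by Theorem~\ref{Tqpol}(i), an $f$-sequence over $\GF(q^m)$ is exactly a sequence of the form $u_k=L(\xi^k)$. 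The particular initial state $(u_0,\ldots,u_{m-1})=(0,\ldots,0,1)$ singles out one such $L$ — call it $L^{(0)}$ — determined by $L^{(0)}(\xi^i)=0$ for $0\le i\le m-2$ and $L^{(0)}(\xi^{m-1})=1$; this is (up to scalar) the dual-basis functional picking off the top coordinate in the basis $1,\xi,\ldots,\xi^{m-1}$.

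First I would observe the key linear-algebra fact: for $0\le i\le m-2$ one has $L^{(0)}(\xi^{k+i})=u_{k+i}$, so the state at time $k$ is $(u_k,u_{k+1},\ldots,u_{k+m-1})=(L^{(0)}(\xi^k),L^{(0)}(\xi^{k+1}),\ldots,L^{(0)}(\xi^{k+m-1}))$, which is the coordinate vector, with respect to the dual basis of $1,\xi,\ldots,\xi^{m-1}$, of the element $\xi^k\in\GF(q^m)$. Hence the state at time $n$ equals $(0,\ldots,0,\gl)$ for some $\gl\in\GF(q)^*$ \emph{if and only if} $\xi^n$ lies in the one-dimensional $\GF(q)$-subspace spanned by $\xi^{m-1}$ — wait, more precisely: $(0,\ldots,0,\gl)$ in dual-basis coordinates corresponds to the element whose pairing with $\xi^i$ vanishes for $i<m-1$ and equals $\gl$ for $i=m-1$; tracing through the dual basis, this is exactly the statement $\xi^n\in\GF(q)^*$ (the state $(0,\ldots,0,1)$ is the coordinate vector of $\xi^0=1$, and scaling by $\gl\in\GF(q)$ gives the coordinate vector of $\gl\in\GF(q)^*$, since multiplication by a scalar in $\GF(q)$ is $\GF(q)$-linear and fixes the basis decomposition). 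Therefore the first $n>0$ with state $(0,\ldots,0,\gl)$, $\gl\in\GF(q)^*$, is precisely the first $n>0$ with $\xi^n\in\GF(q)^*$, which is $\ord_q(\xi)$ by definition, while it is $\delta(f)$ by definition of the restricted period. That finishes the proof.

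I expect the main obstacle to be making the dual-basis bookkeeping airtight — in particular verifying cleanly that the state vector $(u_k,\ldots,u_{k+m-1})$ really is the dual-basis coordinate vector of $\xi^k$, and that the condition "state $=(0,\ldots,0,\gl)$ with $\gl\in\GF(q)^*$" translates to "$\xi^n\in\GF(q)^*$" without an off-by-one in which coordinate is singled out. An alternative, perhaps cleaner, route avoids dual bases entirely: use the companion matrix $A$ of $f$ acting on $\GF(q)^m$, note $A$ represents multiplication by $\xi$ on $\GF(q^m)\cong\GF(q)^m$ in the basis $1,\xi,\ldots,\xi^{m-1}$, so that $A^n e_{m}$ (with $e_m$ the last standard basis vector, which corresponds to $\xi^{m-1}$) is $e_m$ scaled by an element of $\GF(q)$ iff $\xi^{m-1}\cdot\xi^n=\xi^{n+m-1}$ is a $\GF(q)$-multiple of $\xi^{m-1}$ iff $\xi^n\in\GF(q)$; and the recurrence state is exactly $A^n$ applied to the initial state. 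I would present whichever version reads more smoothly, but the companion-matrix version dovetails with the next section of the paper, so I would lean toward that; either way the argument is short once the identification of the recurrence with multiplication by $\xi$ is in place.
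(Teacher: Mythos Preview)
Your preferred companion-matrix route has a genuine slip, and it is exactly the point the paper's proof addresses. You want the companion matrix $A$ simultaneously to (a) represent multiplication by $\xi$ on $\GF(q^m)$ in the basis $1,\xi,\ldots,\xi^{m-1}$, so that $A^n e_{m-1}=\lambda e_{m-1}$ reads $\xi^{n+m-1}=\lambda\xi^{m-1}$, and to (b) advance the recurrence state, so that the state at time $n$ is $A^n$ applied to the initial vector $e_{m-1}$. With the companion matrix $T$ as defined in the paper, (a) holds for $T$ acting on the left on column vectors, whereas the state vectors evolve by $u_{k+1,m}^\top=u_{k,m}^\top T$, i.e.\ by $T$ acting on the right on row vectors --- the transpose action. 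So ``$T^n e_{m-1}=\lambda e_{m-1}$'' and ``$e_{m-1}^\top T^n=\lambda e_{m-1}^\top$'' are different conditions, and you cannot pass from one to the other for free. The paper's proof closes precisely this gap: from the state condition $e_{m-1}^\top T^d=\lambda e_{m-1}^\top$ one gets $e_{m-1}^\top T^{i+d}=\lambda\,e_{m-1}^\top T^i$ for all $i$; the matrix $U$ with rows $e_{m-1}^\top T^i$ ($0\le i\le m-1$) is anti-triangular, hence invertible, so $UT^d=\lambda U$ forces $T^d=\lambda I$, and then (a) gives $\xi^d=\lambda\in\GF(q)$. The invertibility step is not a bookkeeping detail you can omit; it is the content of upgrading a single left-eigenvector condition to the scalar-matrix condition.

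Your first route via the functional $L^{(0)}$ is correct, but it is a repackaging of the same argument rather than an alternative. The map $\Phi(z)=(L^{(0)}(z),L^{(0)}(\xi z),\ldots,L^{(0)}(\xi^{m-1}z))$ sends $\xi^k$ to the state at time $k$ and sends $1$ to $(0,\ldots,0,1)$, so once $\Phi$ is known to be a $\GF(q)$-linear isomorphism the conclusion is immediate. But verifying that $\Phi$ is bijective amounts to checking that the states at times $0,1,\ldots,m-1$ are linearly independent --- and these are exactly the rows of the paper's matrix $U$. So the ``dual-basis bookkeeping'' you worry about and the paper's invertibility step are one and the same.
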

We will prove this theorem in Section~\ref{Sgrm}.
In the next theorem, we collect some important properties of the $q$-order.
\begin{teor}\label{Tqord}
Let $\xi\in\GF(q)$ have degree $m$ over $\GF(q)$, with order $n=\ord(\xi)$ and
$q$-order $d=\ord_q(\xi)$. \\
(i) We have $m\leq d$ and  $d|(q^m-1)/(q-1)$.\\
(ii) We have that
\[ d = n/(n,q-1)\]
and $n=de$, where $e=(n,q-1)$ satisfies $(d,(q-1)/e)=1$.
\end{teor}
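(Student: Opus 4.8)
The plan is to work directly from the definitions of order $n$ and $q$-order $d$ of $\xi\in\GF(q^m)$, together with Lemma~\ref{Ldeg}(i). For part~(i), since $\xi^d\in\GF(q)^*$ has order dividing $q-1$ and $\xi$ has order $n$, the element $\xi^d$ has order $n/(n,d)$, so $n/(n,d)$ divides $q-1$; I will defer the exact value of $d$ to part~(ii) and instead argue the two claimed facts. For $d|(q^m-1)/(q-1)$: set $N=(q^m-1)/(q-1)=1+q+\cdots+q^{m-1}$. Then $\xi^N\in\GF(q)$ because $\xi^{N}=\xi^{(q^m-1)/(q-1)}$ is fixed by the Frobenius $x\mapsto x^q$ — indeed $(\xi^N)^q=\xi^{Nq}=\xi^{N}\cdot\xi^{q^m-1}=\xi^N$ since $\xi^{q^m-1}=1$. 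Hence $N$ is one of the exponents sending $\xi$ into $\GF(q)$, and by minimality $d\mid N$. For $m\le d$: if $\xi^d=\lambda\in\GF(q)$, then $\xi$ is a root of $x^d-\lambda\in\GF(q)[x]$, so the degree of $\xi$ over $\GF(q)$, which equals $m$, is at most $d$.

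For part~(ii), the key identity $d=n/(n,q-1)$ is what drives everything. One inclusion is immediate: writing $e=(n,q-1)$, the element $\xi^{n/e}$ has order $e\mid q-1$, and since $\GF(q)^*$ is the unique subgroup of order $q-1$ in $\GF(q^m)^*$ (it is cyclic), $\xi^{n/e}\in\GF(q)^*$; hence $d\mid n/e$. Conversely, $\xi^d\in\GF(q)^*$ forces $\ord(\xi^d)=n/(n,d)$ to divide $q-1$, so $n/(n,d)\le e$; combined with $d\mid n/e$ (so $(n,d)=d$ and $n/(n,d)=n/d$) this gives $n/d\le e\le n/d$, whence $e=n/d$, i.e. $d=n/e=n/(n,q-1)$. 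Then $n=de$ by definition of $e$. Finally, to see $(d,(q-1)/e)=1$: suppose a prime $\ell$ divides both $d$ and $(q-1)/e$. Then $\ell\mid d$ means $\xi^{d/\ell}\notin\GF(q)$, while I want to derive a contradiction from $\ell\mid(q-1)/e$. Since $e=(n,q-1)$ and $\ell\mid(q-1)/e$, the prime $\ell$ does not divide $n/e$ to a higher power than it divides... — more carefully, $v_\ell(n)=v_\ell(e)$ because if $v_\ell(n)>v_\ell(e)$ then, using $\ell\mid q-1$, we would get $v_\ell((n,q-1))>v_\ell(e)$, contradicting $e=(n,q-1)$. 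But $d=n/e$, so $v_\ell(d)=v_\ell(n)-v_\ell(e)=0$, contradicting $\ell\mid d$. Hence no such $\ell$ exists and $(d,(q-1)/e)=1$.

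The main obstacle, and the step I would write most carefully, is the coprimality claim $(d,(q-1)/e)=1$: it is the only place where the interaction between the arithmetic of $n$ and of $q-1$ is genuinely used, and the valuation bookkeeping (that every prime dividing both $n$ and $q-1$ contributes its full $n$-valuation to $e=(n,q-1)$) is easy to state loosely and get slightly wrong. Everything else — parts~(i) and the identity $d=n/(n,q-1)$ — reduces to the cyclicity of $\GF(q^m)^*$, uniqueness of its subgroup of order $q-1$, and the elementary fact that in a cyclic group of order $n$ the element $g^k$ has order $n/(n,k)$, so those steps are routine. I would also note in passing that part~(i)'s bound $d\mid(q^m-1)/(q-1)$ is consistent with $d=n/e$ since $n\mid q^m-1$ and $e=(n,q-1)$; this cross-check is worth a sentence but needs no separate argument.
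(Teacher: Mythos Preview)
Your proof is correct and follows essentially the same route as the paper: for (i) both argue $m\le d$ via the minimal polynomial dividing $x^d-\eta$ and $d\mid(q^m-1)/(q-1)$ via $\xi^{(q^m-1)/(q-1)}\in\GF(q)$, and for (ii) both reduce to the characterisation $\xi^k\in\GF(q)\iff n\mid k(q-1)$. The only difference is cosmetic: the paper obtains $(d,(q-1)/e)=1$ in one line from the standard fact that $n/(n,q-1)$ and $(q-1)/(n,q-1)$ are coprime (which it already uses inside the divisibility argument), whereas you reprove this same fact via $\ell$-adic valuations; you could shorten your write-up by simply invoking that $a/\gcd(a,b)$ and $b/\gcd(a,b)$ are always coprime.
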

\bpf
(i)
If $d=\ord_q(\xi)$ and $\xi^d=\eta \in\GF(q)$, then $\xi$ is a zero of the
polynomial $x^d-\eta$ in $\GF(q)[x]$.
Hence the minimal polynomial of $\xi$, of degree $m$ by our
assumptions, divides $x^d-\eta$, whence $m\leq d$.

The collection of all integers $k\geq0$ for which $\xi^k\in\GF(q)$ is an ideal,
hence is of the form $d\bZ$. Now since $\xi\in\GF(q^m)^*$, we have that
$\varphi=\xi^{(q^m-1)/(q-1)}$ satisfies $\varphi^{q-1}=1$, hence
$\varphi\in\GF(q)$. We conclude that $d|(q^m-1)/(q-1)$.

(ii)We have $\xi^d\in\GF(q)$ if and only if $\xi^{d(q-1)}=1$, that is, if and
only if $n|d(q-1)$, or, equivalently, if and only if $n/(n,q-1)$ divides
$d(q-1)/(n,q-1)$.
Since $n/(n,q-1)$ and $(q-1)/(n,q-1)$ are relatively prime, the latter happens
if and only if $n/(n,q-1)$ divides $d$.
If we now write $e=(n,q-1)$, then $n=de$ and now the condition on $e$ follows
from the expression for $d$.
\epf
We will refer to our next theorem as the {\em extension theorem\/}. It enables
us to extend a nonstandard subgroup to a bigger one.
\begin{teor}\label{Text} Let $\phi$ be nonstandard of degree $m$ over $\GF(q)$.
Then every $\xi\in\GF(q)^* \Gphi$ for which $\Gphi\subseteq \Gxi$
(so with $\xi=\gl \phi$ and $\phi=\xi^i$ for some $\gl\in\GF(q)^*$ and integer
$i$) is again nonstandard of degree $m$ over $\GF(q)$, with the same $q$-order 
as $\phi$;
moreover, every $q$-polynomial $L(x)$ of $q$-degree $m$ over $\GF(q^m)$ 
for which $L(\Gphi)=\Gphi$ satisfies $L(\Gxi)=\Gxi)$.
\end{teor}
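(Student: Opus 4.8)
The plan is to exploit the fact that $\xi$ and $\phi$ generate the same group up to a scalar: since $\xi\in\GF(q)^*\Gphi$ with $\phi=\xi^i$, we have $\langle\phi\rangle\subseteq\Gxi$, and since $\xi=\lambda\phi$ with $\lambda\in\GF(q)^*$, in fact $\Gxi\subseteq\GF(q)^*\Gphi$, so the two cyclic groups have the same order up to the factor $(n,q-1)$; more usefully, multiplication by $\lambda$ carries powers of $\phi$ to powers of $\xi$. The key structural point is Lemma~\ref{Ldeg}(ii): because $\phi\in\Gxi$ and $\phi$ has degree $m$ over $\GF(q)$, the element $\xi$ also has degree $m$, so $1,\xi,\ldots,\xi^{m-1}$ is a basis of $\GF(q^m)/\GF(q)$ and every $q$-polynomial of $q$-order $m$ over $\GF(q^m)$ is determined by its action on $\Gxi$ (indeed on $1,\xi,\ldots,\xi^{m-1}$). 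So the whole theorem reduces to transferring a single nonstandard $q$-polynomial for $\phi$ to one for $\xi$.

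First I would take a nonstandard $q$-polynomial $L(x)$ of $q$-order $m$ over $\GF(q^m)$ with $L(\langle\phi\rangle)=\langle\phi\rangle$, which exists by Theorem~\ref{Tqpol}(ii) applied to $\phi$. The claim ``every such $L$ satisfies $L(\Gxi)=\Gxi$'' I would prove as follows: write $\xi=\lambda\phi$ and note $\Gxi=\{\lambda^k\phi^k : k\ge 0\}$. Since $L$ is $q$-linear and $\lambda\in\GF(q)$, we get $L(\lambda^k\phi^k)=\lambda^k L(\phi^k)$, and $L(\phi^k)\in\langle\phi\rangle$ by hypothesis, say $L(\phi^k)=\phi^{s}$. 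Hence $L(\xi^k)=\lambda^k\phi^{s}$. To see this lands back in $\Gxi$ I would use Theorem~\ref{Tqord}: writing $n=\ord(\xi)$, $e=(n,q-1)$, $d=n/e$, the group $\Gxi$ is the direct product of the order-$e$ group $\langle\lambda^{?}\rangle$ inside $\GF(q)^*$ and the order-$d$ ``cyclic-code-like'' part, and $\langle\phi\rangle$ sits inside $\Gxi$ as the appropriate subgroup; the map $x\mapsto\lambda^k x$ on the $\langle\phi\rangle$-part is exactly what realizes $\xi^k$. Since $L$ permutes $\langle\phi\rangle$ and respects this product decomposition (scalars in $\GF(q)$ pass through $L$ by linearity), $L$ permutes $\Gxi$; as $L$ is injective on $\GF(q^m)$ and $L(\Gxi)\subseteq\Gxi$ with $\Gxi$ finite, equality follows.

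Next I would check that $L$ is still \emph{nonstandard for $\xi$}, i.e.\ not of the form $x^{q^j}$. Suppose $L(x)=x^{q^j}$ on $\GF(q^m)$. Then $L(\phi^i)=\phi^{iq^j}$ for all $i$, and since $\phi$ has degree $m$ this would force, by Lemma~\ref{LLstand} applied to $\phi$ (with $1,\phi,\ldots,\phi^{m-1}$ a basis), that $L$ is standard for $\phi$ — contradiction. Hence the same $L$ witnesses that $\xi$ is nonstandard of degree $m$, via Theorem~\ref{Tqpol}(ii). Finally, the $q$-orders agree: $\ord_q(\xi)$ and $\ord_q(\phi)$ are both, by Theorem~\ref{Tqord}(ii), equal to $n/(n,q-1)$ where $n$ is the common order datum — more directly, $\xi^d\in\GF(q)\iff \phi^d=(\xi/\lambda)^d=\xi^d\lambda^{-d}\in\GF(q)$ since $\lambda\in\GF(q)$, so $\xi^d\in\GF(q)\iff\phi^d\in\GF(q)$, giving $\ord_q(\xi)=\ord_q(\phi)$ immediately.

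**The main obstacle** I anticipate is the bookkeeping in the step $L(\Gxi)=\Gxi$: one must be careful that ``$L$ permutes $\langle\phi\rangle$'' really does upgrade to ``$L$ permutes $\Gxi$'', since $\Gxi$ may be strictly larger than $\langle\phi\rangle$ (by a factor dividing $q-1$). The cleanest route is to avoid the product decomposition and argue purely via the basis: $L$ is a $q$-polynomial over $\GF(q^m)$, so it is $\GF(q)$-linear and determined by $L(1),L(\phi),\ldots,L(\phi^{m-1})$; express each $\xi^k$ in the basis $1,\phi,\ldots,\phi^{m-1}$ with $\GF(q)$-coefficients, apply $L$, and observe the result is a $\GF(q)$-combination of elements of $\langle\phi\rangle$ that must equal $\xi^{\pi(k)}$ for the appropriate $\pi$ — here one invokes Theorem~\ref{Tauto}/Corollary~\ref{Cqpol} machinery for $\phi$ to identify $L$ with a permutation automorphism of the cyclic code $C(\ord(\phi),q,\{\phi\})$, and then reads off that the induced action fixes $\Gxi$ as well. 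I would present it the first way (scalar passes through linearity) as it is shortest and avoids re-deriving the code correspondence.
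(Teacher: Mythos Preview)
Your overall plan is sound, and several pieces are actually cleaner than the paper's: the $q$-order equality via $\xi^d\in\GF(q)\Longleftrightarrow\phi^d=\xi^d\lambda^{-d}\in\GF(q)$ is the right one-line argument (the paper detours through order computations), and invoking Lemma~\ref{Ldeg}(ii) for the degree and Lemma~\ref{LLstand} for nonstandardness is exactly what one should do.

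The genuine gap is in the step $L(\Gxi)=\Gxi$, which you correctly flag as the obstacle but do not close. Writing $L(\xi^k)=\lambda^k\phi^{\pi(k)}$ does not by itself show this element lies in $\Gxi$, and your alternative route (express $\xi^k$ in the basis $1,\phi,\ldots,\phi^{m-1}$ and apply $L$) only shows $L(\xi^k)$ is a $\GF(q)$-combination of elements of $\Gphi$, hence in $\GF(q^m)$, not a priori in $\Gxi$. The ``$\langle\lambda^{?}\rangle$'' in your sketch is precisely where the argument needs to be pinned down. What is missing is the decomposition $\Gxi=K\Gphi$ with $K=\Gxi\cap\GF(q)^*$ a genuine \emph{subgroup} of $\GF(q)^*$: this holds because $\Gphi\subseteq\Gxi\subseteq\GF(q)^*\Gphi$ (the first inclusion by hypothesis, the second since every $\xi^k=\lambda^k\phi^k$), and any subgroup sandwiched between $\Gphi$ and $\GF(q)^*\Gphi$ is a union of $\Gphi$-cosets indexed by a subgroup of $\GF(q)^*$. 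Once you have $\Gxi=K\Gphi$ with $K\subseteq\GF(q)^*$, your own linearity argument finishes in one line: $L(\Gxi)=L(K\Gphi)=KL(\Gphi)=K\Gphi=\Gxi$. The paper establishes this same decomposition (with rather more explicit bookkeeping on orders) and then concludes exactly this way, so your approach and the paper's converge once the gap is filled.
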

\bpf
We begin by observing that $\GF(q)^* \Gphi$ is a multiplicative subgroup 
of~$\\GF(q^m)^*$; since $\GF(q^m)^*$ is cyclic, all its subgroups are also 
cyclic, and hence there exists an element $\gth\in\GF(q^m)^*$
such that $\GF(q)^* \Gphi=\langle \gth\rangle$. 

Now let $n=\ord(\phi)$ and $d=\ord_q(\phi)$ denote the order and $q$-order of
$\phi$, respectively.
According to Theorem~\ref{Tqord}, we have that $d=n/(n,q-1)$. Write 
$k=(q-1)/(n,q-1)$; for later use, we note that $(d,k)=1$ and $k|q-1$.
Now $\GF(q)^* \Gphi=\GF(q)^*\{1, \phi, \ldots, \phi^{d-1}\}$ has size $(q-1)d$,
hence $\gth$ has order $d(q-1)=nk$. So by Theorem~\ref{Tqord}, $\gth$ has 
$q$-order $\ord_q(\gth)= d(q-1)/(d(q-1),(q-1)) = d=\ord_q(\phi)$, so $\gth$ and
$\phi$ have the same $q$-order. 

Now let $L$ be a $q$-polynomial of $q$-degree $m$ over $\GF(q^m)$ that fixes
$\Gphi$, say with
$L(\phi^i)=\phi^{\pi(i)}$ for some permutation $\pi\in S_n$. We claim that
$L(\langle\gth \rangle)=\langle\gth \rangle$. To see this, first note that
if $\ga\in\GF(q)$, then $L(\ga\phi^i)=\ga L(\phi^i)$; hence 
$L(\langle\gth \rangle)=L(\GF(q)^* \Gphi)\subseteq \GF(q)^* \Gphi=\langle\gth
\rangle$. 
Now, suppose that $L(\ga \phi^i)=L(\gb\phi^j)$ for some $\ga,\gb\in\GF(q)^*$ and
some integers $i,j$. Then 
\[ \ga \phi^{\pi(i)} = \ga L(\phi^i)= L(\ga \phi^i)
= L(\gb\phi^j) = \gb L(\phi^j) = \gb\phi^{\pi(j)}, \]
and hence 
\[ \gc=\ga/\gb = \phi^{\pi(j)-\pi(i)}\in\GF(q)^*.\]
Therefore,
\[L(\phi^j)=\phi^{\pi(j)} = \gc \phi^{\pi(i)} 
= \gc L(\phi^i) = L(\gc \phi^i)=L(\phi^{\pi(j)-\pi(i)+i}).\]
Since $L(\Gphi)=\Gphi$, we conclude that 
\[ \phi^j = \phi^{\pi(j)-\pi(i)+i},\]
hence 
\[ \gc \phi^i = \phi^{\pi(j)-\pi(i)+i}=\xi^j,\]
so that $\ga\phi^i=\gb\phi^j$. We have shown that $L$ is one-to-one on
$\GF(q)^* \Gphi=\langle\gth \rangle$, and hence $L(\langle\gth
\rangle)=\langle\gth\rangle$. 

Next, suppose that $\Gphi\subseteq \Gxi\subseteq \GF(q)^* \Gxi=
\langle\gth\rangle$. 
Then $n=\ord(\phi)|\ord(\xi)$ and $\ord(\xi)|\ord(\gth)=nk$, hence there are 
integers $s,t$ with $k=st$ such that $\ord(\xi)=nk/s=nt$. Moreover,
since $(nt,q-1)=(n,q-1)(dt,k)=(n,q-1)t$, we conclude from Theorem~\ref{Tqord} 
that $\xi$ has $q$-order
$\ord_q(\xi)=nt/(nt,q-1)=d$. So $\xi$ and $\phi$ have the same $q$-order.

Finally, since $\Gphi\subseteq \Gxi$, we have that $\Gxi=K \Gphi$ with $K$ the
subgroup of $\GF(q)^*$, of order $nt/d=et$, where $e=(n,q-1)$
(since $nt=det | d(q-1)$, we have $et|(q-1)$ and such a subgroup $K$ does 
indeed exist);
in fact, we have $K=\Gxi\cap \GF(q)^*$.
To see this, first note that since $\phi^d\in\GF(q)^*$ has order 
$e=n/d=(n,q-1)$ and $K$ has order $et$, we have that $\langle \phi^d
\rangle\subseteq K$. Hence
\[K\Gphi=K\langle \phi^d \rangle \{1,\phi,\ldots,
\phi^{d-1}\}=K\{1,\phi,\ldots, \phi^{d-1}\},
\]
so that $K=K \Gphi\cap \GF(q)^*$ and $K\Gphi$ has size
$|K|d=etd=nt=\ord(\xi)$, hence $K\Gphi=\Gxi$.

As a consequenmce, if $L$ is a $q$-polynomial of $q$-degree $m$ over $\GF(q^m)$
that fixes $\Gphi$, then
$L(\Gxi)=L(K \Gphi)=K L(\Gphi) \subseteq K \Gphi=\Gxi$; moreover since $L$ is
one-to-one on $\GF(q)^* \{1,\phi,\ldots, \phi^{d-1}\}=\langle\gth\rangle$ and
$\Gxi\subseteq \langle\gth\rangle$, 
we conclude that in fact $L(\Gxi)=\Gxi$.
Now the desired conclusion follows from Theorem~\ref{Tqpol}.
\epf
%
%
\begin{cor}\label{Celts}
If $\phi$ is nonstandard of degree $m$ over $\GF(q)$ and if an element $\xi$ 
in some extension of $\GF(q)$ has
the same order as $\phi$, that is, if $\Gxi=\Gphi$, then $\xi$ is again
nonstandard of degree~$m$ over $\GF(q)$, with the same order and $q$-order as
$\phi$.
\end{cor}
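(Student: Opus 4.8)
The plan is to derive this directly from the extension theorem (Theorem~\ref{Text}) by observing that the hypothesis $\Gxi=\Gphi$ is just the special case of that theorem in which the ambient group $\GF(q)^*\Gphi$ need not be enlarged beyond $\Gphi$ itself—more precisely, the case where the parameter $t$ of the proof of Theorem~\ref{Text} equals $1$. First I would note that $\Gxi=\Gphi$ trivially implies both inclusions $\Gphi\subseteq\Gxi$ and $\Gxi\subseteq\GF(q)^*\Gphi$, so Theorem~\ref{Text} applies with $\xi$ in place of the generic element; it yields at once that $\xi$ is nonstandard of degree $m$ over $\GF(q)$ with the same $q$-order as $\phi$. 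Since $\ord(\xi)=|\Gxi|=|\Gphi|=\ord(\phi)$, $\xi$ also has the same order, and this completes everything claimed.

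The one point that needs a word of care is the phrase ``an element $\xi$ in some extension of $\GF(q)$'': a priori $\xi$ is only assumed to generate the same cyclic group as $\phi$, so I should first confirm that $\xi$ lies in $\GF(q^m)$ and has degree $m$ over $\GF(q)$, so that Theorem~\ref{Text} is literally applicable. This follows from Lemma~\ref{Ldeg}(ii): $\phi$ has degree $m$ over $\GF(q)$ and $\phi\in\Gphi=\Gxi$, hence $\xi$ has degree $m$ over $\GF(q)$ as well (and in particular $\xi\in\GF(q^m)$). Alternatively, one can invoke Lemma~\ref{Ldeg}(i): since $\xi$ and $\phi$ have the same order $n$, and $m$ is the least $t$ with $n\mid q^t-1$ for $\phi$, the same holds for $\xi$.

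I do not expect any real obstacle here; the corollary is essentially a restatement of the $t=1$ instance of Theorem~\ref{Text}, and the only substantive check—that $\xi$ genuinely has degree $m$ over $\GF(q)$—is dispatched by Lemma~\ref{Ldeg}. If one preferred a self-contained argument avoiding Theorem~\ref{Text}, the alternative would be to use Theorem~\ref{Tqpol}(ii) directly: from $\Gphi=\Gxi$ and the existence of a nonstandard $q$-polynomial $L$ of $q$-order $m$ over $\GF(q^m)$ with $L(\Gphi)=\Gphi$, one immediately gets $L(\Gxi)=\Gxi$; the only thing to verify is that $L$ is still \emph{nonstandard for $\xi$}, i.e.\ not of the form $cx^{q^j}$, and this is automatic because being of that form is a property of $L$ as a polynomial, independent of which generator ($\phi$ or $\xi$) one uses to witness it. Either route gives the result in a few lines.
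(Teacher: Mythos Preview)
Your proposal is correct and follows the same approach as the paper: the corollary is stated immediately after Theorem~\ref{Text} with no separate proof, precisely because it is the special case $\Gxi=\Gphi$ of that theorem, exactly as you describe. Your extra care in verifying via Lemma~\ref{Ldeg} that $\xi$ has degree~$m$ is fine but not strictly needed, since membership of $\xi$ in $\GF(q)^*\Gphi\subseteq\GF(q^m)^*$ already places $\xi$ within the scope of Theorem~\ref{Text}, whose conclusion includes the degree-$m$ assertion.
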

Compare this ``extension'' results to Theorem~3.4 from \cite{bn2}.

Next, 
we present a technique to ``lift'' the nonstandardness of degree $m$
over a subfield $\bFqn$ of $\GF(q)$ to nonstandardness over $\GF(q)$, 
of the same order and sub-order, under certain conditions on $q_0$ and $q$. 
We will refer to this Theorem as the {\em lifting theorem\/}.
\begin{teor}\label{Tlift}
Let $q_0$ and $q=q_0^t$ be prime powers, and let $m$ be a positive integer for
which $(m,t)=1$. If $\xi$ is nonstandard of degree $m$ over $\bF_{q_0}$, then 
$\xi$ also is nonstandard of degree $m$ over $\GF(q)$, of the same order and 
with the $q$-order of $\xi$ equal to its $q_0$-order.
\end{teor}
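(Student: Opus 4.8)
The plan is to transport the nonstandard $q_0$-polynomial that witnesses the nonstandardness of $\xi$ over $\GF(q_0)$ to a $q$-polynomial that does the same job over $\GF(q)$, exploiting that the only linear relations we need have their coefficients in the common subfield $\GF(q_0)$. So first I would put $n=\ord(\xi)$ and, using the definition of nonstandardness together with Theorem~\ref{Tqpol}, fix a nonstandard $q_0$-polynomial $L$ over $\GF(q_0^m)$ with $L(\Gxi)=\Gxi$; replacing $L$ by $L/L(1)$ (which changes neither the standardness of $L$ nor the identity $L(\Gxi)=\Gxi$, since $\Gxi$ is a group and $L(1)\in\Gxi$) we may assume $L(1)=1$. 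Then $L$ permutes $\Gxi=\{1,\xi,\dots,\xi^{n-1}\}$, so there is a permutation $\pi\in S_n$ with $\pi(0)=0$ and $L(\xi^i)=\xi^{\pi(i)}$ for all $i$. Since $\xi$ has degree $m$ over $\GF(q_0)$, Lemma~\ref{Ldeg}(i) tells us $m$ is exactly the multiplicative order of $q_0$ modulo $n$.

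The first place the hypothesis $(m,t)=1$ enters is the degree count. By Lemma~\ref{Ldeg}(i), $\xi$ has degree $m$ over $\GF(q)=\GF(q_0^t)$ iff $m$ is the least $s\ge1$ with $n\mid q^s-1=q_0^{ts}-1$, i.e.\ (by the previous paragraph) with $m\mid ts$; as $(m,t)=1$ this means $m\mid s$, so the least such $s$ is $m$. In particular $1,\xi,\dots,\xi^{m-1}$ is a $\GF(q)$-basis of $\GF(q^m)$, so I can define a $q$-polynomial $M$ over $\GF(q^m)$ by prescribing $M(\xi^i)=\xi^{\pi(i)}$ for $i=0,\dots,m-1$ and extending $\GF(q)$-linearly.

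Next I would verify that $M$ behaves as required. For any $j$, writing $\xi^j=\sum_{i=0}^{m-1}a_i\xi^i$ with $a_i\in\GF(q_0)\subseteq\GF(q)$, both $L$ (being $\GF(q_0)$-linear) and $M$ (being $\GF(q)$-linear) send $\xi^j$ to $\sum_{i=0}^{m-1}a_i\xi^{\pi(i)}$, so from $L(\xi^j)=\xi^{\pi(j)}$ we get $M(\xi^j)=\xi^{\pi(j)}$ for all $j\ge0$, whence $M(\Gxi)=\Gxi$. By Theorem~\ref{Tqpol}(i) (equivalently Theorem~\ref{Trec}) the sequence $u_k=M(\xi^k)$ is a solution of the recurrence~(\ref{Erec}) whose characteristic polynomial is the minimal polynomial $f$ of $\xi$ over $\GF(q)$; it has $u_0=1$, runs through all $n$ distinct elements of $\Gxi$, hence has smallest period $n=|\Gxi|$, and so generates the $f$-subgroup $\Gxi$. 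To see the generation is nonstandard I would rule out $(u_0,\dots,u_{m-1})=(1,\eta,\dots,\eta^{m-1})$ for every zero $\eta$ of $f$: the zeros of $f$ are the $\xi^{q^a}$ with $a=0,\dots,m-1$, and $\xi^{\pi(i)}=\xi^{iq^a}$ for $i<m$ would give $L(\xi^i)=\xi^{iq^a}=\xi^{iq_0^{r}}$ for $i<m$, where $r=ta\bmod m$ (using $q^a=q_0^{ta}$ and that $q_0$ has order $m$ modulo $n$); by Lemma~\ref{LLstand} applied with the prime power $q_0$ this forces $L(x)=x^{q_0^{r}}$ on $\GF(q_0^m)$, contradicting the nonstandardness of $L$. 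Hence $\Gxi$ is a nonstandard $f$-subgroup, i.e.\ $\xi$ is nonstandard of degree $m$ over $\GF(q)$.

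It remains to match the invariants. The order of $\xi$ is $n$ regardless of the base field. For the $q$-order, Theorem~\ref{Tqord}(ii) gives $\ord_q(\xi)=n/(n,q-1)$ and $\ord_{q_0}(\xi)=n/(n,q_0-1)$, so it is enough to prove $(n,q-1)=(n,q_0-1)$: one divisibility is clear since $q_0-1\mid q-1$, and for the other, writing $e=(n,q-1)$, we have $e\mid q_0^m-1$ (as $e\mid n$) and $e\mid q_0^t-1$ (as $e\mid q-1$), so the order of $q_0$ modulo $e$ divides $\gcd(m,t)=1$, whence $e\mid q_0-1$; this is the second and last use of $(m,t)=1$. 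I do not expect a real obstacle here: the crux is the elementary observation in the third paragraph that, because the structure constants of $1,\xi,\dots,\xi^{m-1}$ already lie in $\GF(q_0)$, the map $M$ forced to agree with $L$ on this basis automatically agrees with it on all of $\Gxi$. The only points needing care are keeping the degree of $\xi$ over $\GF(q)$ equal to $m$ and reducing the Frobenius exponents $q_0^{ta}$ modulo $n$ correctly, both of which rely squarely on the coprimality of $m$ and $t$.
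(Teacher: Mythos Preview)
Your argument is correct, and it reaches the same conclusion as the paper, but by a different mechanism. The paper lifts the $q_0$-polynomial $L'(x)=\sum_j L_j x^{q_0^{\,j}}$ \emph{explicitly}: choosing $u$ with $ut\equiv 1\pmod m$, one has $x^{q_0}=x^{q^u}$ on $\GF(q_0^m)$, so the $q$-polynomial $L(x)=\sum_j L_j x^{q^{ju}}$ agrees with $L'$ on all of $\GF(q_0^m)\supseteq\Gxi$, and ``nonstandard'' is then immediate because the coefficient pattern is preserved and $\{ju\bmod m\}=\{j\bmod m\}$. You instead define $M$ abstractly by its values on the basis $1,\xi,\dots,\xi^{m-1}$ and then observe that, since the structure constants expressing $\xi^j$ in this basis already lie in $\GF(q_0)$, your $\GF(q)$-linear $M$ and the original $\GF(q_0)$-linear $L$ coincide on $\Gxi$. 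This is a genuinely different route: it avoids the explicit Frobenius inverse $u$, at the cost of a slightly longer nonstandardness check (your reduction of $q^a=q_0^{ta}$ modulo $n$ and the appeal to Lemma~\ref{LLstand}, where the paper just reads it off the formula). Either way the underlying phenomenon is the same---both constructions produce the \emph{same} linear map on $\GF(q_0^m)$---and your computations of the degree and of $(n,q-1)=(n,q_0-1)$ match the paper's exactly.
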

\bpf
To prove the above claim, we proceed as follows. First, 
we note that $\bF_{q_0^m}\subseteq \bF_{q^i}$ with $q=q_0^t$
holds precisely when $m|ti$, hence precisely when $m|i$. So $\xi$ also has 
degree~$m$ over $\GF(q)$.

Next,
if $\xi$ has degree $m$ over $\bFqn$, then $\xi\in\bF_{q_0^m}$; hence if $\xi$ 
has order $n$, then $n|q_0^m-1$. Now according to Theorem~\ref{Tqord}, the 
$q_0$-order of $\xi$ 
is $d=n/(n,q_0-1)$ and its $q$-order is $n/(n,q-1)$. It is well-known and
easy to prove that 
\[(q_0^m-1,q_0^t-1)=q_0^{(m,t)}-1=q_0-1.\]
Hence 
\[(n,q-1)=(n,q_0^t-1)=(n,q_0^m-1,q_0^t-1)=(n,(q_0^m-1,q_0^t-1))=(n,q_0-1),\]
so that the $q_0$-order and $q$-order of $\xi$ are equal.

Now, since $(m,t)=1$, there 
is an integer $u\geq1$ such that $ut\equiv1 \bmod m$. We claim that 
\beql{Eq0q} x^{q_0} = x^{q^u}\eeql
holds for all $x\in\bF_{q_0^m}$. Indeed, since $q=q_0^t$, 
we have that (\ref{Eq0q}) holds if and only if $x^{q^u/q_0}=x^{q_0^{tu-1}}=x$ 
for all $x\in\bF_{q_0^m}$, which is the case if and only if
\[tu-1\equiv0\bmod sm,\]
which is how we have choosen $u$.
Note also that $(u,m_)=1$, hence
\beql{Eqpow} \{0,u,2u,\ldots, (m-1)u\} = \{0,1,2, \ldots, m-1\} \bmod m.\eeql

Now $\xi$ is nonstandard of degree $m$ over $\bF_{q_0}$,
so there is some nonstandard $q_0$-polynomial 
\[ L'(x)=L_0x+L_1x^{q_0}+\cdots + L_{m-1}x^{q_0^{m-1}}\]
of $q$-degree $m$ in $\bF_{q_0^m}[x]$ 
for which $L(\Gxi)=\Gxi$.
Define the ``lifted'' $q$-polynomial $L(x)\in\GF(q^m)[x]$ by
\[L(x)=L_0x+L_1x^{q^u}+\cdots + L_{m-1}x^{q^{(m-1)u}}.\]
According to (\ref{Eq0q}), we have that $L'(x)=L(x)$ on $\bF_{q_0^m}$; in
particular, we have that $L(\Gxi)=L'(\Gxi)=\Gxi$.
Moreover, obviously $L(x)$ is nonstandard if and only if $L'(x)$ is
nonstandard.
\epf
\begin{rem}\label{Rlift}
Note that if $\xi$ has the same minimal polynomial over two fields $\bK$ and
$\bL$, then by definition 
$\xi$ is nonstandard over $\bK$ if and only if $\xi$ is nonstandard
over $\bL$. 
The proof of Theorem~\ref{Tlift} can also be interpreted as showing 
that under the conditions of the
theorem, the minimal polynomial 
\[f(x)=\prod_{j=0}^{m-1}(x-\xi^{q_0^j})=\prod_{j=0}^{m-1}(x-\xi^{q^j})\]
of $\xi$ over $\bFqn$ and over $\GF(q)$ are the same.
\end{rem}
We can now use lifting and extension to construct nonstandard elements of
degree $m$ over $\GF(q)$ with $q$-order $d$, where $q=q_0^t$ with $(m,t)=1$, 
from a nonstandard element of degree $m$ over $\bFqn$ and $q_0$-order also $d$,
by applying Theorems \ref{Tlift}~and~\ref{Text}. We will use this method to
construct generalisations of Example~\ref{E1}.
\bex{E3}
Let $q_0=p^s$ with $p$ prime, let $m\geq2$ and $q_0^m>4$. Take $q=p^r$ with
$r=st$, and let $(t,m)=1$. Finally, let $\xi\in\bF_{q_0^m}$ be primitive, so
$\xi$ has order $n=q_0^m-1$. In Example~\ref{E1}, we have shown that $\xi$ is
nonstandard of degree $m$ over $\bFqn$; its $q_0$-order obviously is
$(q_0^m-1)/(q_0-1)$.
So, according to Theorem~\ref{Tlift}, $\xi$ is also nonstandard over $\GF(q)$,
of order $n=q_0^m-1$ and with $q$-order $d=(q_0^m-1)/(q_0-1)$. 

Now we can use Theorem~\ref{Text} to construct nonstandard
elements $\phi$  of $q$-order $d=(q_0^m-1)/(q_0-1)$ and order 
$N=d(q_0-1)k$ over $\GF(q)$, for all $k$ dividing $(q-1)/(q_0-1)$. All these
elements $\phi$ are powers of an element $\gth\in\GF(q^m)$ of order $d(q-1)$.
These examples all have degree $m$ over $\GF(q)$, have $q$-order
$d=(q_0^m-1)/(q_0-1)>m$ and order $n=de$ with $q_0-1|e|q-1$.

Obviously, each nonstandard element $\xi$ of degree $m$ over $\GF(q)$ with order
$n=de$ and $q$-order $d=(q_0^m-1)/(q_0-1)$ where $q_0-1|e$ and $q_0^m>4$ can be 
obtained in this way. Indeed, then $\phi=\xi^{e/(q_0-1)}$ is primitive of 
degree $m$ over $\bF_{q_0}$ and nonstandard since $q_0^m>4$, 
so $\xi$ can be obtained from the nonstandard $\phi$ by lifting and extension.

We will refer to this class of examples as {\em type II examples\/}.
\eex

%

%
%
\section{\label{Sgrm}A subgroup in $\PGL(m,q)$ related to a nonstandard
element over $\GF(q)$}
In this section, we will assume that $\xi\in\GF({q^m})$ is of degree
$m$ over $\GF(q)$, where $q=p^r$ for a prime $p$,
with order $\ord(\xi)=n$ and $q$-order $\ord_q(\xi)=d$. So 
$\eta=\xi^d\in\GF(q)^*$, and $n=de$, where $e$ is the order of $\eta$.
Furthermore, we will assume 
that $\xi$ has minimal polynomial
\[ f(x)=x^m-\gs_{m-1}x^{m-1}-\cdots -\gs_1 x-\gs_0\]
over $\GF(q)$.

Let the matrix 
\[ T=T_f=
\left(
\begin{array}{ccccccc}
0&0&0&\cdots & 0&0&\gs_0\\
1&0&0&\cdots& 0&0&\gs1\\
0&1&0&\cdots& 0& 0&\gs_2\\
\vdots&&& & & &\vdots \\
0&0&0&\cdots&0&1&\gs_{m-1}
\end{array}
\right).
\]
denote the {\em companion matrix\/} of $f$, the matrix representation of the
map $\mu: a(x)\mapsto xa(x) \bmod f(x)$ on~$\GF(q)[x] \bmod f(x)$ 
(multiplication by $x$ modulo $f(x)$
with respect to the basis $1, x, \ldots, x^{m-1}$. 
Equivalently, $T$ is the matrix representation of multiplication by $\xi$ on
$\GF(q^m)$ with respect to the basis $1,\xi,\ldots, \xi^{m-1}$ of $\GF(q^m)$,
considered as vectorspace over $\GF(q)$.
Since $f(x)|x^d-\eta$ with $\eta\in\GF(q)$ (or simply since $\xi^d=\eta$), 
we have that
\beql{ETd} T^d = \eta I.\eeql
We now first restate and prove Theorem~\ref{Tresper} from Section~\ref{Spre}.
\begin{teor}\label{Trp}
If $f(x)\in\GF(q)$ is irreducible over $\GF(q)$ and if $\xi$ is a zero of $f$,
then the restricted period $\delta(f)$ of $f$ and the $q$-order $\ord_q(\xi)$
of $\xi$ satisfy $\delta(f)=\ord_q(\xi)$.
\end{teor}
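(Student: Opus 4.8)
The plan is to translate both quantities into statements about the companion matrix $T=T_f$ acting on $\GF(q)[x]\bmod f(x)$, where $\xi$ corresponds to the class of $x$. Recall that the restricted period $\delta(f)$ is the least $n>0$ such that the $f$-sequence $u$ with initial vector $(u_0,\ldots,u_{m-1})=(0,\ldots,0,1)$ returns to a state of the form $(0,\ldots,0,\lambda)$ with $\lambda\in\GF(q)^*$. In the companion-matrix picture, the state vector $(u_k,\ldots,u_{k+m-1})^{\top}$ is obtained from the initial state by applying $T^k$ (or its transpose, depending on the state convention chosen in (\ref{Erec}) — I would fix the convention so that one step of the recurrence is multiplication by $T$). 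The initial state $(0,\ldots,0,1)^{\top}$ is the last standard basis vector $\mathbf{e}_{m-1}$, and under the identification of $\GF(q)[x]\bmod f(x)$ with $\GF(q^m)$ sending $x\mapsto\xi$, $\mathbf{e}_{m-1}$ corresponds to $\xi^{m-1}$. Applying $T$ corresponds to multiplication by $\xi$; hence after $n$ steps the state corresponds to $\xi^{m-1+n}$ (as an element of $\GF(q^m)$, re-expanded in the basis $1,\xi,\ldots,\xi^{m-1}$).

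With this dictionary, the condition ``the state has the form $(0,\ldots,0,\lambda)$ with $\lambda\in\GF(q)^*$'' becomes ``$\xi^{m-1+n}=\lambda\,\xi^{m-1}$ for some $\lambda\in\GF(q)^*$'', i.e. $\xi^n\in\GF(q)^*$. The least positive $n$ with this property is, by definition, exactly the $q$-order $\ord_q(\xi)$. Therefore $\delta(f)=\ord_q(\xi)$, which is the assertion. I would also remark that the cancellation of the factor $\xi^{m-1}$ is legitimate since $\xi\neq0$, and that $\lambda\neq0$ is automatic because $\xi^n\neq0$, so the ``$\lambda\in\GF(q)^*$'' clause in the definition of $\delta(f)$ imposes nothing beyond $\xi^n\in\GF(q)$.

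The main point requiring care — really the only obstacle — is getting the bookkeeping of the state-vector convention exactly right, so that ``one step of the recurrence = multiplication by $T$'' holds with the transpose (if any) on the correct side, and so that the chosen initial vector $(0,\ldots,0,1)$ genuinely corresponds to $\xi^{m-1}$ rather than to $1$ or some other power. Once that identification is pinned down, the proof is a one-line computation: $T^n\mathbf{e}_{m-1}$ lies in $\GF(q)\mathbf{e}_{m-1}$ iff $\xi^{m-1+n}\in\GF(q)\xi^{m-1}$ iff $\xi^n\in\GF(q)$, and the minimality of $n$ on both sides matches by definition. I would present the dictionary between $(\GF(q)[x]\bmod f(x),\,\mu)$ and $(\GF(q^m),\,\text{mult. by }\xi)$ — already noted in the text just before the theorem — explicitly, and then conclude. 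No deeper input (nothing about orders of polynomials, nothing about $\PGL(m,q)$) is needed; Theorem~\ref{Tqord} is not even required here, though one could cross-check that $\delta(f)\mid\ord(f)$ is consistent with $d\mid n$.
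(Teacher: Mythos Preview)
Your approach is essentially the paper's --- both work through the companion matrix $T$ and the identification of $\GF(q)[x]\bmod f$ with $\GF(q^m)$ --- but the bookkeeping issue you flag is real and is not resolved by the identification you write down. With the paper's $T$, left multiplication of column vectors by $T$ is multiplication by $\xi$, whereas the state vector $(u_k,\ldots,u_{k+m-1})$ evolves by \emph{right} multiplication by $T$ (equivalently, by $T^\top$ on columns). Thus your two desiderata, ``$e_{m-1}\leftrightarrow\xi^{m-1}$'' and ``one recurrence step $=$ multiplication by $\xi$'', cannot both hold under the coordinate identification; as written, your one-line computation is checking the last \emph{column} of $T^n$, while $\delta(f)$ is a condition on the last \emph{row}. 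The paper closes this gap without identifying state vectors with field elements at all: from $e_{m-1}^\top T^d=\gl\,e_{m-1}^\top$ it gets $e_{m-1}^\top T^{i+d}=\gl\,e_{m-1}^\top T^i$ for $i=0,\ldots,m-1$, and since the matrix $U$ with these rows is triangular with nonzero anti-diagonal (hence invertible), $UT^d=\gl U$ forces $T^d=\gl I$, whence $\xi^d=\gl\in\GF(q)$. Your route can be repaired by switching to the identification under which the state shift $T^\top$ (rather than $T$) corresponds to multiplication by $\xi$ --- such an identification exists since $T^\top$ and $T$ share the irreducible characteristic polynomial $f$ --- and then the initial state corresponds to some nonzero $v_0$ (in fact $v_0=1$, not $\xi^{m-1}$), after which your cancellation argument goes through unchanged.
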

\bpf
We first note that a sequence $u=\{u_k\}_{k\geq0}$ is an $f$-sequence if and
only if the vectors
\[ u_{k,m} =(u_k, u_{k+1},\ldots, u_{k+m-1})^\top\]
satisfy 
\[u_{k+1,m}^\top=  u_{k,m}^\top T_f\]
for all $k\geq 0$.
As a consequence, if $u_{0,m}^\top=(u_0, \ldots, u_{m-1})
= (0,\ldots, 0,1)$, 
then 
$u_{0,m}^\top T^d = \gl u_{0,m}^\top$ holds
if and only if
\[u_{i,m}^\top T^d=u_{0,m}^\top T^{i+d}=\gl u_{0,m}^\top T^i = \gl u_{i,m}^\top
\]
holds for $i=0,\ldots, m-1$. Now the matrix $U$ with as its rows the vectors
$u_{i,m}^\top$ for $i=0, \ldots, m-1$ is triangular with nonzero anti-diagonal,
hence invertible. 
So from the above, we conclude that $u_{0,m}^\top T^d = \gl u_{0,m}^\top$ 
if and only if 
$UT^d=\gl U$ if and only if $T^d=\gl I$. 
\epf

From now on, we assume that, in addition, 
\[L(x)=L_0x+L_1x^q+\cdots + L_{m-1}L_{m-1}x^{q^{m-1}}\]
of $q$-degree $m$
in $\GF({q^m})[x]$ that fixes $\Gxi$, that is, there exists a permutation
$\pi\in S_n$ such that
\[ L(\xi^j)=\xi^{\pi(j)}\]
for all $j=0, \ldots, n-1$ For later use, we will also assume that $L(1)=1$.
(As we remarked earlier, this represents no loss of generality.)
For each $i$ the {\em standard\/} $q$-polynomial $L(x)=x^{q^i}$ has this
property. 
Note that according to Theorem~\ref{Tqpol}, 
$\xi$ is nonstandard over $\GF(q)$
if and only if there is a {\em nonstandard\/} $q$-polynomial as above.

By abuse of notation, we will also use $L$ to denote the $m\times m$ 
matrix representation over $\GF(q)$
of the $\GF(q)$-linear map $x \mapsto L(x)$ on $\GF(q^m)$ with respect to the
basis $1,\xi, \ldots, \xi^{m-1}$. Note that if 
\[ \xi^j = \sum_{i=0}^{m-1} c^{(j)}_i\xi^i,\]
with $c^{(j)}_i$ in $\GF(q)$ for $i=0, \ldots, m-1$ and all
$j\geq0$, then $\xi^j$ is represented by the
vector
\[ c^{(j)} = (c^{(j)}_0, \ldots, c^{(j)}_{m-1})^\top\]
in $\GF(q)^m$. As a consequence, the matrix $L$ has as its columns the vectors
$c^{(\pi(j))}$ for $j=0, \ldots, m-1$.

Let us write $\cC$ to denote the collection of all vectors $c^{(j)}$.
Then the above has the following consequence.
\begin{teor}\label{TG}
We have that
\[ T: c^{(j)} \mapsto c^{(j+1)}, \qquad L: c^{(j)} \mapsto c^{(\pi(j))},\]
so that the matrix group $G=\langle T,L\rangle$ in $\GL(m,q)$ fixes the 
collection $\cC$ as a
set. 
\end{teor}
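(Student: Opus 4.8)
The plan is to verify the two displayed mapping properties of $T$ and $L$ on the set $\cC=\{c^{(j)}\mid j\geq 0\}$ of coordinate vectors, and then observe that the fixing of $\cC$ by $G=\langle T,L\rangle$ follows formally from the fact that $\cC$ is a (cyclic, since $\xi^n=1$) orbit of $T$ on which $L$ also acts as a permutation. First I would recall that, by the very definition of the companion matrix $T=T_f$ as the matrix of multiplication by $\xi$ on $\GF(q^m)$ with respect to the basis $1,\xi,\dots,\xi^{m-1}$, multiplying the coordinate vector of $\xi^j$ by $T$ gives the coordinate vector of $\xi\cdot\xi^j=\xi^{j+1}$; that is exactly $T:c^{(j)}\mapsto c^{(j+1)}$. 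Note in passing that since $\xi^n=1$ we have $c^{(n)}=c^{(0)}$, so the index $j$ may be read modulo $n$ and $\cC$ is a single $T$-orbit of size $n$ (here $n=\ord(\xi)$).

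Next I would treat $L$. By hypothesis $L$ is the $\GF(q)$-linear map on $\GF(q^m)$ induced by the $q$-polynomial $L(x)$ with $L(\xi^j)=\xi^{\pi(j)}$ for all $j=0,\dots,n-1$, and by abuse of notation also the $m\times m$ matrix of this map with respect to the basis $1,\xi,\dots,\xi^{m-1}$. Applying this matrix to the coordinate vector $c^{(j)}$ of $\xi^j$ therefore yields the coordinate vector of $L(\xi^j)=\xi^{\pi(j)}$, which is $c^{(\pi(j))}$; this is the claimed identity $L:c^{(j)}\mapsto c^{(\pi(j))}$. (One should also note here that $\xi^j=\sum_i c^{(j)}_i\xi^i$ really does have all $c^{(j)}_i$ in $\GF(q)$, which is just the statement that $\xi$ has degree $m$ over $\GF(q)$, so that $1,\xi,\dots,\xi^{m-1}$ is a $\GF(q)$-basis — this is part of the standing assumption of the section.)

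Finally, for the group statement: both generators $T$ and $L$ of $G$ map $\cC$ into $\cC$, since $j\mapsto j+1 \bmod n$ and $j\mapsto\pi(j)$ are maps of $\{0,\dots,n-1\}$ into itself; as $\cC$ is finite and $T,L$ are invertible (being in $\GL(m,q)$), each in fact permutes $\cC$. Hence every word in $T$, $L$ and their inverses permutes $\cC$, so $G$ fixes $\cC$ as a set. I do not expect a real obstacle here: the content of the theorem is essentially a change-of-viewpoint, translating the actions of "multiplication by $\xi$" and "the map $L$" on the field $\GF(q^m)$ into matrix actions on coordinate vectors. The only point requiring a little care is to make sure the index arithmetic is read modulo $n$ (so that $\cC$ is genuinely a $T$-orbit and $\pi\in S_n$ acts on it), and to keep the two meanings of the symbols $T$ and $L$ — as maps on $\GF(q^m)$ and as matrices over $\GF(q)$ — consistently aligned via the chosen basis.
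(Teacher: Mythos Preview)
Your proposal is correct and matches the paper's approach exactly: the paper does not even write out a separate proof, but presents Theorem~\ref{TG} as an immediate consequence of the definitions of $T$ (as the matrix of multiplication by $\xi$) and of $L$ (as the matrix of the $\GF(q)$-linear map $x\mapsto L(x)$) with respect to the basis $1,\xi,\ldots,\xi^{m-1}$, noting that the columns of $L$ are the vectors $c^{(\pi(j))}$. Your write-up simply makes these verifications explicit, and the extra care you take with reading indices modulo $n$ and with the invertibility of $T,L$ is appropriate but not strictly needed beyond what the paper already assumes.
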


For later use, we also consider the following ``normalisation''.
Write $\gs=\gs_{m-1}=\Tr_{\GF(q^m)/\GF(q)}$, 
and assume that $\gs\neq0$. Let $\txi=\xi/\gs$. Then
$\txi\in\GF(q^m)$ again has $q$-order~$d$ and degree $m$ over
$\GF(q)$, with minimal polynomial 
\[ \tf(x)=x^m-\tgs_{m-1}x^{m-1}-\cdots -\tgs_1 x-\tgs_0,\]
where $\tgs_i=\gs_i/\gs^{m-i}$; in particular, $\tgs_{m-1}=1$. 
So $1,\txi, \ldots, \txi^{m-1}$ are another basis for $\GF(q^m)$ over $\GF(q)$.
We now write
\[ \txi^j = \sum_{i=0}^{m-1} \tce^{(j)}_i\txi^i,\]
with $\tce^{(j)}_i$ in $\GF(q)$ for $i=0, \ldots, m-1$ and
all $j\geq0$, so that $\txi^j$ is represented by the
vector
\[ \tce^{(j)} = (\tce^{(j)}_0, \ldots, \tce^{(j)}_{m-1})^\top.\]
Note that 
\[ \tce^{(j)}_i = \gs^{-j+i}c^{(j)}_i\]
for all $j\geq0$ and all $i=0,\ldots, m-1$.

The {\em conjugate matrix\/} $M^S$ of a matrix $M$ by an invertible matrix~$S$ 
is defined as $M^S=SMS^{-1}$. Note that the conjugate $M^S$ is the matrix
representation of the same linear map, but with respect to a basis
transformation given by $S$.
We will write $\tT$ to denote the companion matrix of $\tf(x)$.
Define the diagonal matrix $D$ as
\[ D={\rm diag}(1,\gs, \ldots, \gs^{m-1}).\]
Our observations are summarized in the following theorem.
\begin{teor}\label{Tnorm}
With the above definitions, we have that $Dc^{(j)}=\gs^j \tce^{(j)}$.
Moreover,  the conjugate $T^D=DTD^{-1}$ of $T$ satisfies $T^D=\gs \tT$, and
$T^D$ and the conjugate $L^D=DLD^{-1}$ of $L$ satisfy
\[ T^D : \tce^{(j)}\mapsto \gs\tce^{(j+1)}, \qquad 
L^D: \tce^{(j)}\mapsto \gs^{\pi(j)-j} \tce^{(\pi(j)}.\]
So the conjugate group $G^D=\langle T^D, L^D\rangle$ fixes the set
$\tcC=\{\tce^{(j)}\mid j=0, \ldots, n-1\}$ as a set.
\end{teor}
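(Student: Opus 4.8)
The plan is to reduce the whole statement to the single coordinate identity recorded just before it, namely $\tce^{(j)}_i=\gs^{-j+i}c^{(j)}_i$, together with Theorem~\ref{TG}. First I would dispose of $Dc^{(j)}=\gs^j\tce^{(j)}$: since $D={\rm diag}(1,\gs,\ldots,\gs^{m-1})$, the $i$-th entry of $Dc^{(j)}$ is $\gs^i c^{(j)}_i$, while the $i$-th entry of $\gs^j\tce^{(j)}$ is $\gs^j\cdot\gs^{-j+i}c^{(j)}_i=\gs^i c^{(j)}_i$. Put differently, $D$ is exactly the change-of-basis matrix carrying $\{\xi^i\}$-coordinates to $\{\txi^i\}$-coordinates, which is also visible directly from $\txi^i=\xi^i/\gs^i$.

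Next, for $T^D=\gs\tT$ I would argue conceptually rather than compute. Since $T$ is the matrix of multiplication by $\xi$ on $\GF(q^m)$ with respect to $1,\xi,\ldots,\xi^{m-1}$, the previous step shows that $T^D=DTD^{-1}$ is the matrix of the same map with respect to the basis $1,\txi,\ldots,\txi^{m-1}$; and since $\xi=\gs\txi$ while multiplication by $\txi$ relative to $1,\txi,\ldots,\txi^{m-1}$ is by definition the companion matrix $\tT$ of the minimal polynomial $\tf$ of $\txi$, we get $T^D=\gs\tT$. (If one prefers a direct check, the same identity falls out of multiplying $DTD^{-1}$ in the explicit companion form using $\tgs_i=\gs_i/\gs^{m-i}$; this is a routine one-liner.)

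The two action formulas then follow by substituting $\tce^{(j)}=\gs^{-j}Dc^{(j)}$ into $T^D=DTD^{-1}$ and $L^D=DLD^{-1}$ and invoking Theorem~\ref{TG}, which gives $Tc^{(j)}=c^{(j+1)}$ and $Lc^{(j)}=c^{(\pi(j))}$:
\[
T^D\tce^{(j)}=\gs^{-j}DTc^{(j)}=\gs^{-j}Dc^{(j+1)}=\gs^{-j}\gs^{j+1}\tce^{(j+1)}=\gs\,\tce^{(j+1)},
\]
and in the same way $L^D\tce^{(j)}=\gs^{-j}DLc^{(j)}=\gs^{-j}Dc^{(\pi(j))}=\gs^{\pi(j)-j}\tce^{(\pi(j))}$. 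Finally, since each generator $T^D$, $L^D$ (hence every element of $G^D$) sends each $\tce^{(j)}$ to a nonzero scalar multiple of some $\tce^{(k)}$, the group $G^D$ permutes the lines $\langle\tce^{(j)}\rangle$, and so fixes $\tcC$ as a set of points of $\PG(m-1,q)$; equivalently, conjugating the set-stabilisation of $\cC$ from Theorem~\ref{TG} by $D$ shows $G^D$ fixes $D\cC=\{\gs^j\tce^{(j)}\}$ exactly. I do not expect a genuine obstacle here: the only things demanding care are keeping track of the powers of $\gs$ and observing that the set-fixing claim is most naturally read in $\PGL(m,q)$ rather than in $\GL(m,q)$.
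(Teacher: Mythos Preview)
Your proposal is correct and matches the paper's approach: the paper gives no separate proof at all, simply presenting Theorem~\ref{Tnorm} as a summary of the observations immediately preceding it (in particular the coordinate identity $\tce^{(j)}_i=\gs^{-j+i}c^{(j)}_i$ and the actions recorded in Theorem~\ref{TG}), and your argument just spells out the routine verifications. Your remark that the final set-stabilisation statement should be read in $\PGL(m,q)$ rather than in $\GL(m,q)$ is apt and is exactly how the paper uses it in the paragraph following the theorem.
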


In the remainder of this paper, we will use the groups $G$ and $G^D$ to obtain
information on $\xi$ and $L$, and, in particular, on the $q$-order
$\ord_q(\xi)$ of the nonstandard element $\xi$.
To this end, we will consider the sets $\cC$ and $\tcC$ as subsets of
$\PG(m-1,q)$, and the groups $G$ and $G^D$ as subgroups of $\PGL(m,q)$, in its
natural action on $\PG(m-1,q)$. Here, $\PG(m-1,q)$ consists of the lines
through the origin in $\GF(q)^n$. Equivalently, $\PG(m-1,q)$ consists of the
nonzero vectors $v$ from $\GF(q)^n$, where we identify a vector $v$ with its
scalar multiples $\gl v$ for $\gl \in \GF(q)^*$. The group $\PGL(m,q)$
consists of the collection $\GL(m,q)$ of all nonsingular $m\times m$ matrices
over $\GF(q)$, where we identify a matrix $M$ with its scalar multiples $\gl
M$, for $\gl \in \GF(q)^*$.

Now we assumed that $\xi$ has $q$-order $d$, with $\xi^d=\eta\in\GF(q)$, 
so we have that the vector $c^{(d)}$ representing $\xi^d$ satisfies
$c^{(d)}=\eta c^{(0)}$. Since $c^{(j)}=T^jc^{(0)}$ for all $j$, we see that the
set $\cC$, considered as {\em subset of $\PG(m-1,q)$\/}, has size $d=\ord_q(\xi)$.
Note furthermore that since $T^d=\eta I$, the matrix $T$ has order $d$ as
element of the group $\PGL(m,q)$.
Note also that
\[ \Gxi=\langle\eta\rangle \cup \langle\eta\rangle \xi\cup \ldots
\cup \langle\eta\rangle\xi^{d-1},\]
where the union is {\em disjoint\/}.
As a consequence, there exists a permutation $\tau\in S_d$ such that
\[ L(\xi^k)=\eta_k\xi^{\tau(k)}\]
with $\eta_k\in\langle\eta\rangle\subseteq\GF(q)^*$, for all $k=0, \ldots, n-1$.
So $L$, as an element of $\PGL(m,q)$, acts on $\cC$, considered as a subset 
of $\PG(m-1,q)$, by 
\[ L: c^{(j)} \mapsto c^{(\tau(j)},\]
for $j=0, \ldots, d-1$.
We summarize the above in the next theorem.
\begin{teor}\label{Tprojm} The groups $G$ and $\G^D$ obtained from a
nonstandard element $\xi$, considered as subgroups
of $\PG(m,q)$, have orbits $\cO=\{c^{(j)}\mid j=0, \ldots, d-1\}$ and
$\cO^D=\{\tce^{(j)}\mid j=0, \ldots, d-1\}$, respectively. Both $\cO$ and
$\cO^D$ have size $d=\ord_q(\xi)$ and contain $1=(1,0,\ldots, 0)^\top$.
\end{teor}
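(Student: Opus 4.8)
\noindent\emph{Plan of proof.}\quad The statement is essentially a consolidation of Theorems~\ref{TG} and~\ref{Tnorm} together with the coset remarks recorded just above, so the plan is to pass everything to $\PG(m-1,q)$ and bookkeep.

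First I would handle $G$. Since $\xi^0=1=1\cdot\xi^0+0\cdot\xi+\cdots+0\cdot\xi^{m-1}$, the representing vector is $c^{(0)}=(1,0,\ldots,0)^\top$, so $1\in\cO$. Because $\xi$ has $q$-order \emph{exactly} $d$, the vectors $c^{(j)}$ and $c^{(j')}$ are $\GF(q)$-proportional precisely when $\xi^{j-j'}\in\GF(q)^*$, i.e.\ precisely when $d\mid j-j'$; hence $c^{(0)},\ldots,c^{(d-1)}$ are pairwise distinct in $\PG(m-1,q)$, while $c^{(j+d)}=T^dc^{(j)}=\eta c^{(j)}$ shows that the whole family $\{c^{(j)}\}_{j\geq0}$ collapses onto the $d$-point set $\cO$ when viewed projectively. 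By Theorem~\ref{TG}, $T$ sends $c^{(j)}$ to $c^{(j+1)}$, hence as an element of $\PGL(m,q)$ it permutes $\cO$ cyclically, so $\langle T\rangle$ is already transitive on $\cO$. For $L$, the disjoint decomposition $\Gxi=\langle\eta\rangle\cup\langle\eta\rangle\xi\cup\cdots\cup\langle\eta\rangle\xi^{d-1}$ yields a $\tau\in S_d$ with $L(\xi^k)\in\langle\eta\rangle\xi^{\tau(k)}$, so $L$ viewed in $\PGL(m,q)$ sends $c^{(j)}$ to $c^{(\tau(j))}$ and thus also permutes $\cO$. Therefore $G=\langle T,L\rangle$ fixes $\cO$ setwise, and being $\langle T\rangle$-transitive, $\cO$ is a single $G$-orbit of size $d=\ord_q(\xi)$.

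For $G^D$ the shortest route is to note that $G^D=DGD^{-1}$, so its action on $\PG(m-1,q)$ is the $D$-conjugate of that of $G$; hence $G^D$ fixes the image of $D\cO$ setwise and is transitive on it. By Theorem~\ref{Tnorm} we have $Dc^{(j)}=\gs^j\tce^{(j)}$, so the image of $D\cO$ in $\PG(m-1,q)$ is exactly $\cO^D=\{\tce^{(j)}\mid j=0,\ldots,d-1\}$; the same proportionality criterion (applied to $Dc^{(j)}=\gs^j\tce^{(j)}$, using that $D$ is invertible over $\GF(q)$) gives $|\cO^D|=d$, and since the top-left entry of $D$ is $1$ we get $\tce^{(0)}=Dc^{(0)}=(1,0,\ldots,0)^\top\in\cO^D$. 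Alternatively one reads it off directly from the formulas $T^D:\tce^{(j)}\mapsto\gs\tce^{(j+1)}$ and $L^D:\tce^{(j)}\mapsto\gs^{\pi(j)-j}\tce^{(\pi(j))}$ of Theorem~\ref{Tnorm}, which in $\PG(m-1,q)$ become $\tce^{(j)}\mapsto\tce^{(j+1)}$ and $\tce^{(j)}\mapsto\tce^{(\tau(j))}$.

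I do not expect a genuine obstacle here; the only two points needing care are (a) using that the $q$-order of $\xi$ is exactly $d$, not merely that $\xi^d\in\GF(q)$, so that the $d$ representatives stay distinct projectively, and (b) checking that the a priori index-permutation $\pi\in S_n$ attached to $L$ descends to a well-defined permutation of the $d$ points of $\cO$ — which is precisely what the coset decomposition of $\Gxi$, forced by $L(\Gxi)=\Gxi$ and $\eta=\xi^d$, provides.
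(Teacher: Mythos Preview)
Your proposal is correct and follows essentially the same route as the paper: the paper presents Theorem~\ref{Tprojm} as a summary of the discussion immediately preceding it (there is no separate proof environment), using exactly the ingredients you list --- $c^{(d)}=\eta c^{(0)}$ and $T^d=\eta I$ to get $|\cO|=d$ in $\PG(m-1,q)$, the disjoint coset decomposition $\Gxi=\bigcup_{k=0}^{d-1}\langle\eta\rangle\xi^k$ to descend the $L$-action to a permutation $\tau\in S_d$ of $\cO$, and Theorem~\ref{Tnorm} for the $D$-conjugated picture. Your write-up is in fact a bit more careful than the paper's on point~(a), explicitly verifying that the $d$ points are pairwise distinct projectively by using the \emph{exact} value of the $q$-order.
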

%
%
\section{\label{Sgr2}The case $m=2$}
%
We now investigate the case where $m=2$ in more detail. 
{\bf So from now on, we will assume that $m=2$.}

So here $\xi\in\GF(q^2)\setminus\GF(q)$ is zero of the irreducible polynomial
$f(x)=x^2-\gs_1 x-\gs_0$ over $\GF(q)$, where we assume that $\gs_1\neq 0$. 
(So we assume that $d=\ord_q(\xi)>2$.)
Writing $\gs=\gs_1$ and $\gl = \gs_0/\gs_1^2$, we also have that $\txi=\xi/\gs$ 
is zero of the polynomial $\tf = x^2-x-\gl$. Note that, as a consequence, we
have that
\beql{Etxi} \txi^q = 1-\txi.\eeql

Again, we assume that the $q$-polynomial $L(x)=L_0x+L_1 x^q$ of $q$-degree $2$
over $\GF(q^2)$ fixes $\Gxi$ as a set.
As remarked before, we may assume without loss of generality that $L(1)=1$. 
Let $\gw, \nu\in\GF(q)$ be such that
\[L(1)=1, \qquad L(\xi)=\gw+\nu \xi.\]
Put $\tgw=\gw/\gs$. Then 
\[ L(1)=1, \qquad L(\txi)=\tgw + \nu \txi,\]
so that the matrix representations $L$ and $L^D$ of the map induced by the
polynomial $L$ on $\GF(q^2)$ are given by
\[
L=
\left(
\begin{array}{cc}
1&\gw\\
0&\nu
\end{array}
\right), 
\qquad
L^D=
\left(
\begin{array}{cc}
1&\tgw\\
0&\nu
\end{array}
\right). 
\]
Finally, it is easily verified that the matrices $T$ (multiplication by $\xi$) 
and $\tT=\gs^{-1} T^D$ (multiplication by $\txi$) are given by 
\[
T=
\left(
        \begin{array}{cc}
                0 & \gs_0 \\
                1 & \gs_1
        \end{array}
        \right), 
\qquad
T^D=\gs
\left(
        \begin{array}{cc}
                0 & \gl \\
                1 & 1
        \end{array}
        \right). 
\]
In what follows, we will investigate the subgroup $\Xi=\langle \Lambda,
\Gamma\rangle$ of $\PGL(2,q)$ generated by the elements
\beql{ELG}
\Lambda=
\left(
        \begin{array}{cc}
                0 & \gl \\
                1 & 1
        \end{array}
        \right), 
\qquad 
\Gamma = 
\left(
\begin{array}{cc}
1&\tgw\\
0&\nu
\end{array}
\right).
\eeql
We will employ the usual identification of $\PG(1,q)$ with the
set $\GF(q)\cup \{\infty\}$ by identifying the element $(x,y)\in\PG(1,q)$ with 
the finite field element $x/y\in\GF(q)$ if $y\neq0$ and with $\infty$ if $y=0$.
As a consequence, a matrix 
\[
M=
\left(
        \begin{array}{cc}
                a & b \\
                c & d
        \end{array}
        \right)
\]
from $\PGL(2,q)$ now acts on an element $x$ from 
$\GF(q)^+=\GF(q)\cup\{\infty\}$ as
\[ M: x \mapsto (ax+b)/(cx+d).\]
So now the field element $\xi^j\in\GF(q^2)$ corresponds to 
$c^{(j)}=(c^{(j)}_0,c^{(j)}_1)\sim c^{(j)}_0/c^{(j)}_1$ in $\GF(q)^+$; 
in particular, we have that $1=c^{(0)}\sim \infty$ and $\xi=c^{(1)}\sim 0$.
In the next theorem, we summarize the main consequences of the above
definitions and assumptions.
\begin{teor}\label{Tgr2}
Let $\xi$ have degree $m$ over $\GF(q)$, with $q$-order $d=\ord_q(\xi)>2$ and
minimal polynomial $f(x)=x^2-\gs x -\lambda \gs^2$. Let
$L$ be a $q$-polynomial of $q$-degree $m$ over $\GF(q^m)$ that fixes $\Gxi$, 
with $L(1)=1$ and $L(\xi)=\gs \tgw +\nu \xi$,
and let $\Lambda$ and $\Gamma$ be the associated matrices as in (\ref{ELG}).
Then the following holds.\\
(i)
The element $\Lambda$ has order $d$ in $\PGL(2,q)$, and no fixed points on
$\GF(q)^+$. 
Moreover, we have that
\[ \Lambda^k = \left(
\begin{array}{cc}
\gl F_{k-1}&\gl F_k\\
F_k&F_{k+1}
\end{array}
\right) ,
\]
where the $F_k$ are defined by $F_0=0$, $F_1=1$, and $F_{k+2}=F_{k+1}+\gl F_k$
for all $k$. In particular, $\Lambda$ has order $d$ and $F_k=0$ if and only if
$k\equiv 0\bmod d$.
The matrix $\Lambda$ induces a map $x\mapsto \gl/(1+x)$ on $\GF(q)^+$.\\
(ii)
The element $\Gamma$ induces a map $x\mapsto (x+\tgw)/\nu$ on $\GF(q)^+$. 
We have that 
\[ \Gamma^k = \left(
\begin{array}{cc}
1&\tgw(1+\nu+\cdots+\nu^{k-1})\\
0&\nu^k
\end{array}
\right) .
\]
In particular, if $\nu$ has order $e$, then $\Gamma$ has order $e$ (if
$\nu\neq1$ or $\tgw=0$) or $p$ (if $\nu=1$ and $\tgw\neq0$).\\
(iii) The subset 
$\cO=\{\infty, \Lambda(\infty), \ldots, \Lambda^{d-1}(\infty)\}$ of $\GF(q)^+$
is an orbit of the subgroup $\Xi$ of $\PGL(2,q)$
generated by the maps $\Lambda$ and $\Gamma$.
The ``standard'' $q$-polynomials $L(x)=x$ or $L(x)=x^q$ correspond to the cases
$\nu=1$, $\gw=0$, $\tgw=0$, and 
$\nu=-1$, $\gw=\gs_1$,
$\tgw=1$, respectively.
\end{teor}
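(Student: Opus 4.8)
\enspace The plan is to verify the three parts by direct computation, using the normalisations set up in Theorems~\ref{Tnorm} and~\ref{Tprojm} to transport facts about the companion matrix~$T$ to the matrix~$\Lambda$ and facts about the group~$G^D$ to~$\Xi$. All the computations are routine; the one point that needs care is to keep the three layers $T$, $T^D$, and $\tT=\Lambda$ (and likewise $L$, $L^D=\Gamma$) distinct, so that the transfer of ``$T$ has order $d$ in $\PGL(2,q)$'' and ``$\cO^D$ is a $G^D$-orbit'' down to $\Lambda$ and to $\Xi$ is valid; that is where I would be most careful.

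For part~(i): I first compute the M\"obius action of the matrix $\Lambda$ of~(\ref{ELG}) on $\GF(q)^+$, obtaining $x\mapsto\gl/(x+1)$ together with $\Lambda(\infty)=0$. The characteristic polynomial of $\Lambda$ equals $t^2-t-\gl=\tf(t)$, which is irreducible over $\GF(q)$ because $\xi$, hence $\txi$, has degree~$2$; so $\Lambda$ has no eigenvector over $\GF(q)$, equivalently no fixed point on $\PG(1,q)=\GF(q)^+$. The formula for $\Lambda^k$ then follows by a straightforward induction on $k$, using $F_{k+2}=F_{k+1}+\gl F_k$ to simplify the product $\Lambda^k\Lambda$. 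Since $\gl\neq0$ (as $\gs_0=\gl\gs^2\neq0$), the matrix $\Lambda^k$ is a scalar matrix precisely when $F_k=0$; and when $F_k=0$ the relation $F_{k+1}=\gl F_{k-1}$ makes the two diagonal entries of $\Lambda^k$ automatically agree. Finally, by Theorem~\ref{Tnorm} the matrix $T^D=DTD^{-1}$ is conjugate to $T$, and $\Lambda=\gs^{-1}T^D$ differs from $T^D$ only by a nonzero scalar, so $\Lambda$ has the same order in $\PGL(2,q)$ as $T$, namely $d=\ord_q(\xi)$ by the discussion preceding Theorem~\ref{Tprojm}. Hence $\Lambda^k$ is scalar --- equivalently $F_k=0$ --- if and only if $d\mid k$.

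For part~(ii): reading off from~(\ref{ELG}) gives the action $x\mapsto(x+\tgw)/\nu$ of $\Gamma$, and the stated formula for $\Gamma^k$ follows by induction, the only identity needed being $\tgw\bigl(1+\nu(1+\cdots+\nu^{k-1})\bigr)=\tgw(1+\cdots+\nu^k)$. The matrix $\Gamma^k$ is scalar in $\PGL(2,q)$ exactly when its off-diagonal entry $\tgw(1+\cdots+\nu^{k-1})$ vanishes and its diagonal entries $1$ and $\nu^k$ coincide. If $\tgw=0$ this just says $\nu^k=1$, so $\Gamma$ has order $e=\ord(\nu)$. If $\nu\neq1$ then $1+\cdots+\nu^{k-1}=(\nu^k-1)/(\nu-1)$, so again both conditions reduce to $\nu^k=1$, and the order is $e$. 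If $\nu=1$ and $\tgw\neq0$ then $\Gamma^k$ has off-diagonal entry $k\tgw$, which vanishes exactly when $p\mid k$, so $\Gamma$ has order~$p$.

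For part~(iii): by Theorem~\ref{Tprojm} the subgroup $G^D=\langle T^D,L^D\rangle$ of $\PGL(2,q)$ has the orbit $\cO^D=\{\tce^{(0)},\ldots,\tce^{(d-1)}\}$ of size $d$, containing the point $1=(1,0)^\top$, which in $\GF(q)^+$ is $\infty$. Since $\Lambda=\gs^{-1}T^D$ and $\Gamma=L^D$ agree with $T^D$ and $L^D$ as elements of $\PGL(2,q)$, we have $\Xi=\langle\Lambda,\Gamma\rangle=G^D$ inside $\PGL(2,q)$; and since $\Lambda$ carries $\tce^{(j)}$ to $\tce^{(j+1)}$ by Theorem~\ref{Tnorm} while $\tce^{(0)}\sim\infty$, this orbit is precisely $\cO=\{\infty,\Lambda(\infty),\ldots,\Lambda^{d-1}(\infty)\}$. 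Finally, $L(x)=x$ forces $L(\xi)=\xi$, hence $\gw=0$, $\nu=1$, and $\tgw=\gw/\gs=0$; and $L(x)=x^q$ gives $L(\xi)=\xi^q=\gs_1-\xi$ (the two zeros of $f$ sum to $\gs_1$), hence $\gw=\gs_1$, $\nu=-1$, and $\tgw=\gs_1/\gs=1$.
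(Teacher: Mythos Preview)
Your proof is correct and follows essentially the same approach as the paper's own proof, which is extremely brief (``most of the claims are direct consequences of our assumptions and definitions'') and only spells out the $L(x)=x^q$ case via $\xi^q=\gs_1-\xi$, exactly as you do. You have simply filled in the routine verifications---the inductions for $\Lambda^k$ and $\Gamma^k$, the irreducibility of the characteristic polynomial of $\Lambda$ to get no fixed points, and the identification $\Xi=G^D$ in $\PGL(2,q)$ via $\Lambda=\gs^{-1}T^D$ and $\Gamma=L^D$---that the paper leaves to the reader.
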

\bpf
Most of the claims are a direct consequences of our assumptions and
definitions. 
Hence all orbits of $\Lambda$ on $\GF(q)^+$ have the same size $d$.
The claim concerning the case where $L(x)=x$ is evident.
Finally, since $\xi$ and $\xi^q$ are the zeroes of $f(x)=x^2-\gs_1 x-\gs_0$, 
we have that
\[ \gs_1=\xi+\xi^q, \qquad \gs_0=-\xi^{q+1}.\]
Hence if $L(x)=x^q$, then $\nu\xi+\omega=L(\xi)=\xi^q=\gs_1-\xi$, so that
$\nu=-1$ and $\omega=\gs_1$.
\epf


It turns out that the cases $d=3,4,5$ need a special treatment. For later use,
we now collect the required extra information.
Note that according to Theorem~\ref{Tgr2}, 
the orbit $\cO$ has size $d$ and is given by
\beql{Eorbit}\cO=\{\infty, 0, \gl, \gl/(1+\gl), \gl(1+\gl)/(1+2\gl),
\gl(1+2\gl)/(1+3\gl+\gl^2), \ldots\}.
\eeql
\begin{lem}\label{Lnod3}
There are no nonstandard $\xi$ of degree 2 over $\GF(q)$ with $q$-order $d=3$.
\end{lem}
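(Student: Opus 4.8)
The plan is to analyze the orbit $\cO$ of size $d=3$ under the group $\Xi=\langle\Lambda,\Gamma\rangle$ directly, using the explicit description in~(\ref{Eorbit}) together with the structural constraints from Theorem~\ref{Tgr2}. Since $d=3$, the orbit is $\cO=\{\infty, 0, \gl\}$, and from $\Lambda^3 = $ (scalar) we get $F_3 = 0$, i.e. $1+2\gl = 0$, so $\gl = -1/2$ (forcing $p\neq 2$); one should double-check consistency, namely that $\Lambda$ indeed has order exactly $3$ and not $1$ in $\PGL(2,q)$, which holds since $\gl\neq 0$. So the orbit is $\cO = \{\infty, 0, -\tfrac12\}$, and the three images of $\Lambda$ cycle these three points.

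Next I would use the fact that $\Gamma: x\mapsto (x+\tgw)/\nu$ must also stabilize $\cO$ as a set, since $\cO$ is an orbit of all of $\Xi$. Because $\Gamma$ fixes $\infty$ (it is upper-triangular), $\Gamma$ must permute the remaining two points $\{0, -\tfrac12\}$ among themselves. This gives only two possibilities: either $\Gamma$ fixes both $0$ and $-\tfrac12$, or $\Gamma$ swaps them. In the first case, $\Gamma$ fixes three distinct points of $\PG(1,q)$, hence is the identity in $\PGL(2,q)$; unwinding the definitions ($\nu=1$, $\tgw=0$) this is exactly the standard $q$-polynomial $L(x)=x$, so $\xi$ is standard — contradiction. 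In the second case, solving $\Gamma(0)=-\tfrac12$ and $\Gamma(-\tfrac12)=0$ gives $\tgw/\nu = -\tfrac12$ and $(-\tfrac12+\tgw)/\nu = 0$, hence $\tgw = \tfrac12$ and $\nu = -1$; but by Theorem~\ref{Tgr2}(iii) the values $\nu=-1$, $\tgw=1$ characterize the other standard polynomial $L(x)=x^q$ — wait, here we get $\tgw=\tfrac12$, not $1$. So I would check carefully whether this $\Gamma$ is consistent with $L$ fixing all of $\Gxi$ and not merely the three-element orbit; the point is that $L(\xi)=\gs\tgw+\nu\xi$ together with $L(1)=1$ and $L^2$ or $L^{\ord}$ landing back must be compatible, and one expects either that this case collapses to a standard polynomial after all, or that it is impossible. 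A cleaner route is: the subgroup of $\PGL(2,q)$ stabilizing a $3$-set and acting faithfully is a subgroup of $S_3$; since $\Lambda$ already supplies the $3$-cycle and $\Gamma$ a transposition, $\Xi$ would be all of $S_3$, acting sharply $3$-transitively on $\PG(1,q)$ only if $q=2$ — but then $\cO$ couldn't have size $3$. More precisely $\Xi\cong S_3$ fixes $\PG(1,q)\setminus\cO$ pointwise, forcing $q-2\le 1$, i.e. $q\le 3$, and small cases are ruled out by hand.

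The main obstacle I expect is the bookkeeping in the second (swap) case: one has to verify that the resulting $\Gamma$ cannot actually occur, i.e. that there is no genuine nonstandard $L$ producing it. This requires going back from the projective picture to the linearized polynomial $L(x)=L_0 x+L_1 x^q$ and checking that $L(\Gxi)=\Gxi$ cannot hold with $\nu=-1,\ \tgw=\tfrac12$ unless it degenerates. Concretely, one uses that $\ord(\nu)=2$ forces (via Theorem~\ref{Tgr2}(ii)) $\Gamma$ to have order $2$, compatible with the swap, so the obstruction must come from the interaction with $\Lambda$: the relation $\Lambda\Gamma\Lambda\Gamma\cdots$ closing up in $\PGL(2,q)$ imposes an equation in $\gl,\tgw,\nu$ that, combined with $\gl=-\tfrac12$, $\tgw=\tfrac12$, $\nu=-1$, turns out to be unsatisfiable (or to force $q$ too small). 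I would carry out this final residual computation explicitly — it is short — and conclude that no nonstandard $\xi$ of degree $2$ with $q$-order $3$ exists.
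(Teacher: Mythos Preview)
Your approach is exactly the paper's, but you have a computational slip that derails the endgame. You write $F_3 = 1+2\gl$, but in fact $F_0=0$, $F_1=1$, $F_2=F_1+\gl F_0=1$, and $F_3=F_2+\gl F_1=1+\gl$; the quantity $1+2\gl$ is $F_4$. Hence $d=3$ forces $\gl=-1$ (valid in every characteristic, including $p=2$), not $\gl=-\tfrac12$, and the orbit is $\cO=\{\infty,0,-1\}$.

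With the correct value the ``swap'' case resolves immediately: $\Gamma(-1)=0$ gives $\tgw=1$, and then $\Gamma(0)=\tgw/\nu=-1$ gives $\nu=-1$, which by Theorem~\ref{Tgr2}(iii) is precisely the standard polynomial $L(x)=x^q$. So both cases are standard, and the proof ends there; all of your residual worries (the $\tgw=\tfrac12$ mismatch, the $S_3$ digression, the proposed extra relation in $\gl,\tgw,\nu$) are artifacts of the miscomputed $\gl$ and disappear once it is corrected.
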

\bpf
From (\ref{Eorbit}) we see that if $d=3$, then necessarily $\gl=-1$. 
Now since $\cO$ is also invariant under $\Gamma$, we have that
\[ \cO = \{\infty, 0, -1\}=\{\infty, \tgw/\nu, (-1+\tgw)/\nu\}\].
So we have one of two cases:
\begin{enumerate}
\item
$\tgw=0$. Then $\nu=1$, so we are in the case where $L(x)=x$.
\item
$\tgw=1$. Then $\nu=-1$, so we are in the case where $L(x)=x^q$.
%
\end{enumerate}
Since there are no other possibilities, the claim follows.
\epf
\begin{lem}\label{Ld4}
If $\xi$ is nonstandard of degree 2 over $\GF(q)$ with $q$-order $d=4$, 
then $p=3$ and $\txi=\xi/\gs$ is primitive in $\GF(9)$.
Moreover, $\Xi$ is actually a subgroup of $\PGL(2,3)$.
\end{lem}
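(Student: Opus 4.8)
The plan is to push the same elementary computation that settled the case $d=3$ one step further. First I would use the explicit form of the orbit in (\ref{Eorbit}): setting $d=4$ forces the fifth entry $\gl/(1+\gl)$ to coincide with one of $\{\infty,0,\gl\}$, and since $\Lambda$ has no fixed points and $\Lambda^k(\infty)=\infty$ only when $4\mid k$, the only possibility is $\Lambda^4(\infty)=\infty$ together with $\Lambda^2(\infty)\ne\infty$. Working out $\Lambda^4$ from the closed formula $\Lambda^k=\left(\begin{smallmatrix}\gl F_{k-1}&\gl F_k\\ F_k&F_{k+1}\end{smallmatrix}\right)$ in Theorem~\ref{Tgr2}(i), the condition $F_4=0$, $F_2\ne 0$ pins down $\gl$ up to the characteristic: one gets $F_2=1$, $F_3=1+\gl$, $F_4=1+2\gl$, so $F_4=0$ means $2\gl=-1$, hence (since $F_2=1\ne 0$ always) $\gl=-1/2$ in characteristic $\ne 2$, and one checks $p=2$ is impossible because then $F_4=1\ne0$. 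So $2\gl+1=0$.

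Next I would determine the characteristic. The orbit $\cO=\{\infty,0,\gl,\gl/(1+\gl)\}$ must also be invariant under $\Gamma:x\mapsto(x+\tgw)/\nu$, and $\Gamma$ fixes $\infty$, so $\Gamma$ permutes the three finite points $\{0,\gl,\gl/(1+\gl)\}$. Writing $\gl=-1/2$, these three points are $0$, $-1/2$, and $-1/2/(1/2)=-1$, i.e. $\{0,-1/2,-1\}$; but $\Gamma$ is affine, $x\mapsto(x+\tgw)/\nu$, and an affine map permuting a $3$-element set $\{a,b,c\}\subseteq\GF(q)$ either fixes all of them or cyclically permutes them, which forces the three points to be equally spaced, i.e. $b-a=c-b$. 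For $\{0,-1/2,-1\}$ this is automatic, so instead I would extract the characteristic from nonstandardness: if $L$ is nonstandard then by Theorem~\ref{Tgr2}(iii) we are not in the cases $(\nu,\tgw)=(1,0)$ or $(-1,1)$, so $\Gamma$ must act on $\{0,-1/2,-1\}$ as a nontrivial permutation other than the transposition induced by $x\mapsto x^q$; the only remaining options are the two $3$-cycles, and demanding that a $3$-cycle on $\{0,-1/2,-1\}$ be realised by an affine map forces $-1=-1/2+(-1/2-0)$ to be a genuine $3$-term arithmetic progression that closes up, i.e. $3\cdot(1/2)=0$ in $\GF(q)$, hence $p=3$. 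Then $\gl=-1/2=1$ in $\GF(3)$, so $\txi$ satisfies $\tf(x)=x^2-x-1$ over $\GF(3)$; checking that $x^2-x-1$ is irreducible over $\GF(3)$ and that its roots have order $8=9-1$ (e.g. the roots are $\pm$ a generator since $\tf$ divides $x^8-1$ but not $x^4-1$) shows $\txi$ is primitive in $\GF(9)$.

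Finally, $\Xi=\langle\Lambda,\Gamma\rangle$: with $p=3$ and $\gl=1$ the matrix $\Lambda=\left(\begin{smallmatrix}0&1\\1&1\end{smallmatrix}\right)$ has all entries in $\GF(3)$, and $\Gamma=\left(\begin{smallmatrix}1&\tgw\\0&\nu\end{smallmatrix}\right)$ has $\nu\in\GF(3)^*$ and $\tgw\in\GF(3)$ forced (the finite points of $\cO$ lie in $\GF(3)$, so $\Gamma$ maps $\GF(3)$ to $\GF(3)$, forcing $\tgw,\nu\in\GF(3)$); hence both generators lie in $\PGL(2,3)$, so $\Xi\leq\PGL(2,3)$. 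I expect the main obstacle to be the bookkeeping that nails down $p=3$ as opposed to merely $p\ne 2$ — one has to use the nonstandardness hypothesis (via Theorem~\ref{Tgr2}(iii)) together with the invariance of $\cO$ under $\Gamma$ to rule out all characteristics except $3$, and being careful that the "$3$-cycle forces $p=3$" step is argued cleanly rather than by a lucky coincidence of the particular three numbers.
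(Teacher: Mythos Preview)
Your strategy matches the paper's: compute $\cO=\{\infty,0,-1/2,-1\}$ from $F_4=0$, then use that $\Gamma$ (fixing $\infty$) must permute the three finite points, and extract $p=3$ from the nonstandard cases. However, your execution has two gaps.

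First, your sentence ``the only remaining options are the two $3$-cycles'' is false: excluding the identity and the single transposition $(0\;-1)$ coming from $L(x)=x^q$ still leaves \emph{four} permutations of $\{0,-1/2,-1\}$, namely the two $3$-cycles \emph{and} the transpositions $(0\;-1/2)$ and $(-1/2\;-1)$. You must also show that these two transpositions force $p=3$. (They do: for instance, an affine involution fixing $0$ has $\tgw=0$, and then $\Gamma(-1/2)=-1$ gives $\nu=1/2$ while $\Gamma(-1)=-1/2$ gives $\nu=2$, so $1/2=2$ and $p=3$.)

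Second, your $3$-cycle argument (``arithmetic progression that closes up, so $3\cdot(1/2)=0$'') tacitly assumes $\Gamma$ is a translation, i.e.\ $\nu=1$. A priori a $3$-cycle could come from $\nu$ a primitive cube root of unity, and your wrap-around reasoning does not cover that. One clean way to handle all cases at once: if $p\neq3$, the centroid $(0+(-1/2)+(-1))/3=-1/2$ is fixed by any affine map permuting the three points; since $-1/2$ lies in the set, $\Gamma$ fixes it and permutes $\{0,-1\}$, which forces $(\nu,\tgw)\in\{(1,0),(-1,1)\}$, both standard. Alternatively, do what the paper does: split into the three cases $\tgw\in\{0,1/2,1\}$ (determined by $\Gamma^{-1}(0)=-\tgw\in\{0,-1/2,-1\}$) and check in each that the nonstandard subcases force $p=3$. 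Either way, once $p=3$ is established your remaining steps ($\gl=1$, $\tf=x^2-x-1$ primitive over $\GF(3)$, and $\Lambda,\Gamma\in\PGL(2,3)$ because $\cO=\GF(3)\cup\{\infty\}$) are fine and agree with the paper.
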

\bpf
From (\ref{Eorbit}) we see that if $d=4$, then necessarily $\gl=-1/2$. 
Now since $\cO$ is also invariant under $\Gamma$, we have that
\[ \cO = \{\infty, 0, -1/2, -1\}=\{\infty, \tgw/\nu, (-1/2+\tgw)/\nu,
(-1+\tgw)/\nu\}\].
So we have one of three cases.
\begin{enumerate}
\item
$\tgw=0$. Then $\{-1/2,-1\}=\{(-1/2)/\nu, -1/\nu\}$, so either
$\nu=1$ (which corresponds to the case where $L(x)=x$), or $\nu=1/2=2$, 
so the characteristic $p=3$ and $\nu=-1$. 
\item
$\tgw=1/2$. Then $\{-1/2,-1\}=\{(1/2)/\nu, (-1/2)/nu\}$, so
$p=3$, $\tgw=1/2=-1$, and either $\nu=-1$ or $\nu=1$.
\item
$\tgw=1$. Then $\{-1/2,-1\}=\{1/\nu,(1/2)/\nu\}$, so either
$\nu=-1$ (which corresponds to the case where $L(x)=x^q$), or $\nu=-1/2=-2$, so
that $p=3$, $\nu=1$, and $\tgw=1$.
\end{enumerate}
We are left with four cases. All have $p=3$, so that
$\txi$ is zero of $x^2-x-\gl=x^2-x-1$ and $\txi$ is
primitive in $\GF(9)$. 
Note that these four remaining cases represent 
the different nonstandard ways of mapping the nonstandard subgroup
$\GF(9)^*$ onto itself.
In all these cases, $\gl$, $\tgw$, and $\nu$ are in $\GF(3)$, hence $\Xi$ is
actually a subgroup of $\PGL(2,3)$.
%
\epf
\begin{lem}\label{Ld5}
If $\xi$ is nonstandard of degree 2 over $\GF(q)$ with $q$-order $d=5$, 
then $p=2$, and $\txi=\xi/\gs$ is primitive in $\GF(16)$.
Moreover, $\Xi$ is actually a subgroup of $\PGL(2,4)$.
\end{lem}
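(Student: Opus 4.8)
The plan is to follow the same pattern as the proofs of Lemma~\ref{Lnod3} and Lemma~\ref{Ld4}, namely to extract from the orbit formula~(\ref{Eorbit}) the value of $\gl$ that is forced when $d=5$, and then to use the $\Gamma$-invariance of the orbit $\cO$ to enumerate the finitely many possibilities for $(\tgw,\nu)$. First I would compute the first five entries of $\cO$ from~(\ref{Eorbit}): $\infty$, $0$, $\gl$, $\gl/(1+\gl)$, $\gl(1+\gl)/(1+2\gl)$, and impose that the recurrence closes after five steps, i.e. $\Lambda^5(\infty)=\infty$, equivalently $F_5=0$ in the notation of Theorem~\ref{Tgr2}(i). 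Since $F_0=0,F_1=1,F_2=1,F_3=1+\gl,F_4=1+2\gl,F_5=1+3\gl+\gl^2$, the condition is $\gl^2+3\gl+1=0$, and one must simultaneously check $F_1,\dots,F_4\neq0$ so that the orbit genuinely has size exactly $5$ (this rules out $\gl=0,-1,-1/2$; note $1+3\gl+\gl^2=0$ is incompatible with those). The discriminant of $\gl^2+3\gl+1$ is $9-4=5$, so the existence of such $\gl$ in the prime field behaviour is governed by whether $5$ is a square; I expect the calculation to show that the consistency of the whole configuration forces $p=2$, where $\gl^2+\gl+1=0$ makes $\gl$ a primitive cube root of unity in $\GF(4)$.

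Next I would use that $\cO$ is also $\Gamma$-invariant, where $\Gamma: x\mapsto (x+\tgw)/\nu$ fixes $\infty$. So $\Gamma$ permutes the four finite points $\{0,\gl,\gl/(1+\gl),\gl(1+\gl)/(1+2\gl)\}$ of $\cO$. As in the $d=4$ case, $\tgw$ must be the image of $0$, hence $\tgw\in\cO\setminus\{\infty\}$; each of the (at most four) choices of $\tgw$ then pins down $\nu$ up to finitely many values by matching the remaining three images, and for each surviving $(\tgw,\nu)$ one reads off the characteristic. The standard cases $L(x)=x$ (i.e. $\nu=1,\tgw=0$) and $L(x)=x^q$ (i.e. $\nu=-1,\tgw=1$, which in characteristic~$2$ coincides with $\nu=1,\tgw=1$) must be discarded as they are not nonstandard. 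I expect every remaining case to have $p=2$ and $\gl$ a primitive cube root of unity, so that $\tf(x)=x^2-x-\gl=x^2+x+\gl$ over $\GF(4)$ has $\txi$ as a root; checking $\ord(\txi)=15$ (equivalently that $\txi\notin\GF(4)$ and $\txi$ is not a root of a proper divisor of $x^{15}-1$ of degree~$2$) shows $\txi$ is primitive in $\GF(16)$. Finally, since in all surviving cases $\gl,\tgw,\nu$ lie in $\GF(4)$, both generators $\Lambda$ and $\Gamma$ of $\Xi$ have entries in $\GF(4)$, so $\Xi\leq\PGL(2,4)$.

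The main obstacle I anticipate is the bookkeeping in the second step: with $d=5$ there are four finite points rather than three, so $\Gamma$ can act as a $4$-cycle, a product of two transpositions, a single transposition, or trivially, and one must check that $\Gamma$ is actually a linear fractional map of the form $x\mapsto(x+\tgw)/\nu$ (fixing $\infty$) — so in fact it can only act as a power of a single $5$-cycle's restriction, severely limiting the combinatorics, but verifying which cyclic shifts of the orbit are realizable by an affine map $x\mapsto ax+b$ requires matching three independent equations against two parameters $(\nu,\tgw)$ and discarding the inconsistent shifts. A secondary subtlety is making sure that the algebraic identities forcing $p=2$ are genuinely forced (and not merely sufficient): one must verify that $\gl^2+3\gl+1=0$ together with $\gl$ lying in a field where the $\Gamma$-compatibility equations have a solution leaves no room for odd characteristic. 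Once those checks are done, the conclusions $p=2$, $\txi$ primitive in $\GF(16)$, and $\Xi\leq\PGL(2,4)$ all follow immediately, exactly paralleling Lemma~\ref{Ld4}.
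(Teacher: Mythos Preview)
Your plan is correct and follows essentially the same route as the paper's proof: derive $\gl^2+3\gl+1=0$ from $F_5=0$, then run a case analysis on the $\Gamma$-invariance of $\cO$ to force $p=2$ and $\gl,\tgw,\nu\in\GF(4)$. Two small remarks: first, the paper simplifies the finite orbit points using $\gl^2+3\gl+1=0$ to the cleaner form $\cO=\{\infty,0,\gl,-1-\gl,-1\}$, which makes the case-check much lighter than working with $\gl/(1+\gl)$ etc.; second, your statement ``$\tgw$ must be the image of $0$, hence $\tgw\in\cO\setminus\{\infty\}$'' is not quite right --- since $\Gamma(x)=(x+\tgw)/\nu$, the preimage of $0$ is $-\tgw$, so it is $-\tgw$ (not $\tgw$) that lies in $\cO\setminus\{\infty\}$, giving the four cases $\tgw\in\{0,-\gl,1+\gl,1\}$ as in the paper.
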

\bpf
From (\ref{Eorbit}) we see that if $d=5$, then necessarily 
$\gl^2+3\gl+1=0$, so that $\gl=-(\gl+1)^2$. 
For later use, we remark that in characteristic $p=2$, we have that $\gl$ is
primitive in $\GF(4)$ and $\txi$, the zero of 
$x^2+x+\gl$, is primitive in $\GF(16)$. 
So we are done if we can prove that in all cases $p=2$, or $(\nu,\tgw)=(1,0)$
(corresponding to the case where $L(x)=x$), or $(\nu,\tgw)=(-1,1)$
(corresponding to the case where $L(x)=x^q$).

Now since $\cO$ is also invariant under $\Gamma$, we have that
\[ \cO = \{\infty, 0, \gl, -1-\gl, -1\}=
\{\infty, \tgw/\nu, (\gl+\tgw)/\nu, (-1-\gl+\tgw)/\nu, (-1+\tgw)/\nu\}.\]
So we have one of four cases for $\tgw$.
\begin{enumerate}
\item
$\tilde{\omega}=0$ and 
$\{\gl, -1-\gl,-1\}=\{\gl/\nu, (-1-\gl)/\nu, -1/\nu\}$.
Then $\nu\in\{1, \gl/(-(1+\gl)), -\gl\}= \{1, 1+\gl, -\gl\}$, so we have one of
the following.
\begin{itemize}
\item[\rm (a)]
$\nu=1$ (corresponding to the case where $L(x)=x$);
\item[\rm (b)]
$\nu=1+\gl$, $\{\gl,-1\}=\{-1,-1/(1+\gl)\}$. Hence $\gl=-1/(1+\gl)$, or
$\gl^2+\gl+1=0$; combined with the other equation for $\gl$ this shows that, in
addition, $p=2$.
\item[\rm (c)]
$\nu=-\gl$, $\{\gl,-1-\gl\}=\{(1+\gl)/\gl,1/\gl\}$. So either
$\gl=(1+\gl)/\gl$ and $-1-\gl=1/\gl$, whence $p=2$, or $\gl=1/\gl$, which leads
to an impossibility.
\end{itemize}
\item
$\tilde{\omega}=-\gl$ and
$\{\gl, -1-\gl,-1\}=\{-\gl/\nu, (-2\gl-1)/\nu, (-\gl-1)/\nu\} =
\{-\gl/\nu, (\gl^2+\gl)/\nu, (-\gl-1)/\nu\}$. 
Then $\nu\in\{-1, \gl/(1+\gl)=-(1+\gl), \gl\}$, so we have one of the
following.
\begin{itemize}
\item[\rm (a)] $\nu=-1$. This leads to $p=2$ or an impossibility.
\item[\rm (b)] $\nu=-(1+\gl)$. Then either $p=2$, or an impossibility.
\item[\rm (c)] $\nu=\gl$. Here either $p=2$ or an impossibility.
\end{itemize}
\item
$\tilde{\omega}=1+\gl$ and
$\{\gl, -1-\gl,-1\}=\{(1+\gl)/nu, (2\gl+1)/\nu,\gl/\nu\}=
\{(1+\gl)/\nu, -(\gl^2+\gl)/\nu,\gl/\nu\}$, so we have one of the following.
\begin{itemize}
\item[\rm (a)] $\nu=-(1+\gl)$. Then we have $-(1+\gl)=-\gl/(1+\gl)$, 
which leads to $p=2$. 
\item[\rm (b)] $\nu=-(\gl^2+\gl)$. Then either an impossiblity or 
$\gl=-1/(1+\gl)$, which leads to $p=2$.
\item[\rm (c)] $\nu=-\gl$. Then either $p=2$, or an impossibility.
\end{itemize}
\item
$\tilde{\omega}=1$ and
$\{\gl, -1-\gl,-1\}=\{1/\nu, (1+\gl)/\nu, -\gl/\nu\}$, so we have one of the
following. 
\begin{itemize}
\item[\rm (a)] $\nu=-1$. This corresponds to the case where $L(x)=x^q$.
\item[\rm (b)] $\nu=-(1+\gl)$. This leads to $p=2$ or an impossibility.
\item[\rm (c)] $\nu=\gl$. Here either $\gl=1/\gl$ and $-1-\gl=(1+\gl)/\gl$, 
which leads to $p=2$, or
$\gl=(1+\gl)/\gl$ and $-1-\gl=1/\gl$, which again leads to $p=2$.
\end{itemize}
\end{enumerate}
So in all nonstandard cases we have $p=2$.
Moreover, in all these cases, $\gl$, $\tgw$, and $\nu$ are in $\GF(4)$, hence 
$\Xi$ is actually a subgroup of $\PGL(2,4)$.
\epf
\section{\label{Sgroup}A subgroup in $\PGL(2,q)$}
The groups $\PGL(2,q)$ are one of the few groups for which the
complete subgroup structure is known. 
In this section, we will use this knowledge to obtain further information on 
the subgroup $\Xi$ of $\PGL(2,q)$ from Theorem~\ref{Tgr2}.
For our purposes, the following is sufficient.
\begin{teor}[\cite{hup}, \cite{suz}, \cite{dick}]\label{Tpgl-sub}
Let $q=p^r$ with $p$ prime. 

\noindent
(i) If $M$ is a non-identity element in $\PGL(2,q)$ of order $k$, 
with $f$ fixed points, then all orbits of size $>1$ have
size $k$, and either $f=1$, $k=p$, or $f=2$, $k|q-1$, or $f=0$, $k|q+1$.

\noindent
(ii) The subgroups of $\PGL(2,q)$ are as follows:
\begin{enumerate}
\item Cyclic subgroups $C_k$, of order $k=2$ (if $p$ is odd), or of order $k>2$
with $k|q\pm1$.
\item Dihedral subgroups $D_{2k}$ of order $2k$, with $k=2$ (if $p$ is odd), or
with $k>2$ and $k|q\pm1$.
\item Elementary abelian subgroups $E_{p^k}$, of order $p^k$ with $0\leq k\leq
r$.
\item A semidirect product of the elementary subgroup $E_{p^k}$, where $1\leq
k\leq r$, and the cyclic group $C_\ell$, where $\ell|q-1$ and $\ell|p^k-1$.
\item Subgroups isomorphic to $A_4\cong\PSL(2,3)$, $S_4\cong\PGL(2,3)$, or 
$A_5\cong\PSL(2,4)$.
\item One conjugacy class of subgroups isomorphic to $\PSL(2,p^k)$, where $k|r$.
\item One conjugacy class of subgroups isomorphic to $\PGL(2,p^k)$, where $k|r$.
\end{enumerate}
\end{teor}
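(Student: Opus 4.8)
This is the classical classification of subgroups of $\PGL(2,q)$, essentially due to Dickson; the plan is to invoke \cite{dick}, \cite{hup}, \cite{suz}, but for completeness I indicate the argument one would give.

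For part (i), the key observation is that over $\GF(q)$ every non-scalar $2\times2$ matrix is, up to conjugacy and a scalar factor, of exactly one of three types: a diagonal matrix ${\rm diag}(1,\mu)$ with $\mu\neq1$ (split semisimple), an upper unitriangular Jordan block (parabolic), or a matrix whose characteristic polynomial is irreducible over $\GF(q)$ (non-split semisimple). In the first type the element fixes the two points $0,\infty$ of $\PG(1,q)$, has order $\ord(\mu)$ dividing $q-1$, and acts on $\GF(q)\setminus\{0\}$ by $x\mapsto\mu x$, so every non-trivial orbit has size $\ord(\mu)$. In the parabolic type it fixes only $\infty$, acts on $\GF(q)$ as a non-zero translation of the $\GF(p)$-vector space $\GF(q)$, hence has order $p$ with all orbits of size $p$. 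In the last type the cyclic group it generates sits inside a non-split torus of order $q+1$ which acts regularly on the $q+1$ points of $\PG(1,q)$, so the element has no fixed point, order dividing $q+1$, and all orbits of size equal to its order. Since any cyclic group has all orbits of size dividing its order, and since we have just seen these non-trivial orbits are full in each case, part~(i) follows.

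For part (ii), let $G\leq\PGL(2,q)$ and split on whether $p$ divides $|G|$. If $p\nmid|G|$, then by part~(i) each non-identity element of $G$ fixes exactly $0$ or $2$ points of $\PG(1,q)$; counting incident pairs (element, fixed point) by Burnside on the set of such ``poles'' leads to the Diophantine relation $2-2/|G|=\sum_i(1-1/|G_{x_i}|)$ over orbit representatives, whose only solutions force $G$ to be cyclic, dihedral, or isomorphic to $A_4$, $S_4$, or $A_5$ --- families 1, 2, and 5. One then checks which of these embed for a given $q$: a cyclic subgroup of order $k>2$ must lie in a split or non-split torus, so $k\mid q-1$ or $k\mid q+1$; the dihedral case needs in addition an inverting involution; and $A_4\cong\PSL(2,3)$, $S_4\cong\PGL(2,3)$, $A_5\cong\PSL(2,4)$ embed under the standard divisibility conditions on $q$. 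If instead $p\mid|G|$, pick a Sylow $p$-subgroup $U$; a conjugate of $U$ lies in the upper unitriangular matrices, which form an elementary abelian group isomorphic to the additive group of $\GF(q)$, and $U$ fixes a unique point of $\PG(1,q)$. Either all of $G$ fixes that point, so $G$ lies in the Borel subgroup of affine maps $x\mapsto ax+b$ and families 3 and 4 are read off directly from the subgroup structure of that affine group; or $G$ fixes no point, and then an analysis of the normalizers and pairwise intersections of the Sylow $p$-subgroups produces a subfield $\GF(p^k)$ with $k\mid r$ such that, up to conjugacy, $\PSL(2,p^k)\leq G\leq\PGL(2,p^k)$, giving families 6 and 7.

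The main obstacle is exactly this final step: proving that a subgroup whose order is divisible by $p$ and which fixes no point of the line is sandwiched between $\PSL(2,p^k)$ and $\PGL(2,p^k)$ for a uniquely determined subfield $\GF(p^k)$. This requires the careful Sylow-theoretic bookkeeping --- how two distinct Sylow $p$-subgroups generate, the Borel structure of their normalizers, and how the field generated by the relevant matrix entries controls $k$ --- that constitutes the technical core of Dickson's theorem; the remaining cases are comparatively routine once part~(i) is in hand.
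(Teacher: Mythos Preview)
Your outline is a correct sketch of Dickson's classical argument, and there is no mathematical gap in it beyond the openly acknowledged technical core of the final Sylow-theoretic step. However, the paper does not attempt a proof of this theorem at all: it is stated as a quotation from \cite{hup}, \cite{suz}, \cite{dick}, and the only remark the paper adds is that the cited references classify subgroups of $\PSL(2,q)$ rather than $\PGL(2,q)$, and that the $\PGL$ case follows because $\PGL(2,q)=\PSL(2,q)$ when $q$ is even, while for $q$ odd one uses the embedding $\PGL(2,q)\leq\PSL(2,q^2)$ together with the fact that $\PSL(2,q)$ is the unique index-two subgroup of $\PGL(2,q)$. So your proposal goes considerably further than the paper, which treats the result as background; the one point the paper makes that you do not is this reduction from the $\PGL$ classification to the $\PSL$ classification actually proved in the references.
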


In the references, the classifications are given for subgroups of $\PSL(2,q)$.
If $q$ is even, then $\PSL(2,q)=\PGL(2,q)$.
To obtain the classification for $\PGL(2,q)$, note that
if $q$ is odd, then
$\PGL(2,q)$ is a subgroup of $\PSL(2,q^2)$ and has a unique subgroup 
$\PSL(2,q)$, of index two.
A similar classification
has been used e.g.\ in \cite{cot} and \cite{cmot} in the case where $q$ is odd.

We now use this classification to show the following.
\begin{teor}\label{TXi}
The group $\Xi$ from Theorem~\ref{Tgr2} is one of the following.
\begin{itemize}
\item
A cyclic group, in the case where $L(x)=x$;
\item
a dihedral group, in the case where $L(x)=x^q$;
\item
a group of the form $\PSL(2,q_0)$ or $\PGL(2,q_0)$,
in the nonstandard case,
with $d=q_0+1>3$ and $q=q_0^t$, where $t$ is odd.
\end{itemize}
\end{teor}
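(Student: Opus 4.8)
The plan is to combine three ingredients: the structural facts about $\Xi$ collected in Theorem~\ref{Tgr2}, the small-order analysis of Lemmas~\ref{Lnod3}, \ref{Ld4}, and \ref{Ld5}, and the subgroup classification of $\PGL(2,q)$ from Theorem~\ref{Tpgl-sub}. First I would dispose of the two ``standard'' cases. If $L(x)=x$ then $\Gamma$ is the identity and $\Xi=\langle\Lambda\rangle$ is cyclic of order $d$; if $L(x)=x^q$ then by Theorem~\ref{Tgr2}(iii) we have $\nu=-1$, $\tgw=1$, so $\Gamma$ is an involution, and one checks $\Gamma\Lambda\Gamma^{-1}=\Lambda^{-1}$, so $\Xi=\langle\Lambda,\Gamma\rangle$ is dihedral of order $2d$. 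So from now on assume $\xi$ is nonstandard; by Lemma~\ref{Lnod3} we may assume $d\geq4$, and by the degree bound $d\geq m+1=3$ this is the only restriction.

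The heart of the argument is to run through the list in Theorem~\ref{Tpgl-sub}(ii) and eliminate every possibility except items~6 and~7. The key constraint is that $\Xi$ contains the element $\Lambda$ of order $d$, which by Theorem~\ref{Tgr2}(i) has \emph{no} fixed points on $\GF(q)^+$, hence by Theorem~\ref{Tpgl-sub}(i) satisfies $d\mid q+1$; in particular $d$ does not divide $q-1$, and $d\neq p$. Moreover $\Xi$ genuinely contains $\Gamma\notin\langle\Lambda\rangle$ (since we are in the nonstandard case), so $\Xi$ is non-cyclic. This immediately rules out item~1 (cyclic). A dihedral group (item~2) containing $\Lambda$ of order $d$ would have its cyclic part equal to $\langle\Lambda\rangle$, forcing $d\mid q\pm1$; since $d\mid q+1$ this is consistent, but then every element outside $\langle\Lambda\rangle$ is an involution inverting $\Lambda$, and one shows the only $q$-polynomial producing such a $\Gamma$ is $L(x)=x^q$ — contradicting nonstandardness. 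Items~3 and~4 are eliminated because a group of order $p^k$ or a semidirect product $E_{p^k}\rtimes C_\ell$ has all its elements of order a power of $p$ times something dividing $q-1$; such a group cannot contain an element of order $d$ with $d\mid q+1$, $d>2$, $d\neq p$ (here one uses $\gcd(q-1,q+1)\mid 2$). Item~5 ($A_4$, $S_4$, $A_5$) is where the small-$d$ lemmas do the work: the element orders in these groups are bounded (at most $3,4,5$ respectively for the cyclic subgroups), so $d\in\{3,4,5\}$; $d=3$ is excluded by Lemma~\ref{Lnod3}, and Lemmas~\ref{Ld4} and~\ref{Ld5} show that when $d=4$ or $d=5$ the group $\Xi$ is already conjugate into $\PGL(2,3)$ or $\PGL(2,4)$, i.e. it is really of type 6 or 7 with $q_0=3$ or $q_0=4$ — so these cases are absorbed into the desired conclusion rather than being genuine exceptions.

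What remains is items~6 and~7: $\Xi\cong\PSL(2,q_0)$ or $\PGL(2,q_0)$ with $q_0=p^k$, $k\mid r$. It remains to pin down $d$ and the parity of $t=r/k$. Since $\Lambda\in\Xi$ has order $d$ and no fixed points, and since the fixed-point-free elements of $\PGL(2,q_0)$ (resp. $\PSL(2,q_0)$) have orders dividing $q_0+1$ (resp. $(q_0+1)/\gcd(2,q_0-1)$), the orbit $\cO$ of size $d$ under $\Xi$ must have $d\mid q_0+1$. But $\cO\subseteq\GF(q)^+$ and $\Xi$ fixes $\cO$ setwise while acting on the projective line $\PG(1,q_0)$ over the subfield; the natural orbit of $\PGL(2,q_0)$ on $\PG(1,q_0)$ has size $q_0+1$, and since $\cO$ contains $\infty$, $0$, and $\gl$ and is $\Xi$-invariant, a counting/transitivity argument forces $d=q_0+1$. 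Finally, $q=q_0^t$ with $t=r/k$; to see $t$ is odd, note that $\PGL(2,q_0)\le\PGL(2,q)$ with the subfield embedding, and $\Xi$ acting on $\PG(1,q)$ has an orbit of size $q_0+1$ disjoint from $\PG(1,q_0')$ for intermediate fields — more directly, $\Lambda$ has order $d=q_0+1$ in $\PGL(2,q)$, which requires $q_0+1\mid q+1=q_0^t+1$, and $q_0+1\mid q_0^t+1$ holds iff $t$ is odd. I expect the main obstacle to be the bookkeeping in eliminating the dihedral case and in ruling out that $\Xi$ sits inside a proper $\PGL(2,q_0')\le\PGL(2,q_0)$ — i.e. proving $d$ equals exactly $q_0+1$ and not a proper divisor — which is where one must use that $\cO$ genuinely generates the subfield $\GF(q_0)$ (via the explicit orbit formula \eqref{Eorbit}) together with the minimality built into the definition of $\Xi$ as generated by $\Lambda$ and $\Gamma$.
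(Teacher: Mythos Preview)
Your approach is essentially the same as the paper's: run through the list in Theorem~\ref{Tpgl-sub}(ii), using the fact that $\Lambda$ has order $d>2$ with $d\mid q+1$ and no fixed points to eliminate items 1--4, and the small-$d$ lemmas to absorb item 5 into items 6--7. Two points deserve comment.

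First, the step you yourself flag as the obstacle --- showing $d=q_0+1$ rather than a proper divisor --- is handled in the paper more directly than your sketch suggests, and without any appeal to $\cO$ ``generating'' $\GF(q_0)$ or to minimality of $\Xi$. Since the subgroups of type 6 or 7 form a single conjugacy class, after conjugation $\Xi$ is literally the group of matrices with entries in $\GF(q_0)$ acting on $\PG(1,q)$. Its orbits there are $\GF(q_0)^+$ of size $q_0+1$, possibly $\GF(q_0^2)\setminus\GF(q_0)$ of size $q_0(q_0-1)$, and regular orbits of size $|\Xi|\in\{q_0(q_0^2-1),\,q_0(q_0^2-1)/2\}$. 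Now $\cO$ is a $\Xi$-orbit of size $d$ with $d\mid q+1$, so $\gcd(d,q_0)=1$; every orbit size except $q_0+1$ is divisible by $q_0$, so $d=q_0+1$ is forced immediately. Your detour through $d\mid q_0+1$ via fixed-point-free element orders is correct but unnecessary.

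Second, a small gap: your claim that $\Gamma\notin\langle\Lambda\rangle$ in the nonstandard case needs a reason. The easiest is that $\Gamma$ fixes $\infty$ while no nontrivial power of $\Lambda$ has any fixed point, so $\Gamma\in\langle\Lambda\rangle$ forces $\Gamma=I$, i.e.\ $L(x)=x$. The paper instead checks directly that $\Lambda\Gamma=\Gamma\Lambda$ (which holds in any cyclic group) forces $(\tgw,\nu)=(0,1)$. Similarly, for the dihedral case the paper computes that $\Gamma$ and $\Gamma\Lambda$ both having order two forces $(\tgw,\nu)=(1,-1)$; your version via $\Gamma\Lambda\Gamma^{-1}=\Lambda^{-1}$ leads to the same equations.
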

\begin{proof}
We break the proof into a number of cases.

\noindent
(1) 
The subgroup $\Xi$ cannot be cyclic except when $L(x)=x$. 
Indeed, if $\Xi$ is cyclic, then $\Lambda$ and $\Gamma$ commute, that is,
$\Lambda\Gamma=\Gamma\Lambda$. It is easily verified that this happens if and
only if $\tgw=0$ and $\nu=1$, that is, if $\Gamma=I$.

\noindent
(2) 
The subgroup $\Xi$ cannot be dihedral except when $L(x)=x^q$.
Indeed, $\Lambda$ has order $d>2$, so if $\Xi$ is dihedral, then both $\Gamma$
and $\Gamma\Lambda$ have order two. Hence $\nu=-1$ and $\tgw=-\nu$, so according
to 
Theorem~\ref{Tgr2}, we have $L(x)=x^q$.

\noindent
(3) 
The subgroup $\Xi$ cannot be elementary abelian of order $p^k$. 
Indeed, since $d|q+1$, we have $(d,p)=1$, so the order of $\Lambda$ cannot be a
power of $p$.

\noindent
(4)
The subgroup $\Xi$ cannot be semisimple product of an elementary abelian group 
of order $p^k$ with a cyclic group of order $\ell$, if $\ell|q-1$ and
$\ell|p^k-1$. Indeed, suppose that this would be the case.
The semidirect product has cardinality $p^k\ell$, and
since $(d,p)=1$, we would conclude that $d|\ell$, hence $d|q-1$. Now we also 
have that $d|q+1$, so it would follow that $d|2$, which is impossible if $d>2$.

\noindent
(5) 
If the subgroup $\Xi$ is one of $A_4$, $S_4$, or
$A_5$, then the order $d>2$ 
of the element $\Lambda\in \Xi$ is one of $3$, $4$, or $5$. These cases were
handled in Section~\ref{Sgr2}. In Lemma~\ref{Lnod3}, it was shown that the case
$d=3$ is not possible. In Lemma~\ref{Ld4} it was shown that if $d=4$, then 
$p=3$ and the group $\Xi$ is a subgroup of $\PGL(2,q_0)$ for $q_0=3$.
Finally, in Lemma~\ref{Ld5}, it was shown that if $d=5$, then $p=2$ and the
group $\Xi$ is a subgroup of $\PGL(2,q_0)$ with $q_0=4$.
As a consequence, since we are not in one of the cases  (1-4) above, we must
have one of the cases (6), (7) below.

\noindent
(6), (7) 
Here we have that $\Xi$ is isomorphic to either $\PSL(2,q_0)$ or $\PGL(2,q_0)$, 
with $\GF(q_0)$ a subfield of $\GF(q)$.
Such a subgroup is conjugated in $\PGL(2,q)$ to the ``obvious'' subgroup
consisting of invertible matrices with entries in $\bFqn$. 
It is easily verified that these two groups both have
one orbit $\GF(q_0)^+$ of size $q_0+1$, and one orbit
$\GF(q_0^2)\setminus\GF(q_0)$ of size $q_0^2-q_0$. Moreover, it is easy to show
that both groups act regularly on $\GF(q^2)\setminus \GF(q_0^2)$.
Indeed, this immediately follows from the fact that the fixed points in
$\GF(q)^+$ of a non-identity map $M: x\mapsto (ax+b)/(cx+d)$ with 
$a,b,c,d\in\GF(q_0)$ are the zeros of the non-trivial polynomial $cx^2
+(d-a)x-b$ of degree two over $\GF(q_0)$, hence are contained in $\GF(q_0^2)$.
(For more details, see, e.g., \cite{hq}.)
So all other orbits of $\Xi$ are of size $|\Xi|$, 
hence have a size equal to $q_0(q_0^2-1)$ or $q_0(q_0^2-1)/2$ (if $q$ is odd and
$G_q=\PSL(2,q_0)$. Since $\Xi$ has an orbit of size $d$ and since $d|q+1$, we
have $(d,q)=1$ and hence we
must have that $d=q_0+1$.

Now note that since $\Lambda$ has order $d$ and no fixed points, all
orbits of $\Lambda$ have size $d$, hence $d|q+1$.
So in the nonstandard case, we have $q=p^r$ and $d=q_0+1$, with $q_0$ of the
form $q_0=p^s$ and with $s|r$, that is, with $q=q_0^t$ for some $t$. 
Since $d|q+1$, it follows that $t$ is odd. 
\end{proof}
Next, we want to show that if $G$ is isomorphic to $\PSL(2,q_0)$ or
$\PGL(2,q_0)$, then $\gl$, $\nu$, and $\tgw$ are actually contained in
$\bFqn$.
To this end, we need some preparation. 
If $M=(M_{i,j})$ is a matrix over $\GF(q)$, where $q=p^r$, then we write 
$M^{(p^s)}$ to denote the matrix with entries $M_{i,j}^{p^s}$.
\begin{lem}[\cite{hq}]\label{Lhq1}
Let $q=q_0^t$ and let $M\in\PGL(2,q)$. Then $M$ is
contained in $\PGL(2,q_0)$ if and only if $M^{(q_0)}=\phi M$ holds for some
$\phi\in\GF(q)^*$.
\end{lem}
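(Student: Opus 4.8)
The plan is to exploit the fact that the Frobenius map $x\mapsto x^{q_0}$ generates the Galois group $\mathrm{Gal}(\GF(q)/\GF(q_0))$, which is cyclic of order $t$, so that the fixed field of the power map $M\mapsto M^{(q_0)}$ on matrix entries is exactly $\mathrm{Mat}_2(\GF(q_0))$. First I would prove the easy direction: if $M\in\PGL(2,q_0)$, then $M$ has a representative matrix with all entries in $\GF(q_0)$, so $M^{(q_0)}=M$ as matrices, and hence $M^{(q_0)}=\phi M$ in $\PGL(2,q)$ with $\phi=1$. (One must be a little careful here since an element of $\PGL(2,q)$ is only a matrix up to scalar, but choosing any lift with entries in $\GF(q_0)$ settles it.)

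For the converse, suppose $M^{(q_0)}=\phi M$ for some $\phi\in\GF(q)^*$, where now $M$ is a fixed representative in $\GL(2,q)$. The key step is to iterate this relation: applying the Frobenius $(q_0)$-power again gives $M^{(q_0^2)}=(M^{(q_0)})^{(q_0)}=\phi^{q_0}M^{(q_0)}=\phi^{q_0}\phi\,M$, and inductively $M^{(q_0^k)}=\phi^{1+q_0+\cdots+q_0^{k-1}}M$ for all $k\ge 1$. Taking $k=t$ and using $q=q_0^t$, the left side is $M^{(q)}=M$ (every entry of $M$ lies in $\GF(q)$, so is fixed by the $q$-power), so we get $\phi^{N}M=M$ with $N=1+q_0+\cdots+q_0^{t-1}=(q-1)/(q_0-1)$, whence $\phi^{N}=1$.

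The next step is to rescale $M$ so as to remove $\phi$. Since $\GF(q)^*$ is cyclic of order $q-1$ and $\phi^{N}=1$ with $N=(q-1)/(q_0-1)$, the element $\phi$ lies in the unique subgroup of order $N$; but I actually want to solve $\mu^{q_0-1}=\phi$, i.e.\ to find $\mu\in\GF(q)^*$ with $\mu^{q_0}/\mu=\phi$. Such a $\mu$ exists precisely because the norm-type map $\mu\mapsto\mu^{q_0-1}$ on $\GF(q)^*$ has image equal to the subgroup of order $(q-1)/(q_0-1)=N$, which contains $\phi$. Replacing $M$ by $M':=\mu^{-1}M$ (same element of $\PGL(2,q)$) gives $M'^{(q_0)}=(\mu^{-1})^{q_0}M^{(q_0)}=\mu^{-q_0}\phi M=\mu^{-q_0}\mu^{q_0-1}\mu^{1}\cdot\mu^{-1}M\cdots$ — more cleanly, $M'^{(q_0)}=\mu^{-q_0}\phi M=\mu^{-q_0}\cdot\mu^{q_0-1}\cdot M=\mu^{-1}M=M'$. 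Hence $M'$ is fixed entrywise by Frobenius, so $M'\in\GL(2,q_0)$, and therefore $M\in\PGL(2,q_0)$ as claimed.

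The main obstacle is the scalar bookkeeping in $\PGL$: one must consistently distinguish a projective element from its matrix lifts and verify that the surjectivity of $\mu\mapsto\mu^{q_0-1}$ onto the subgroup of order $N$ indeed puts $\phi$ in the image — this is where the hypothesis $\phi^{N}=1$ derived by iteration is used crucially. Everything else is a routine application of the cyclicity of $\GF(q)^*$ and of $\mathrm{Gal}(\GF(q)/\GF(q_0))$.
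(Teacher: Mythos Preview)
Your argument is correct: the forward direction is immediate, and for the converse you iterate the Frobenius to force $\phi^{N}=1$ with $N=(q-1)/(q_0-1)$, then use surjectivity of $\mu\mapsto\mu^{q_0-1}$ onto the order-$N$ subgroup of $\GF(q)^*$ to rescale $M$ into $\GL(2,q_0)$. This is exactly the Galois-descent (Hilbert~90 type) argument, and the paper explicitly remarks that the cited reference \cite{hq} proceeds this way.

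The paper, however, chooses a different route for its in-text proof. It interprets $M\in\PGL(2,q_0)$ as the condition that the M\"obius map $x\mapsto(ax+b)/(cx+d)$ fixes $\GF(q_0)\cup\{\infty\}$ setwise, rewrites this as $((ax+b)/(cx+d))^{q_0}=(ax+b)/(cx+d)$ for all $x\in\GF(q_0)$, cross-multiplies to obtain a quadratic in $x$ that vanishes identically on $\GF(q_0)$, and reads off the relation $M^{(q_0)}=\phi M$ from the three vanishing coefficients. Your approach is cleaner in its bookkeeping, generalises verbatim to $\PGL(m,q)$ for any $m$, and makes transparent why the scalar $\phi$ can be absorbed; the paper's approach is specific to $2\times2$ matrices but avoids any appeal to the structure of $\GF(q)^*$ beyond the definition of $\GF(q_0)$ as the fixed field of $x\mapsto x^{q_0}$.
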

\bpf
The proof in \cite{hq} uses Galois theory. For completeness' sake, we sketch a
simple proof here. (In fact, many different proofs are possible.)
The matrix
\[ M = \left(
\begin{array}{cc}
a & b\\
c & d
\end{array}
\right) 
\]
is in $\PGL(2,q_0)$ precisely when some multiple $\gb M$ of $M$ has all its
entries in $\bFqn$, so when the map $x\mapsto (ax+b)/(cx+d)$ fixes
$\bFqn^+$ as a set. Now $((ax+b)/(cx+d))^{q_0}=(ax+b)/(cx+d)$ for all
$x\in\bFqn^+$ leads to a second degree equation that is identically zero on
$\bFqn$, so has all coefficients equal to zero. From the resulting three
equations, the lemma follows.
\epf
Next, for a matrix $M$ over $\GF(q)$, we write $\det(M)$ and $\Tr(M)$ to denote
the determinant and trace of $M$, respectively. Also, we write $M^A$ to denote
the conjugate $AMA^{-1}$ of $M$ by $A$.
over 
\begin{lem}\label{Lhq2} Let $\bFqn$ be a subfield of $\GF(q)$.

(i) A matrix $M$ over $\GF(q)$ is contained in a subgroup of
$\PGL(2,q)$ isomorphic to $\PSL(2,q_0)$ or $\PGL(2,q_0)$ if and only if 
$(M^A)^{(q_0)} = \phi M^A$ for some matrix $A$ in $\PGL(2,q)$ and some
$\phi\in\GF(q)^*$, where
$\phi^2=\det(M)^{q_0-1}$ and either $\phi=\Tr(M)^{q_0-1}$ or $\Tr(M)=0$.

(ii) If a matrix
$M$ over $\GF(q)$ is contained in a subgroup of $\PGL(2,q)$ isomorphic to
$\PSL(2,q_0)$ or $\PGL(2,q_0)$, then either $\Tr(M)=0$ or 
$\Tr(M)^{2(q_0-1)}= \det(M)^{q_0-1}$.
\end{lem}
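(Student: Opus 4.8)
The plan is to obtain (ii) as an immediate corollary of (i), and to prove (i) from two facts already established above: first, that every subgroup of $\PGL(2,q)$ isomorphic to $\PSL(2,q_0)$ or $\PGL(2,q_0)$ (one of the subgroups in items~(6),(7) of Theorem~\ref{Tpgl-sub}) is conjugate in $\PGL(2,q)$ to the ``standard'' subgroup consisting of classes of invertible matrices with all entries in $\bFqn$ --- this is the conjugacy statement used in the proof of Theorem~\ref{TXi}; and second, Lemma~\ref{Lhq1}, which characterises membership in a conjugate of $\PGL(2,q_0)$ by an identity $N^{(q_0)}=\phi N$ with $\phi\in\GF(q)^*$. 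Throughout I would work with a fixed matrix representative of $M$ in $\GL(2,q)$, noting first that each of the three conditions in the statement is unchanged when the representative is rescaled (replacing $M$ by $\mu M$ multiplies $\phi$, $\det(M)$, $\Tr(M)$ by $\mu^{q_0-1}$, $\mu^2$, $\mu$ respectively, and all three relations persist), so the choice is harmless.

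For the ``only if'' part of (i): suppose $M$ lies in such a subgroup $H$, and choose $A\in\PGL(2,q)$ conjugating $H$ into the standard subgroup. Then the chosen representative of $M^A=AMA^{-1}$ is $\mu N$ for some $\mu\in\GF(q)^*$ and some matrix $N$ with entries in $\bFqn$ (this covers the $\PSL$ case as well, since the standard $\PSL(2,q_0)$ lies inside the standard $\PGL(2,q_0)$). Applying the entrywise $q_0$-power map gives $(M^A)^{(q_0)}=\mu^{q_0}N^{(q_0)}=\mu^{q_0}N=\phi M^A$ with $\phi=\mu^{q_0-1}$, the first asserted identity. Since $\det$ and $\Tr$ are conjugation invariants, $\det(M)=\mu^2\det(N)$ with $\det(N)\in\bFqn^*$, hence $\det(M)^{q_0-1}=\mu^{2(q_0-1)}=\phi^2$; and $\Tr(M)=\mu\Tr(N)$ with $\Tr(N)\in\bFqn$, so either $\Tr(N)=0$ and $\Tr(M)=0$, or $\Tr(N)^{q_0-1}=1$ and $\Tr(M)^{q_0-1}=\mu^{q_0-1}=\phi$.

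For the ``if'' part of (i): if $(M^A)^{(q_0)}=\phi M^A$ for some $A\in\PGL(2,q)$ and $\phi\in\GF(q)^*$, then Lemma~\ref{Lhq1} places $M^A$ in the standard copy of $\PGL(2,q_0)$, so $M=A^{-1}(M^A)A$ lies in a conjugate of $\PGL(2,q_0)$, a subgroup of the required kind; the side conditions on $\phi$ are not needed for this direction. Part (ii) then follows at once: taking $A,\phi$ as produced by (i), if $\Tr(M)\neq0$ we have $\phi=\Tr(M)^{q_0-1}$, whence $\Tr(M)^{2(q_0-1)}=\phi^2=\det(M)^{q_0-1}$; otherwise $\Tr(M)=0$. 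I do not anticipate a real obstacle here: the structural input (conjugacy of the subfield subgroups, Lemma~\ref{Lhq1}) is already in hand, and the only point requiring care is the $\GL(2,q)$-versus-$\PGL(2,q)$ bookkeeping --- tracking the scalar $\mu$ that relates the chosen representative of $M^A$ to an honest matrix over $\bFqn$, and remembering that $\det$ scales with $\mu^2$ while $\Tr$ scales with $\mu$.
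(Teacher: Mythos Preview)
Your proof is correct and follows essentially the same route as the paper: conjugate into the standard subfield copy (via Theorem~\ref{Tpgl-sub}(6),(7)), invoke Lemma~\ref{Lhq1}, and then read off the relations on $\phi$ from the conjugation-invariance of $\det$ and $\Tr$. You are in fact somewhat more careful than the paper: you track the scalar $\mu$ relating the $\GL(2,q)$-representative of $M^A$ to a genuine $\bFqn$-matrix (the paper computes $\det(M)^{q_0}=\phi^2\det(M)$ and $\Tr(M)^{q_0}=\phi\Tr(M)$ directly from the identity $(M^A)^{(q_0)}=\phi M^A$ without making this explicit), and you also supply the easy ``if'' direction of~(i), which the paper's proof omits.
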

\bpf
If $M$ is contained in some subgroup of
$\PGL(2,q)$ isomorphic to $\PSL(2,q_0)$ or $\PGL(2,q_0)$, then
there is a matrix $A\in\PGL(2,q)$ such that $M^A$ is contained in $\PSL(2,q_0)$
or $\PGL(2,q_0)$, that is, according to Lemma~\ref{Lhq1},
\beql{EFixA} (M^A)^{(q_0)} = \phi M^A, \eeql 
for some $A\in\PGL(2,q)$ and some $\phi\in\GF(q)^*$.
Now $\det(M^A) = \det(A)$ and $\Tr(X^A) = \Tr(X)$, so
from (\ref{EFixA}), we conclude that
\[ \det(M)^{q_0} = \phi^2 \det(M), \qquad \Tr(M)^{q_0} = \phi \Tr(M), \]
hence $\phi^2=\det(M)^{q_0-1}$ and either $\Tr(M)=0$ or $\phi=\Tr(M)^{q_0-1}$.
\epf

Now we apply this result to our matrices $\Lambda$ and $\Gamma$.
The result is as follows.
\begin{teor}\label{TXi-sub}
In the nonstandard case, there exists a prime power $q_0$ such that $d=q_0+1>3$,
$q=q_0^t$ with $t$ odd, and the subgroup 
$\Xi=\langle \Lambda, \Gamma\rangle$ of $\PGL(2,q)$ generated by
$\Lambda$ and $\Gamma$ as in Theorem~\ref{Tgr2} 
is equal to either $PSL(2,q_0)$ or $\PGL(2,q_0)$.
Moreover, we have 
$\gl, \nu, \tgw \in \bFqn$, 
and $\bFqn$ is the smallest subfield of $\GF(q)$
containing $\gl$.
\end{teor}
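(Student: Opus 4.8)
The plan is to combine Theorem~\ref{TXi} with the trace/determinant criterion of Lemma~\ref{Lhq2} applied separately to the two generators $\Lambda$ and $\Gamma$, and then to pin down $q_0$ as the field generated by the relevant invariants. By Theorem~\ref{TXi}, in the nonstandard case $\Xi$ is already known to be isomorphic to $\PSL(2,q_0)$ or $\PGL(2,q_0)$ for some $q_0=p^s$ with $q=q_0^t$, $t$ odd, and $d=q_0+1>3$. Since $\Xi$ is a \emph{subgroup} of $\PGL(2,q)$ isomorphic to such a group, after conjugating by a suitable $A\in\PGL(2,q)$ we may assume $\Xi^A\subseteq\PGL(2,q_0)$; the issue is to show that no conjugation is needed, i.e.\ that the particular matrices $\Lambda,\Gamma$ in~(\ref{ELG}) already have entries (up to scalar) in $\bFqn$, which is equivalent to $\gl,\nu,\tgw\in\bFqn$.

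First I would apply Lemma~\ref{Lhq2}(ii) to $\Gamma$. From~(\ref{ELG}), $\det(\Gamma)=\nu$ and $\Tr(\Gamma)=1+\nu$. In the nonstandard case $\Gamma\neq I$, and if $\nu=1$ then (since we are not in the standard case $L(x)=x$) we would have $\tgw\neq0$ and $\Gamma$ of order $p$, contradicting that $\Xi\cong\PSL(2,q_0)$ or $\PGL(2,q_0)$ contains no element of order $p$ acting with an orbit structure forcing... — more cleanly, one checks directly which of the conjugacy classes of $\PSL/\PGL(2,q_0)$ the element $\Gamma$ can lie in. The key point: Lemma~\ref{Lhq2}(ii) gives either $\Tr(\Gamma)=0$ or $\Tr(\Gamma)^{2(q_0-1)}=\det(\Gamma)^{q_0-1}$, i.e.\ $(1+\nu)^{2(q_0-1)}=\nu^{q_0-1}$, and in either case one deduces $\nu\in\bFqn^*$ (the case $\Tr(\Gamma)=0$ gives $\nu=-1\in\bFqn$; otherwise raise to suitable powers using $\gcd(q_0-1,\cdot)$). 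Once $\nu\in\bFqn$, the explicit form of $\Gamma^A$ lying in $\PGL(2,q_0)$ forces $\tgw\in\bFqn$ as well: writing out Lemma~\ref{Lhq2}(i) for $\Gamma$ and using $\nu\in\bFqn$, the equation $(\Gamma^A)^{(q_0)}=\phi\,\Gamma^A$ together with the upper-triangular shape (and the fact that $\Gamma$ fixes $\infty$, which lies in the orbit $\cO$ of size $q_0+1$) pins down $\tgw^{q_0}=\tgw$.

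Next I would treat $\Lambda$. From Theorem~\ref{Tgr2}(i), $\det(\Lambda)=-\gl$ and $\Tr(\Lambda)=1$, so $\Tr(\Lambda)\neq0$ and Lemma~\ref{Lhq2}(ii) yields $1=\Tr(\Lambda)^{2(q_0-1)}=\det(\Lambda)^{q_0-1}=(-\gl)^{q_0-1}$, i.e.\ $(-\gl)^{q_0-1}=1$, hence $-\gl\in\bFqn^*$ and so $\gl\in\bFqn$. With $\gl,\nu,\tgw$ all in $\bFqn$, the matrices $\Lambda$ and $\Gamma$ have all entries in $\bFqn$, so $\Xi=\langle\Lambda,\Gamma\rangle$ is literally contained in the ``obvious'' copy of $\PGL(2,q_0)$ inside $\PGL(2,q)$; combined with Theorem~\ref{TXi} (which gives the abstract isomorphism type, $d=q_0+1$, and $t$ odd), this shows $\Xi$ equals $\PSL(2,q_0)$ or $\PGL(2,q_0)$. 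Finally, that $\bFqn$ is the \emph{smallest} subfield containing $\gl$: if $\gl\in\bF_{q_1}$ for a proper subfield $\bF_{q_1}\subsetneq\bFqn$, then the orbit $\cO$ of~(\ref{Eorbit}), whose entries are rational expressions in $\gl$ over the prime field, would lie in $\bF_{q_1}^+$, so $\Xi$ would already stabilize $\bF_{q_1}^+$; but $|\cO|=d=q_0+1>q_1+1$, which is impossible since a subgroup of $\PGL(2,q_1)$ has no orbit larger than $q_1+1$... — more carefully, $\cO$ generates over the prime field a subfield in which $d$ field elements are distinct, and the recurrence $F_{k+2}=F_{k+1}+\gl F_k$ together with $d=\ord(\Lambda)$ being exactly the index of vanishing of $F_k$ forces the subfield generated by $\gl$ to have order at least $q_0$, hence equal to $q_0$.

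The main obstacle I anticipate is the bookkeeping in the second paragraph: from $(1+\nu)^{2(q_0-1)}=\nu^{q_0-1}$ (or the degenerate alternative $\Tr=0$) one must rule out spurious solutions $\nu\notin\bFqn$ and, having fixed $\nu\in\bFqn$, genuinely extract $\tgw\in\bFqn$ rather than merely $\tgw$ in some intermediate field — this requires using the specific upper-triangular conjugacy data of $\Gamma$ and the fact that the fixed point $\infty$ of $\Gamma$ lies in the $\bFqn$-rational orbit $\cO$, not just the abstract isomorphism type. The $\Lambda$-computation, by contrast, is immediate once Lemma~\ref{Lhq2}(ii) is in hand, since $\Tr(\Lambda)=1$ makes the alternative $\Tr=0$ vacuous.
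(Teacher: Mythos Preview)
Your treatment of $\Lambda$ via Lemma~\ref{Lhq2}(ii) is exactly right and is precisely what the paper does: $\Tr(\Lambda)=1\neq0$ forces $(-\gl)^{q_0-1}=1$, so $\gl\in\bFqn$. Your minimality argument is also essentially correct (and, in fact, the paper's own proof omits this step): once $\gl\in\bF_{q_1}$ the entries of $\Lambda$ lie in $\bF_{q_1}$, hence $\cO=\langle\Lambda\rangle(\infty)\subseteq\bF_{q_1}^+$, so $d=|\cO|\le q_1+1$ forces $q_1\ge q_0$.

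The gap is in your handling of $\Gamma$. Lemma~\ref{Lhq2}(ii) applied to $\Gamma$ only yields $(1+\nu)^{2(q_0-1)}=\nu^{q_0-1}$, i.e.\ $\nu/(1+\nu)^2\in\bFqn$, which places $\nu$ in $\GF(q_0^2)$, not in $\bFqn$; and since $\tgw$ appears in neither the trace nor the determinant of $\Gamma$, Lemma~\ref{Lhq2} says nothing at all about $\tgw$. You sense this yourself and start reaching for the orbit $\cO$ as extra input, but the argument as written does not close.

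The paper sidesteps the whole difficulty by reversing your order: do $\Lambda$ \emph{first}. Once $\gl\in\bFqn$, the matrix $\Lambda$ has all entries in $\bFqn$, so its orbit $\cO=\langle\Lambda\rangle(\infty)$ is contained in $\bFqn^+$; since $|\cO|=d=q_0+1=|\bFqn^+|$, in fact $\cO=\bFqn^+$. Now $\Gamma$ preserves $\cO$, so $\Gamma(\bFqn^+)=\bFqn^+$, which by Lemma~\ref{Lhq1} (or just by reading off $\Gamma(0)=\tgw/\nu$ and $\Gamma(1)-\Gamma(0)=1/\nu$) gives $\nu,\tgw\in\bFqn$ in one stroke. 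No separate application of Lemma~\ref{Lhq2} to $\Gamma$ is needed, and the ``bookkeeping obstacle'' you anticipate simply disappears.
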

\bpf Acccording to Theorem~\ref{TXi}, in the nonstandard case we have $q=q_0^t$
with $t$ odd, $d=q_0+1$, and $\Xi$ conjugate in $\PGL(2,q)$ to either 
$\PSL(2,q_0)$ or $\PGL(2,q_0)$.
Now first, since $\det(\Lambda)=-\lambda$ and $\Tr(\Lambda)=1$, we see from
Lemma~\ref{Lhq2} that 
$\lambda$ must be contained in $\bFqn$.
Next, since the orbit $\langle\Lambda\rangle(\infty)$ of $\Lambda$ 
containing $\infty$
has size $d=q_0+1$, it must be equal to $\bFqn^+$; since it is fixed by $\Xi$, 
we must now have $\Gamma(\bFqn^+)=\bFqn^+$.
This immediately implies that both $\nu$ and $\tgw$ must be contained 
in~$\bFqn$.
\epf
Wev will now use this result to show the following.

\begin{teor}\label{Tlift-ext}
A nonstandard elements of degree two over a field $\GF(q)$ with $q$-order
$d$ is either of type II, with $d=2$ and of the form as in 
Example~\ref{E2}, so has $n=2e$
with both $q$ and $(q-1)/e$ odd, or is of type I and has $d\geq4$
of the form $d=q_0+1$, for some $q_0$ such that $q=q_0^t$ with $t$ odd, and can
be obtained from a nonstandard element of degree two over $\bFqn$ with
$q_0$-order $q_0+1$ by lifting and extension as in Theorems
\ref{Tlift}~and~\ref{Text}. 
\end{teor}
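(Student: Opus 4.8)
The proof is organised around the value of the $q$-order $d=\ord_q(\xi)$. When $d=2$, note that $\xi$ is a zero of $x^{2}-\xi^{2}$, so since $\xi$ has degree $2$ over $\GF(q)$ its minimal polynomial must be $f(x)=x^{2}-\eta$ with $\eta=\xi^{2}\in\GF(q)^{*}$; thus $\xi$ is one of the elements of Example~\ref{E2}, and the equivalence stated there forces $n=\ord(\xi)=2e$ with $n>4$ and $q,(q-1)/e$ odd, which is the first alternative. Since $d=3$ is impossible by Lemma~\ref{Lnod3}, from now on $d\geq4$, and Theorem~\ref{TXi-sub} provides a prime power $q_{0}$ with $d=q_{0}+1$, $q=q_{0}^{t}$ for an odd $t$, and $\gl,\nu,\tgw\in\bFqn$.

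\emph{Producing $\phi$ over $\bFqn$.} Set $\txi=\xi/\gs$ with $\gs=\gs_{1}\neq0$; then $\txi$ is a zero of $\tf(x)=x^{2}-x-\gl\in\bFqn[x]$, and since $\txi\notin\GF(q)\supseteq\bFqn$ this $\tf$ is irreducible over $\bFqn$, so $\txi$ has degree $2$ over $\bFqn$ and $\txi\in\bF_{q_{0}^{2}}\subseteq\bF_{q^{2}}$. Rather than take $\phi=\txi$ itself (there is no reason for $\txi$ to lie in $\langle\xi\rangle$, which the extension theorem needs), I would let $\phi$ generate the cyclic group $\langle\xi\rangle\cap\bF_{q_{0}^{2}}^{*}$, so that $\langle\phi\rangle\subseteq\langle\xi\rangle$ automatically. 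This group has order $de_{0}$ with $e_{0}=(e,q_{0}-1)$, hence $\ord(\phi)\geq d=q_{0}+1>q_{0}-1$, so $\phi\notin\bFqn$ and $\phi$ has degree $2$ over $\bFqn$. For the $q_{0}$-order, Theorem~\ref{Tqord}(ii) gives $\ord_{q_{0}}(\phi)=de_{0}/(de_{0},q_{0}-1)$; using the relation $(d,(q-1)/e)=1$ from that theorem together with the fact that $t$ odd forces $v_{2}(q-1)=v_{2}(q_{0}-1)$ — so that any prime dividing both $q_{0}+1$ and $(q_{0}-1)/e_{0}$ would divide $2$ while being odd — one gets $(de_{0},q_{0}-1)=e_{0}$ and hence $\ord_{q_{0}}(\phi)=q_{0}+1$.

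\emph{$\phi$ is nonstandard over $\bFqn$.} Since $\xi$ is nonstandard over $\GF(q)$, choose (Theorem~\ref{Tqpol}) a nonstandard $q$-polynomial $L$ over $\bF_{q^{2}}$ with $L(1)=1$, $L(\langle\xi\rangle)=\langle\xi\rangle$ and $L(\xi)=\gw+\nu\xi$; then $L(\txi)=\tgw+\nu\txi$ with $\tgw=\gw/\gs\in\bFqn$. Since $L$ is $\GF(q)$-linear, hence $\bFqn$-linear, and sends the $\bFqn$-basis $1,\txi$ of $\bF_{q_{0}^{2}}$ into $\bF_{q_{0}^{2}}$, it restricts to a $\bFqn$-linear bijection of $\bF_{q_{0}^{2}}$, i.e.\ a $q_{0}$-polynomial $L'$ with $L'(1)=1$, $L'(\txi)=\tgw+\nu\txi$. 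As $L$ fixes $\langle\xi\rangle$ and $\bF_{q_{0}^{2}}^{*}$ setwise, it fixes $\langle\phi\rangle=\langle\xi\rangle\cap\bF_{q_{0}^{2}}^{*}$, so $L'(\langle\phi\rangle)=\langle\phi\rangle$. Finally $L'$ has $q_{0}$-order $2$ and is nonstandard: in the nonstandard case $\tgw\neq0$ (else $L_{1}\xi^{q}=(\nu-L_{0})\xi$ gives $\xi^{q-1}\in\GF(q)$, impossible because $q_{0}+1$ does not divide $q_{0}^{t}-1$ for odd $t$), so $L'$ is not $L'_{0}x$; and $L'(x)=x^{q_{0}}$ would force $(\tgw,\nu)=(1,-1)$, the excluded pattern $L(x)=x^{q}$ of Theorem~\ref{Tgr2}(iii). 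By Theorem~\ref{Tqpol}, $\phi$ is nonstandard of degree $2$ over $\bFqn$ with $q_{0}$-order $q_{0}+1$.

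\emph{Conclusion.} Because $t$ is odd, $(2,t)=1$, so Theorem~\ref{Tlift} lifts $\phi$ to a nonstandard element of degree $2$ over $\GF(q)$ of $q$-order $q_{0}+1=d$. Also $\langle\phi\rangle\subseteq\langle\xi\rangle\subseteq\GF(q)^{*}\langle\phi\rangle$: the first inclusion holds by construction, and the second because both are cyclic subgroups of $\bF_{q^{2}}^{*}$ with $|\langle\xi\rangle|=de$ dividing $|\GF(q)^{*}\langle\phi\rangle|=(q-1)d$ (as $e\mid q-1$). Hence $\xi\in\GF(q)^{*}\langle\phi\rangle$ and Theorem~\ref{Text} realises $\xi$ as an extension of $\phi$; so $\xi$ is obtained from the nonstandard degree-$2$ element $\phi$ over $\bFqn$ of $q_{0}$-order $q_{0}+1$ by lifting and extension, the second alternative. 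The step I anticipate as the real obstacle is the middle one — guaranteeing that the restriction $L'$ remains nonstandard (and does not collapse to $x$ or $x^{q_{0}}$) and that $\phi$ genuinely attains $q_{0}$-order $q_{0}+1$ rather than a proper divisor — both of which hinge on $t$ being odd and on $(d,(q-1)/e)=1$.
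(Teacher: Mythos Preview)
Your argument is correct and follows the same overall route as the paper: dispose of $d=2$ via Example~\ref{E2}, invoke Theorem~\ref{TXi-sub} to obtain $d=q_0+1$ with $\gl,\nu,\tgw\in\bFqn$, take $\phi$ to generate $\langle\xi\rangle\cap\bF_{q_0^{2}}^{*}$, check that $L$ restricts to a nonstandard $q_0$-polynomial fixing $\langle\phi\rangle$ with $\ord_{q_0}(\phi)=q_0+1$, and then apply Theorems~\ref{Tlift} and~\ref{Text}. Your verification of $\langle\xi\rangle\subseteq\GF(q)^{*}\langle\phi\rangle$ by the order divisibility $de\mid d(q-1)$ inside the cyclic group $\GF(q^{2})^{*}$ is in fact tidier than the paper's computation, which instead establishes $(q_0+1,e/e_0)=1$ by first proving that $e/e_0$ is odd.
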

\bpf
Let $\xi$ be nonstandard of degree $m=2$ over $\GF(q)$, with minimal polynomial 
$f(x)=x^2-\gs x -\gs^2 \gl$ over $\GF(q)$, and let $\xi$ has
order $n=de$ and $q$-order $d=\ord_q(\xi)$. According to Theorem~\ref{Tqord},
we have $e=(n,q-1)|q-1$ and $(d,(q-1)/e)=1$.

Now $d=2$ if and only if $\gs=0$; in that case $\xi$ is of type II, so as in
Example~\ref{E2}.
So in addition we will assume that $d>2$ and $\gs\neq0$. Write $\txi=\xi/\gs$. 
Then $\txi$ has minimal polynomial $\tf=x^2-x-\gl$. Let $L(x)$ be a nonstandard
$q$-polynomial of $q$-degree two over $\GF(q^2)$ that fixes $\Gxi$, with
$L(1)=1$ and $L(\xi)=\gw+\nu\xi$. Write $\tgw=\gw/\gs$. 
Then $L(\txi)=\tgw+nu\txi$.

According to Theorem~\ref{TXi-sub}, we now have that $d=q_0+1\geq4$, 
where $q=q_0^t$ with $t$ odd, and $\gl,\tgw,\nu\in\bFqn$, with
$(\nu,\tgw)\neq(1,0), (-1,1)$.
We claim that $L$ is a $\bFqn$-linear map of $q_0$-degree two on $\bFqqn$.
This can be shown as in the proof of the ``lifting'' theorem, but can also
shown directly, as follows. Since $\txi$ has minimal polynomial
$\tf(x)=x^2-x-\gl$ with $\gl\in\bFqn$, we have that $1,\txi$ is a basis for
$\bFqqn$ over $\bFqn$; moreover, since $t$ is odd, we have
$\txi^q=\txi^{q_0^t}=\txi^{q_0}$. Hence $L$ is $q_0$-linear over $\bFqqn$.

Now $L$ is a bijection on $\Gxi$ and maps $\bFqqn$ into $\bFqqn$; we conclude
that $L$ is also a bijection on $\Gxi\cap\bFqqn=\Gphi$, where
$\phi=\xi^{\gd_0}$ with $\gd_0$ the $q_0^2$-order of $\xi$.
Hence $\phi$ is nonstandard of degree 2, both over $\GF(q)$ and over $\bFqn$.

For later use, we want to show that $(q-1)/e$ must be odd.
Indeed, we have $n=de$ with $d=q_0+1$ and $e|q-1$. Now, as easily verified,
\[(q_0+1,q-1)=(q_0+1,q_0^t-1)=
\left\{
\begin{array}{cc}
        1, & \mbox{if $q_0$ is even};\\
        2, & \mbox{if $q_0$ is odd},
\end{array}
\right.
\]
hence
\[q_0+1=d=n/(n,q-1)=(q_0+1)e/((q_0+1)e,q-1) = (q_0+1)/(q_0+1,(q-1)/e)\]
holds precisely when $(q-1)/e$ is odd.

Next, by Theorem~\ref{Tqord}, we have that the $q_0^2$-order $\gd_0$ of $\xi$
is given by
\[\gd_0=n/(n,q_0^2-1)=e/(e,q_0-1)=e/e_0,\]
where $e_0=(e,q_0-1)$. Note that $\phi=\xi^{\gd_0}$ has order 
$n_0=n/\gd_0=(q_0+1)e_0$. We claim that the $q_0$-order $d_0$ of $\phi$ is
equal to $d$. Indeed, 
\[d_0=n_0/(n_0,q_0-1)=(q_0+1)e_0/((q_0+1)e_0,
q_0-1)=(q_0+1)/(q_0+1,(q_0-1)/e_0,\]
so we are done if $(q_0-1)/e_0=(q_0-1)/(e,q_0-1)$ is odd, which follows
immediately from the fact that $q_0-1|q-1$ (so $e$ contains every
factor 2 contained in $q-1$, so certainly all factors 2 contained in $q_0-1$).

Finally, we want to show that $\xi$ can be obtained from $\phi$ by lifting and
extension. Now lifting shows that, as remarked earlier, 
$\phi$ is also nonstandard of degree 2 over $\GF(q)$. We know by definition of
$\phi$ that $\Gphi\subseteq\Gxi$, hence according to Theorem~\ref{Text}
we only have to show that $\xi\in\GF(q)^*\Gphi$. To this end, write
$\eta=\xi^{q_0+1}$. Then $\eta\in\GF(q)^*$, so it is sufficient to show that
$\xi\in\langle\eta\rangle\Gphi$. Since $\eta=\xi^{q_0+1}$ and
$\phi=\xi^{e/e_0}$, this subgroup contains all powers $\xi^k$ of
$\xi$ where $k$ is of the form
\[k=i(q_0+1)+je/e_0.\]
So we are done if we can show that $(q_0+1,e/e_0)=1$; since $e|q-1$ and
$q=q_0^t$ with $t$ odd, we have $(q_0+1,q-1)|2$ and we have to show that
$e/e_0$ is odd. This is evident in the case where $q$ is even, so we also
assume that $q$ is odd.
Write
\[q-1=2^r s, \qquad q_0-1=2^{r_0}s_0.\]
Now
\[q-1=q_0^t-1=(q_0-1)(1+q_0+\cdots + q_0^{t-1}),\]
and $q_0\equiv 1\bmod2$, hence
\[1+q_0+\cdots + q_0^{t-1}\equiv t\equiv 1\bmod 2;\]
we conclude that $r_0=r$. Moreover, $(q-1)/e$ is odd, so $e=2^rf$ with $f$ odd;
therefore $e_0=(e,q-1)$ is also divisible by $2^r$ and hence $e/e_0$ is indeed
odd.
\epf

Now in Theorem~2.4 of \cite{bn3}, it is shown that a nonstandard finite field
element of degree two over $\GF(q)$ with $q$-order $q+1$ is necessarily
primitive, that is, has order $q^2-1$.

\begin{cor}\label{Cmain}
A nonstandard element of degree two over a field $\GF(q)$ either is of type I,
with $q$-order 2 as in Example~\ref{E2}, or is of type II, with $q$-order of
the form $q_0+1\geq4$ with $q=q_0^t$ for an odd integer $t$, as 
in~Example~\ref{E3}.
\end{cor}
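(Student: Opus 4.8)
The plan is to read the corollary off Theorem~\ref{Tlift-ext} together with the quoted Theorem~2.4 of~\cite{bn3}, with only a little arithmetic bookkeeping in between. Let $\xi$ be nonstandard of degree two over $\GF(q)$, with $q$-order $d=\ord_q(\xi)$. Theorem~\ref{Tlift-ext} separates two cases. If $d=2$ then $\xi$ is exactly of the shape described in Example~\ref{E2}: its minimal polynomial is $x^2-\eta$ for some $\eta\in\GF(q)^*$, and $\xi$ has order $n=2e$ (necessarily $>4$, as $\xi$ is nonstandard) with both $q$ and $(q-1)/e$ odd. This is the first alternative of the corollary (type~I), and nothing further is needed.

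In the remaining (``nonstandard'') case, Theorem~\ref{Tlift-ext} supplies a prime power $q_0$ with $d=q_0+1\ge4$ and $q=q_0^t$ for an odd integer $t$, together with an element $\phi$ that is nonstandard of degree two over $\GF(q_0)$ and has $q_0$-order $q_0+1$, and from which $\xi$ is obtained by lifting and extension (Theorems~\ref{Tlift} and~\ref{Text}). At this point I would invoke Theorem~2.4 of~\cite{bn3}, quoted just above the statement: a nonstandard element of degree two over $\GF(q_0)$ whose $q_0$-order equals $q_0+1$ must be primitive in $\GF(q_0^2)$, so $\phi$ has order $q_0^2-1$.

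It then remains only to recognise $\xi$ as one of the type~II examples of Example~\ref{E3}. Keeping the notation of the proof of Theorem~\ref{Tlift-ext}, write $n=de=(q_0+1)e$ for the order of $\xi$, and recall that $\phi=\xi^{\gd_0}$ has order $n_0=(q_0+1)e_0$ with $e_0=(e,q_0-1)$. Primitivity of $\phi$ forces $(q_0+1)e_0=q_0^2-1=(q_0+1)(q_0-1)$, hence $e_0=q_0-1$, i.e.\ $q_0-1\mid e$; moreover $d=q_0+1\ge4$ gives $q_0\ge3$, so $q_0^2>4$. Thus $\xi$ has degree two over $\GF(q)$, $q$-order $d=q_0+1=(q_0^2-1)/(q_0-1)$, and order $n=de$ with $q_0-1\mid e\mid q-1$ --- precisely the conditions under which Example~\ref{E3} exhibits $\xi$ as arising from a primitive element of degree two over $\GF(q_0)$ by lifting and extension. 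Hence $\xi$ is of type~II, which is the second alternative. Since the genuine content --- the analysis of the subgroup $\Xi\le\PGL(2,q)$ and the arithmetic of orders --- is already packaged inside Theorem~\ref{Tlift-ext} and its predecessors, the only real input here is the Brison--Nogueira primitivity theorem; the single point needing attention is the mechanical check that the side conditions of Example~\ref{E3} ($q_0-1\mid e$ and $q_0^2>4$) hold, which as above is immediate from $d=q_0+1\ge4$ and the primitivity of~$\phi$.
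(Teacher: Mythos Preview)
Your proposal is correct and follows the paper's own approach: the corollary is stated in the paper without an explicit proof, as an immediate consequence of Theorem~\ref{Tlift-ext} together with the Brison--Nogueira primitivity result quoted just before it. Your argument supplies exactly this deduction, including the small verification (that $q_0-1\mid e$ and $q_0^2>4$) needed to match the description in Example~\ref{E3}, which the paper leaves implicit.
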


\section*{Acknowledgement}
We gratefully acknowledge email conversations with Brison and Nogueira in which
they pointed out that the nonexistance of nonstandard elements of degree two of
other types follows from our results in combination with some of their 
unpublished material.

\end{document}